\theoremstyle{definition}
\newtheorem{defn}{Definition}[section]
\begin{document}

\title{Tiling rectangles with holey polyominoes}

\author{Dmitry Kamenetsky and Tristrom Cooke \\
dkamenetsky@gmail.com, tcooke@internode.on.net \\
Adelaide, Australia
}

\maketitle

\abstract{We present a new type of polyominoes that can have transparent squares (holes).
We show how these polyominoes can tile rectangles and we categorise them according to their tiling ability.
We were able to categorise all but 6 polyominoes with 5 or fewer visible squares.
}

\section{Introduction}

Polyominoes are geometric shapes made from squares.
We introduce \emph{holey} polyominoes as polyominoes that contain transparent squares, making them disjoint.
We now give formal definitions.

\begin{defn}
A \emph{polyomino} is a geometric shape formed by the union of non-overlapping squares edge to edge such that at least two
corners of the squares touch.
\end{defn}

\begin{defn}
A \emph{holey polyomino} of order $(n,k)$ is a polyomino with $n$ visible squares and $k$ transparent squares.
$k$ must be the least number of transparent squares required to connect all the visible squares.
\end{defn}

From now on we refer to holey polyominoes of order $(n,k)$ as $(n,k)$-polyominoes.
For example, Figure~\ref{fig:numHoles} shows a $(4,1)$-polyomino.
Visible squares are shown in blue, while the
transparent squares are in white. 
Although we can make all visible squares connected
via squares a and c, it is sufficient to just use square b, making $k=1$.
Figure~\ref{fig:31} shows all four possible $(3,1)$-polyominoes.
Note that regular polyominoes are a subset of holey polyominoes; a regular polyomino of order $n$ is a
$(n,0)$-polyomino.

\begin{figure}[!htpb]
\centering
\includegraphics[width=0.25\linewidth]{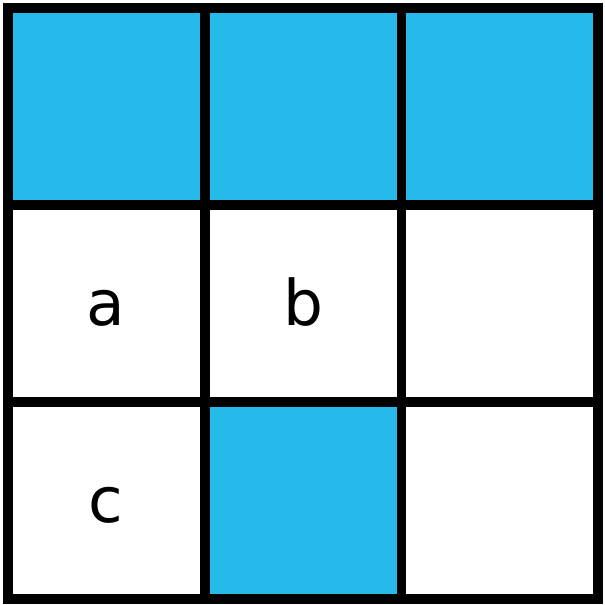}
\caption{A $(4,1)$-polyomino. We can make all visible squares (in blue) connected by making square b visible.}
\label{fig:numHoles}
\end{figure}

\begin{figure}[!htpb]
\centering
\begin{tabular}{cccc}
\includegraphics[width=0.207\linewidth]{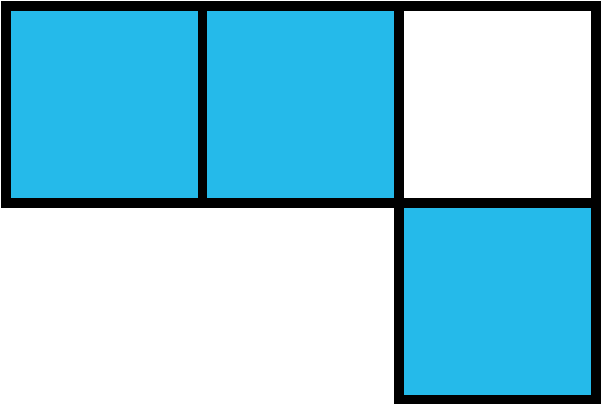} & \includegraphics[width=0.207\linewidth]{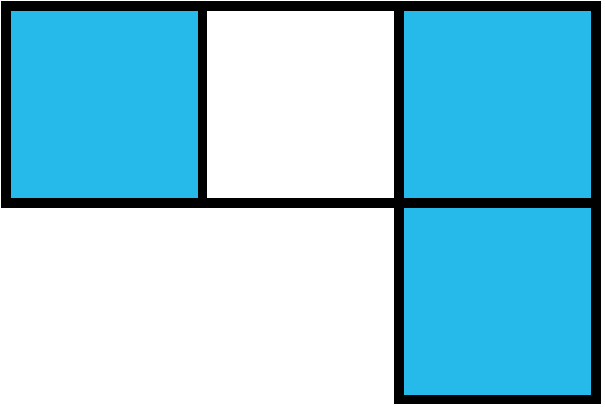} &
\includegraphics[width=0.207\linewidth]{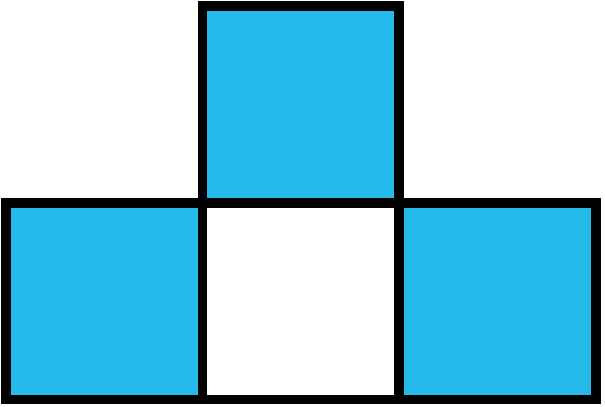} & \includegraphics[width=0.27\linewidth]{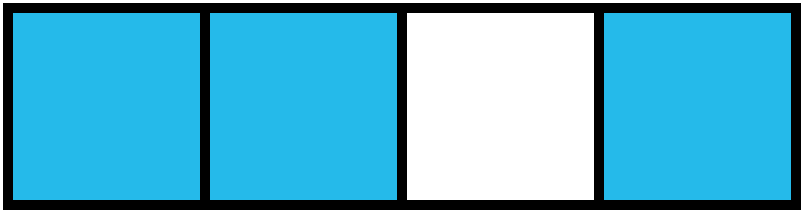} 
\end{tabular}
\caption{All $(3,1)$-polyominoes.}
\label{fig:31}
\end{figure}

We assume that polyominoes are free, meaning that they can be flipped, mirrored and rotated at will. 
Figure~\ref{fig:G} shows all 8 congruent versions of a single $(2,2)$-polyomino.

\begin{figure}[!htpb]
\centering
\includegraphics[width=0.8\linewidth]{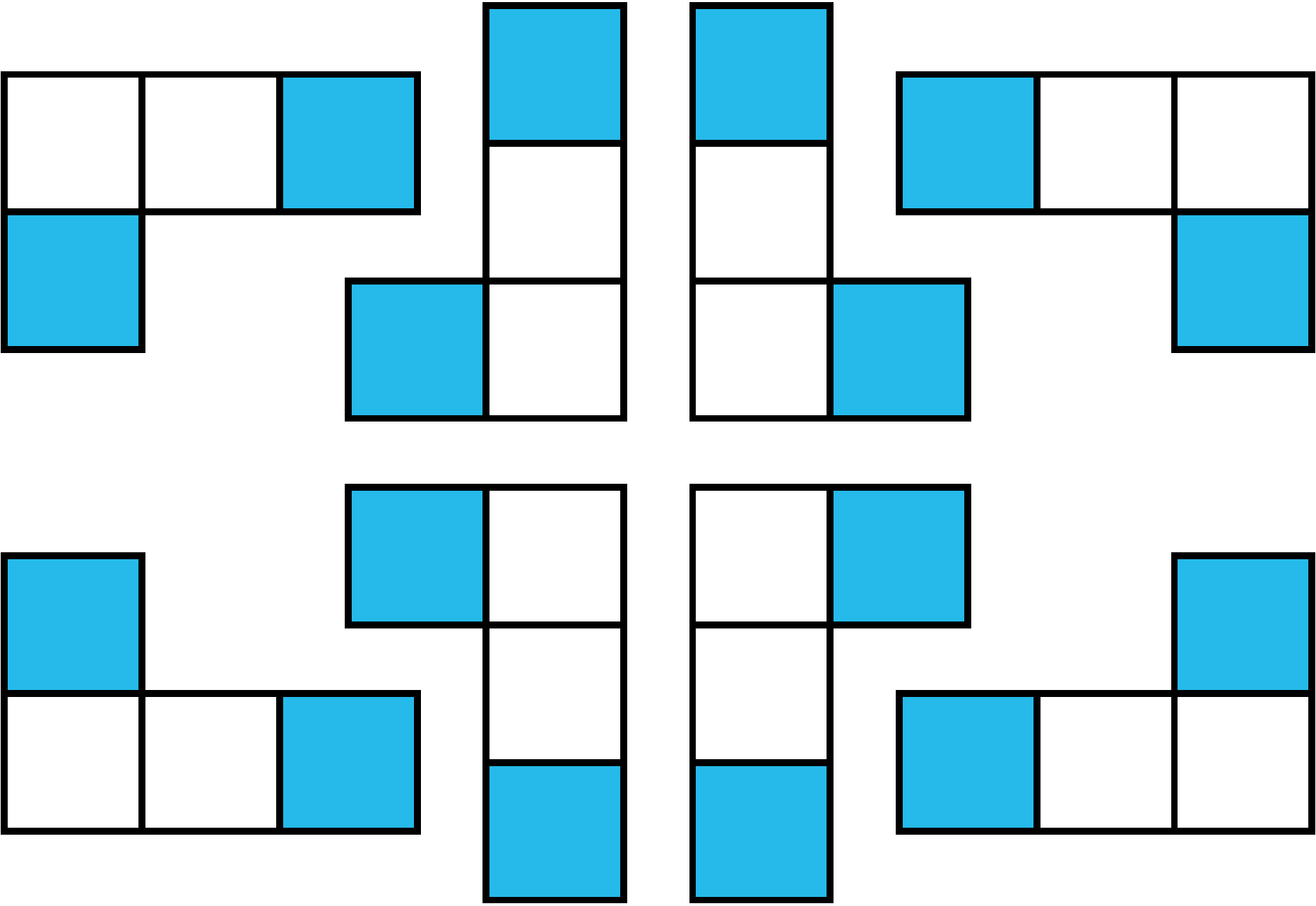}
\caption{All 8 congruent versions of a $(2,2)$-polyomino.}
\label{fig:G}
\end{figure}

\section{Related work}

%polyomino intro
The concept of polyominoes can be traced back to ancient times. The famous British puzzle expert Henry Dudeney and the Fairy Chess Review
created problems involving n-ominoes, which they represented as figures cut from checkerboards \cite{klarner65}.
The term \emph{polyominoes} was coined by Solomon Golomb in 1953 and later popularised by Martin Gardner \cite{golomb96}.
Polyominoes gained a great deal of popularity through tiling puzzles and games such as Tetris and Blokus.

%polyomino types
Numerous variations to polyominoes have been suggested over the years, such as: polyiamonds (from equilateral triangles),
polyhexes (from regular hexagons), polycubes (from cubes). All of these however, do not involve any transparent squares.
Perhaps the closest variation is polyplets: polyomino-like objects made by attaching squares joined either at sides
or corners (see Figure~\ref{fig:polyplets} top row). Note that polyplets are a subset of holey polyominoes. Another related idea is rounded
polyominoes \cite{Harsh} (see Figure~\ref{fig:polyplets} bottom row). These are polyplets with rounded corners and bridges connecting diagonally adjacent squares.
Unlike polyplets and holey polyominoes, rounded polyominoes can be made in the physical world.

%what is tiling
Polyomino tiling problems ask whether copies of a single polyomino can tile (cover) a given region, such as a plane or a rectangle.
In 1960's Golomb~\cite{golomb66} and Klarner~\cite{klarner65,klarner69} were the first to study these problems for particular polyominoes.
For the problem of rectangular tiling, results have been found for various polyominoes~\cite{Reid:results,Friedman:rectifiable}. Mark Reid
provides extensive literature on this subject on his site \cite{Reid:literature}.

%tiling with disjoint polyominoes
Tiling with disjoint polyominoes has been studied for certain classes of polyominoes.
Chvatal et al.,~\cite[problem 8]{Chvatal72} showed that any single-row $(3,k)$-polyomino can tile a $1 \times n$ rectangle.
Gordon~\cite{Gordon80} showed that any n-dimensional $(3,k)$-polyomino can tile the $\mathbb{Z}^n$ lattice. In their problem 9, 
Chvatal et al.,~\cite{Chvatal72} ask whether every $(4,k)$-polyomino tiles the plane? In 1985 this question was answered with
affirmative by Coppersmith~\cite{coppersmith85}. Friedman~\cite{Friedman:disjoint} has collected results about tiling rectangles
with single-row $(n,k)$-polyominoes for $n+k \leq 9$ and $k \geq 1$.

\begin{figure}[!htpb]
\centering
\begin{tabular}{ccccc}
\includegraphics[width=0.195\linewidth]{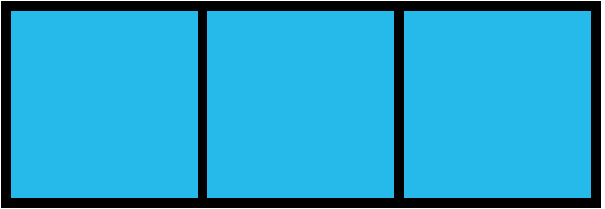} & 
\includegraphics[width=0.195\linewidth]{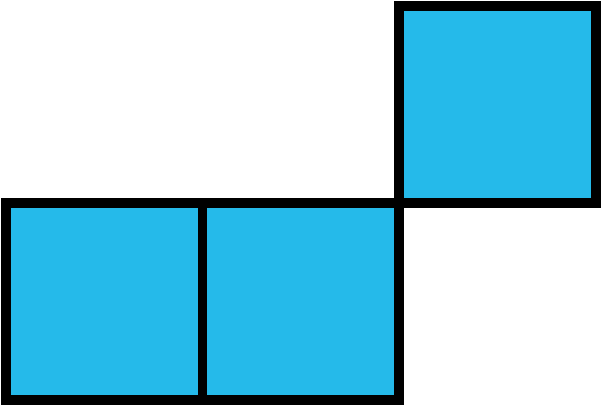} & 
\includegraphics[width=0.195\linewidth]{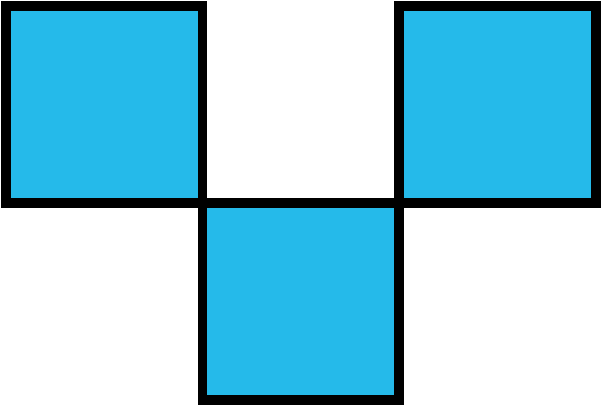} & 
\includegraphics[width=0.195\linewidth]{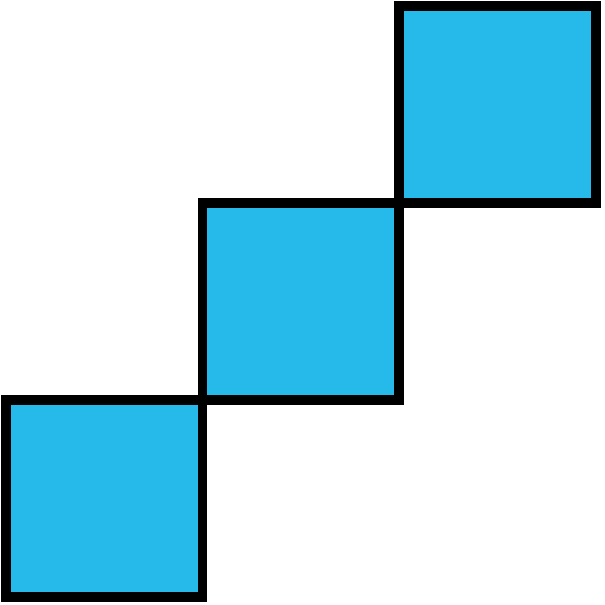} & 
\includegraphics[width=0.13\linewidth]{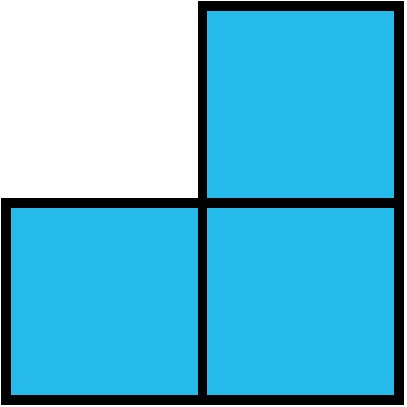}
\end{tabular}
\includegraphics[width=0.5\linewidth]{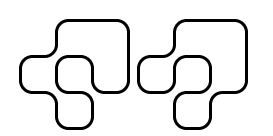}
\caption{Top row: all polyplets of order 3. Bottom row: two rounded pentominoes with the same squares, but different connections (bridges).
Image courtesy of http://www.ericharshbarger.org/pentominoes/article\_09.html}
\label{fig:polyplets}
\end{figure}

%Tiling with Similar Polyominoes, by Michael Reid
%Journal of Recreational Mathematics 31 (2002-2003), no. 1, pp. 15-24. 
% some bibliography: http://math.cos.ucf.edu//~reid/Polyomino/rectifiable_bib.html

\section{Rectangular tilings}

We now consider the problem of tiling rectangles with holey polyominoes. During tiling, a visible square may lie
on top of a transparent square (or vice-versa), but it cannot lie on top of another visible square. Naturally, transparent squares can lie
on top of other transparent squares. If possible, we are interested in finding the smallest rectangle (in area)
that can be tiled by a single $(n,k)$-polyomino.

We have investigated rectangular tilings of the following classes of polyominoes:
$(2,1), (2,2), (3,1), (3,2), (4,1)$ and $(5,1)$. We call a holey polyomino \emph{rectifiable} (or \emph{solved})
if it can tile a rectangle. We call it \emph{unrectifiable} (or \emph{impossible}) if we can show
that it cannot tile a rectangle. Otherwise, we call it \emph{unknown}.
Table~\ref{tab:summary} summarises our results.

\begin{table}[!htpb]
\centering
\begin{tabular}{|c|c|c|c|c|c|}
\hline
n & k & Solved & Impossible & Unknown & Total \\ \hline
2 & 1 & 2 & 0 & 0 & 2 \\ \hline
2 & 2 & 2 & 0 & 0 & 2 \\ \hline
3 & 1 & 4 & 0 & 0 & 4 \\ \hline
3 & 2 & 9 & 2 & 0 & 11 \\ \hline
4 & 1 & 14 & 6 & 0 & 20 \\ \hline
5 & 1 & 28 & 34 & 6 & 68 \\ \hline
\end{tabular}
\caption{Summary of results.}
\label{tab:summary}
\end{table}

\subsection{Algorithm}

We use the classical Depth First Search (DFS) algorithm to find the status of each polyomino (see Algorithm~\ref{alg:solver}).
The root node of the search tree is the empty $N \times N$ grid. A move consists of placing a single polyomino in the current
grid. The search terminates when one of the following conditions is met:
\begin{enumerate}
\item We completely fill a rectangular area with tiles. 
If this happens then we have found a solution and we can terminate (see lines 4-8).
The tiled rectangle may be smaller than $N \times N$.
\item We have visited every possible node in the tree without finding a solution.
This means that there is no solution with a rectangle
whose maximum side is $N$ or less. This however, does not exclude solutions whose maximum side is greater than $N$.
\end{enumerate}

A move consists of placing a tile in an empty grid location. We try all possible orientations and shifts
of the tile, as given by the $Rot(\cdot)$ function on line 19.
For any given node there can be a great number of possible moves that can be made. 
Trying all moves exhaustively may be too slow or even infeasible.
Hence we would like to make moves that bring us closer to search termination. 

For this reason we order our moves from most to least constrained. An empty location is considered to be more constrained if there are
less unique ways of filling it with tiles. Line 10 of the algorithm computes how constrained each empty location is. 
Line 11 computes the most constrained location $(r^*,c^*)$.

If a polyomino is rectifiable then it must be able to tile the corner of a rectangle.
Since corners are the most constrained locations in the grid, we begin our search there.
As soon as we find a location that cannot be filled by a tile, we can backtrack (see lines 13-16).

\emph{Solve} is a recursive function that takes the following parameters:
\begin{itemize}
\item \emph{grid}: a working array of placed tiles. Initially all squares are empty.
\item \emph{tile}: the tile that we are trying to place.
\item \emph{moves}: a list of moves that we have already made.
\item \emph{R}: the index of the first row that is completely empty.
\item \emph{C}: the index of the first column that is completely empty.
\end{itemize}

The algorithm begins by initialising an empty $N \times N$ \emph{grid} and an empty list of \emph{moves}.
We then call Solve(grid, tile, moves, 0, 0). If a solution exists, then this algorithm will find one,
although it may not be the smallest.

%TODO: we are not actually using moves for anything
\begin{algorithm}[!htpb]
\renewcommand{\arraystretch}{1.15}
\caption{: Algorithm for finding solutions.}
\vspace{0.5ex}
\bf function Solve(grid, tile, moves, R, C) \\
\begin{tabular}{rl}
\\
1 & $//set~of~empty~locations~within~bounds$ \\
2 & $E:=\{(r,c):~grid(r,c)~is~empty,~r<R,~c<C\}$ \\
3 & \\
4 & $//no~more~empty~locations,~so~we~found~a~R \times C~solution$ \\
5 & if $E=\emptyset$\\
6 & ~~$grid.print()$ \\
7 & ~~terminate \\
8 & end \\
9 & \\
10 & $\forall (r,c) \in E: G(r,c):=~\#~ways~to~fill~grid(r,c)$ \\
11 & $(r^*,c^*):=\argmin_{(r,c)} G(r,c)~~~~//most~constrained~location$ \\
12 & \\
13 & $//no~solution, so~backtrack$ \\
14 & if $G(r^*,c^*)=0$ \\
15 & ~~return \\
16 & end \\
17 & \\
18 & $//for~each~rotated~and~shifted~version~of~the~tile$ \\
19 & $\forall t \in Rot(tile)$ \\
20 & ~~$move=~$new $Move(t,r^*,c^*)$ \\
21 & ~~if $grid.isValidMove(move)$ \\
22 & ~~~~$grid.makeMove(move)$ \\
23 & ~~~~$moves.add(move)$ \\
24 & \\
25 & ~~~~$R':=max(R,r^*+t.height)~~~~//compute~new~bounds$\\
26 & ~~~~$C':=max(C,c^*+t.width)$\\
27 & \\
28 & ~~~~$Solve(grid,tile,moves,R',C')$ ~~~~$//recurse$\\
29 & \\
30 & ~~~~$grid.undoMove(move)$ \\
31 & ~~~~$moves.remove(move)$ \\
32 & ~~end \\
33 & end \\
\end{tabular}
\label{alg:solver}
\end{algorithm}

%%%%%%%%%%%%%%%%%%%%%%%%%%%%%%%%%%%%%%%%%%%%%%%%%%
%small trivial solutions

\pagebreak
\subsection{Rectifiable polyominoes}

For certain classes of polyominoes (such as the $(2,k)$ class) we were able to show that all polyominoes within that class are rectifiable.

\begin{theorem}
Every $(2,k)$-polyomino is rectifiable.
\end{theorem}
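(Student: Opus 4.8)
The plan is to reduce the statement to a clean geometric fact about the two visible squares and then exhibit an explicit rectangle in each of two cases. First I would observe that, up to the rotations, reflections and translations permitted for free polyominoes, the two visible squares of a $(2,k)$-polyomino sit at some offset $(a,b)$ with $a,b\ge 0$ and $a+b=k+1$, since the Manhattan distance between two cells is exactly one more than the least number of edge-adjacent cells needed to join them. Moreover, because $k=a+b-1$ is \emph{minimal}, the $k$ transparent squares form a monotone lattice path from one visible square to the other and therefore lie entirely inside the $(a+1)\times(b+1)$ bounding box of the two visible squares. This is the only property of the transparent squares I will need: by the tiling rules a transparent square may overlap anything, the sole genuine constraint being that two visible squares may not share a cell. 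Hence to produce a tiling it suffices to (i) partition the cells of some rectangle into pairs, each pair consisting of two cells at offset $\pm(a,b)$ or $\pm(a,-b)$, and (ii) verify that every placed copy keeps its bounding box inside the rectangle. The area is automatically even, as each tile contributes exactly two visible squares.

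I would then split on whether the two visible squares are collinear. If $b=0$ (the straight case, offset $(k+1,0)$), take the $1\times 2(k+1)$ strip with columns $1,\dots,2k+2$ and, for $j=1,\dots,k+1$, place one copy with its visible squares in columns $j$ and $j+k+1$. The first visible squares then occupy columns $1,\dots,k+1$ and the second occupy columns $k+2,\dots,2k+2$, so every cell is covered exactly once; each copy spans $k+2$ consecutive columns and so stays inside the strip.

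For the bent case $a,b\ge 1$ I would tile the $2a\times 2b$ rectangle with rows $0,\dots,2a-1$ and columns $0,\dots,2b-1$. Pair each top-half cell $(i,j)$ with $0\le i\le a-1$ to a bottom-half cell by the rule $(i,j)\mapsto(i+a,\,j+b)$ when $0\le j\le b-1$, and $(i,j)\mapsto(i+a,\,j-b)$ when $b\le j\le 2b-1$. Each pair uses offset $(a,\pm b)$; the top cells are each covered once as a left endpoint, while the two families of images sweep out columns $b,\dots,2b-1$ and $0,\dots,b-1$ of the bottom half respectively, so every cell is covered exactly once. Every copy has row-extent $[i,i+a]\subseteq[0,2a-1]$ and a column-extent of width $b$ lying inside $[0,2b-1]$, so its bounding box, and hence its transparent path, remains inside the rectangle. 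Placing the correspondingly rotated or reflected copy of the given polyomino on each pair yields the desired tiling.

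The conceptual content here is light; the part that needs care is step (ii), namely confirming that the fixed transparent-square configuration of the \emph{particular} polyomino never pokes outside the chosen rectangle and that the permitted overlaps (among transparent squares, and between transparent and visible squares) cause no conflict. Both follow from the bounding-box observation above, so the main work is really just writing down the two pairings and checking that each is a perfect matching. I expect no genuine obstacle, only the routine bookkeeping of verifying the boundary columns in the bent case.
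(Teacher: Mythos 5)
Your construction is essentially the paper's own proof: the paper likewise tiles a $2r\times 2c$ rectangle by pairing each cell of the top half with the bottom-half cell at offset $(r,c)$ for the first $c$ columns and at offset $(r,-c)$ (the mirrored tile) for the remaining $c$ columns, built up row-pair by row-pair. Your explicit $1\times 2(k+1)$ strip for the straight case $b=0$ merely spells out a degenerate case that the paper's single construction passes over silently.
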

\begin{proof}
Consider a large finite grid with the top-left corner being at row$=0$ and column$=0$. We can describe any 
$(2,k)$-polyomino based on the location of its two visible squares in the grid. So $(r_1,c_1,r_2,c_2)$ represents
a polyomino with its first visible square located at $(r_1,c_1)$ and second square at $(r_2,c_2)$.
We now show how any $(2,k)$-polyomino can tile a rectangle. Without loss of generality, we place the first
polyomino at $(0,0,r,c)$ as shown in Figure~\ref{fig:2N}(a). We then place $(c-1)$ polyominoes into locations
$(0,1,r,c+1), \ldots, (0,c-1,r,2c-1)$ as shown in Figure~\ref{fig:2N}(b). We now place $c$ tiles into locations
$(0,c,r,0), \ldots, (0,2c-1,r,c-1)$ as shown in Figure~\ref{fig:2N}(c). We now have 2 rows of squares of length $2c$ at
rows $0$ and $r$. Now replicate this structure and shift it one row down. After $(r-1)$ repetitions of this process
we will get a filled rectangle with $2r$ rows and $2c$ columns.
\end{proof}

\begin{figure}[!htpb]
\centering
\begin{tabular}{ccc}
\includegraphics[width=0.228\linewidth]{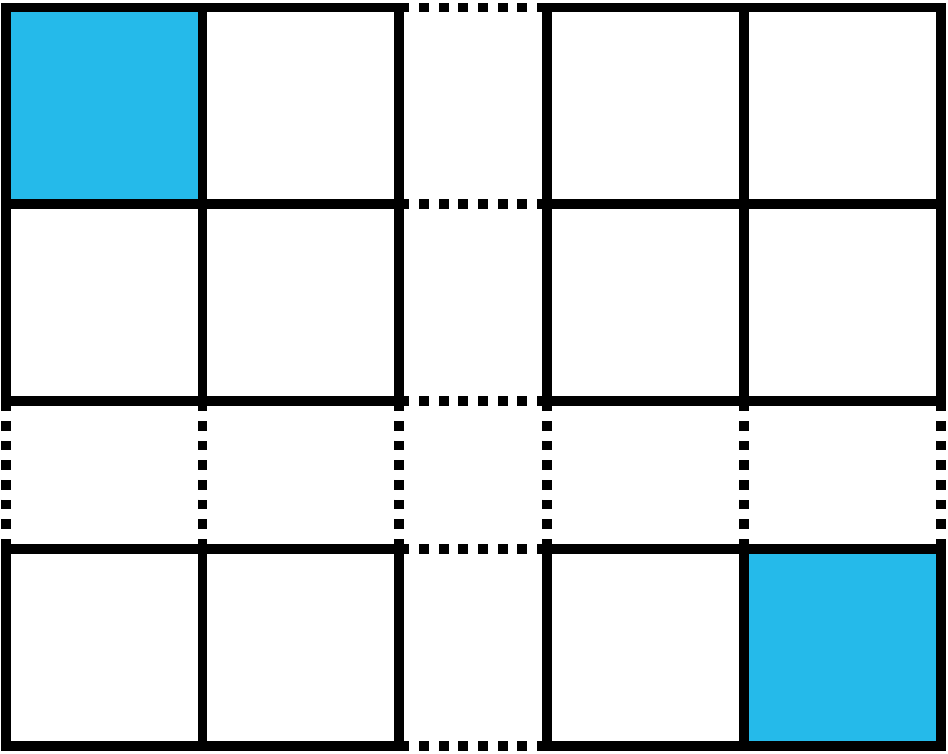} & \includegraphics[width=0.36\linewidth]{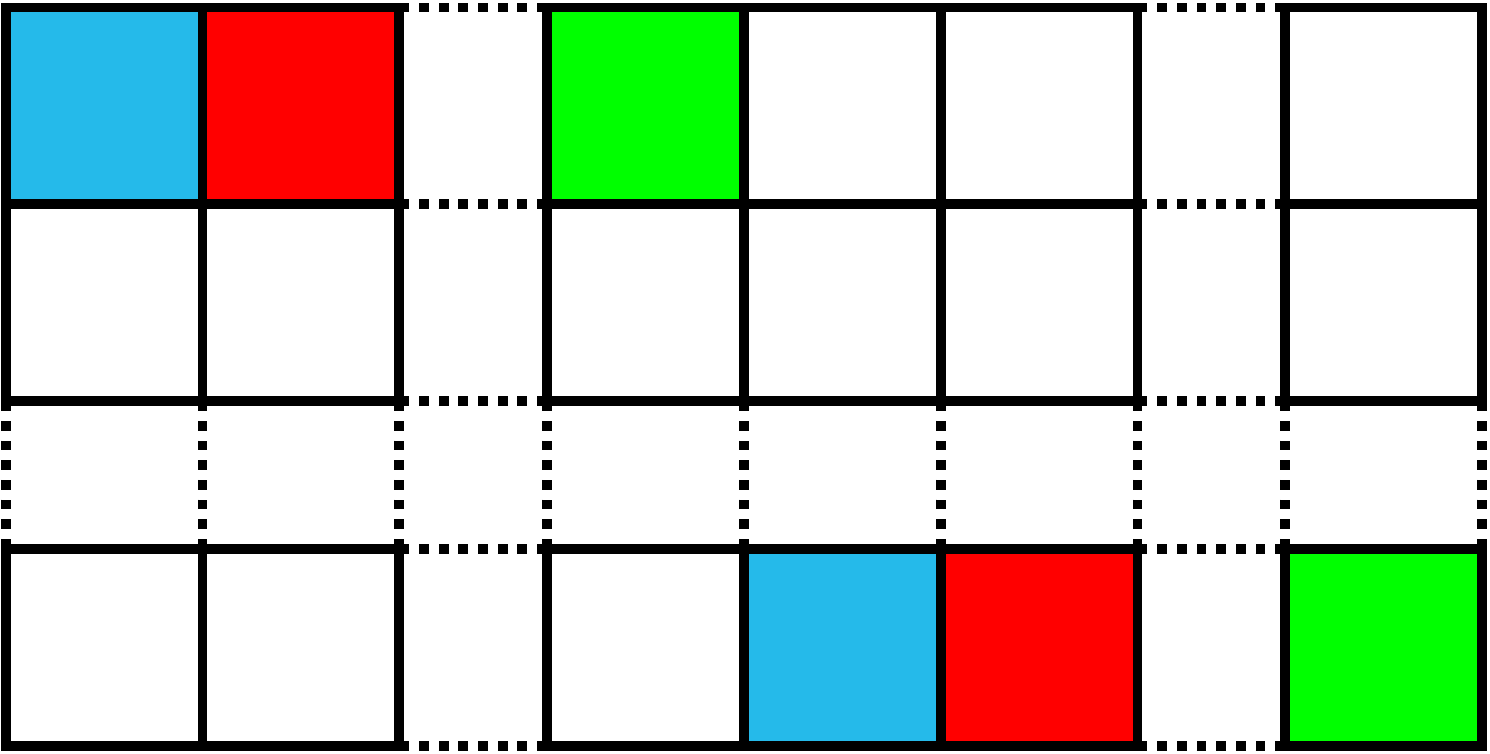} & \includegraphics[width=0.36\linewidth]{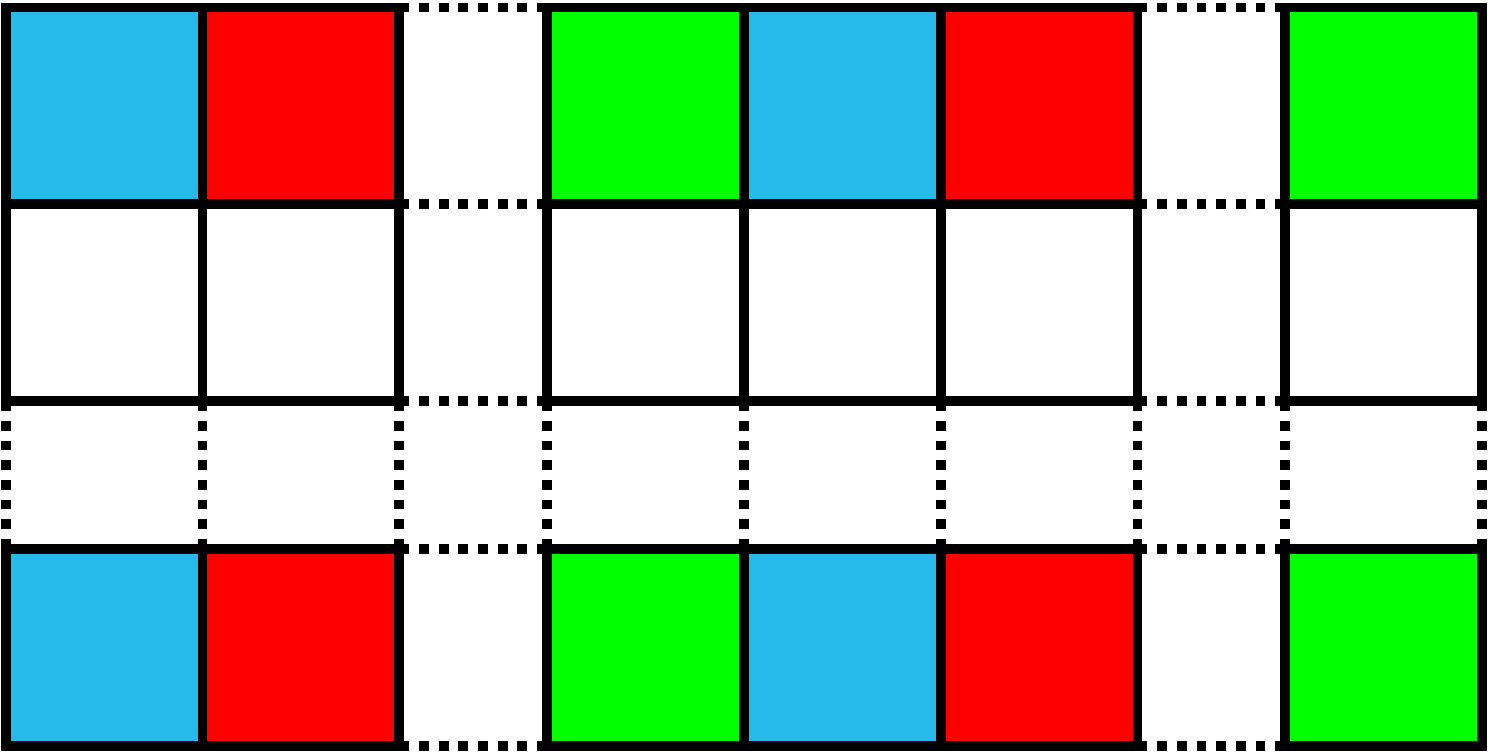} \\
(a) & (b) & (c)
\end{tabular}
\caption{Steps involved in showing that every $(2,k)$-polyomino is rectifiable.}
\label{fig:2N}
\end{figure}

Tables~\ref{tab:trivial-21}-\ref{tab:trivial-51b} show the smallest known tiling sizes.
Some of the smaller solutions are simple enough to find by hand and are optimal - 
tilings with the smallest area.
Tables~\ref{tab:cool1}-\ref{tab:cool2} show the actual tilings for
some non-trivial pieces. These solutions are not guaranteed to be optimal.

\begin{table}[!htpb]
\centering
\begin{tabular}{|c|c|c|c|}
\hline
Piece & Smallest Tiling & Piece & Smallest Tiling\\ \hline
& & & \\ 
\includegraphics[width=0.120000\linewidth]{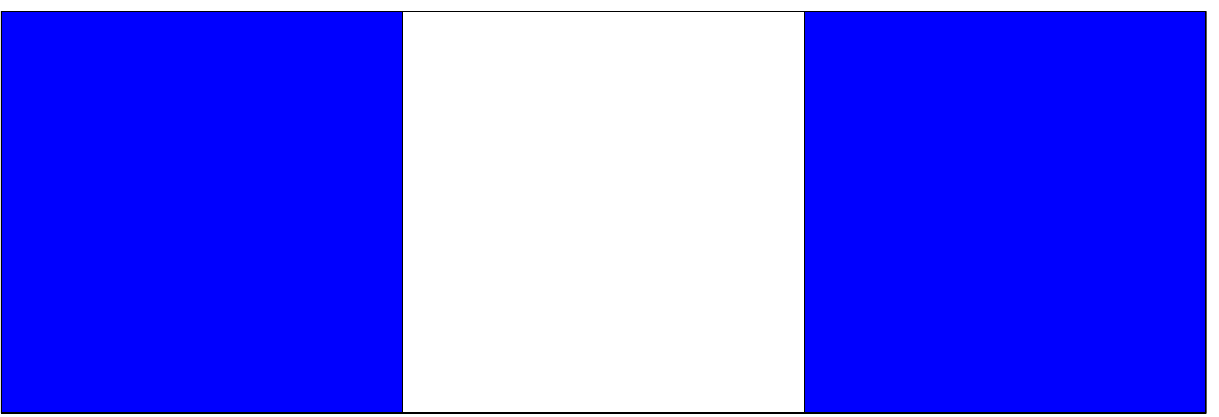} & 1x4 & \includegraphics[width=0.080000\linewidth]{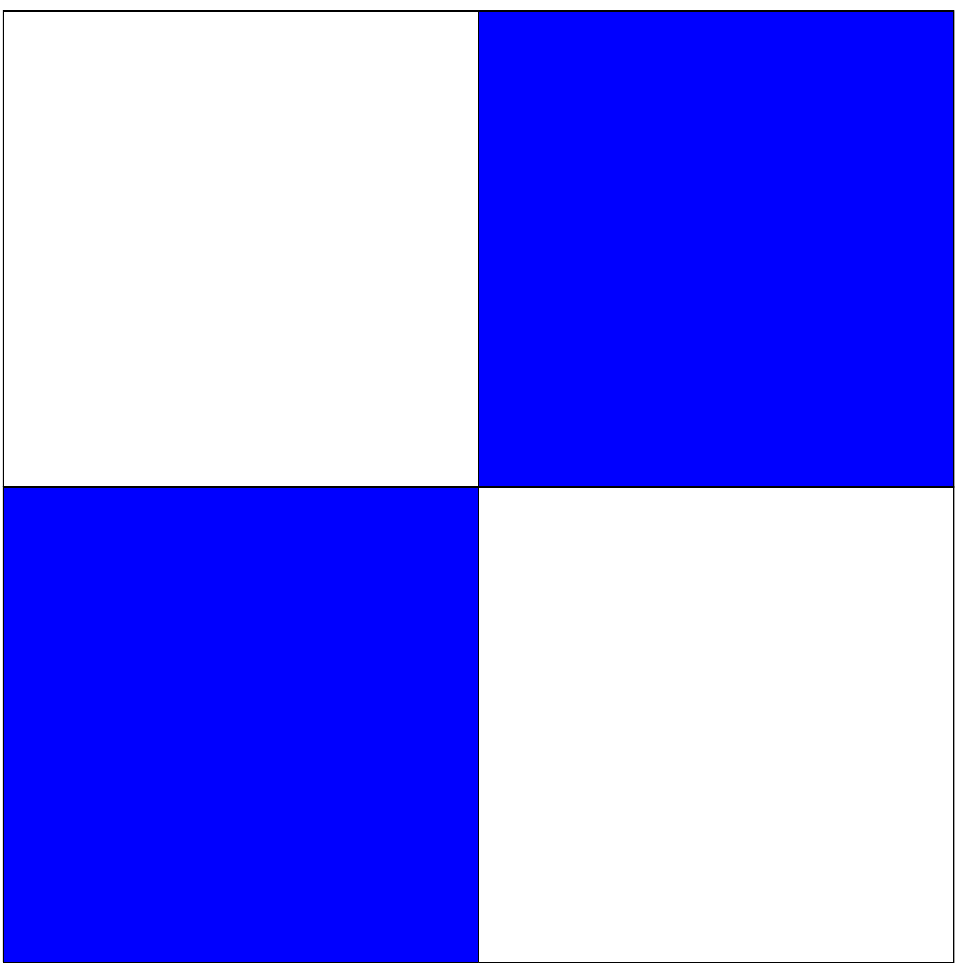} & 2x2 \\ \hline
\end{tabular}
\caption{Smallest known solutions for n=2 and k=1.}
\label{tab:trivial-21}
\end{table}

\begin{table}[!htpb]
\centering
\begin{tabular}{|c|c|c|c|}
\hline
Piece & Smallest Tiling & Piece & Smallest Tiling\\ \hline
& & & \\ 
\includegraphics[width=0.160000\linewidth]{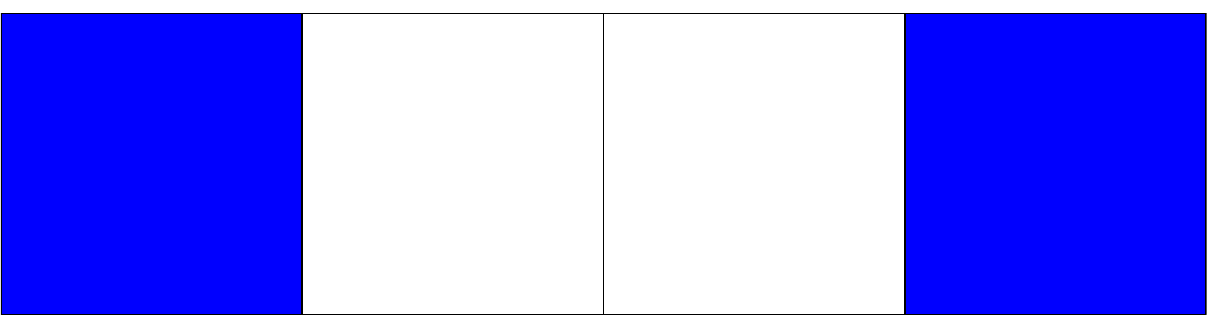} & 1x6 & \includegraphics[width=0.120000\linewidth]{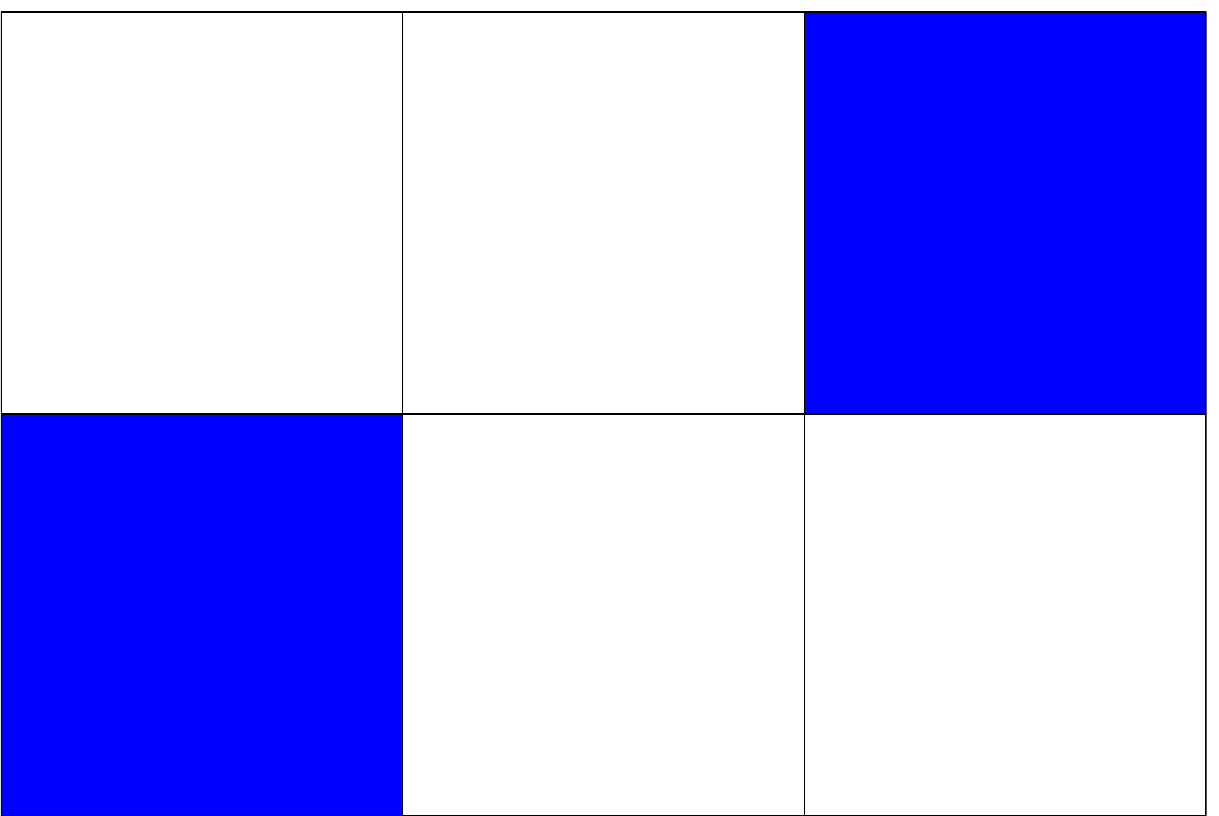} & 2x4 \\ \hline
\end{tabular}
\caption{Smallest known solutions for n=2 and k=2.}
\label{tab:trivial-22}
\end{table}

\begin{table}[!htpb]
\centering
\begin{tabular}{|c|c|c|c|}
\hline
Piece & Smallest Tiling & Piece & Smallest Tiling\\ \hline
& & & \\ 
\includegraphics[width=0.160000\linewidth]{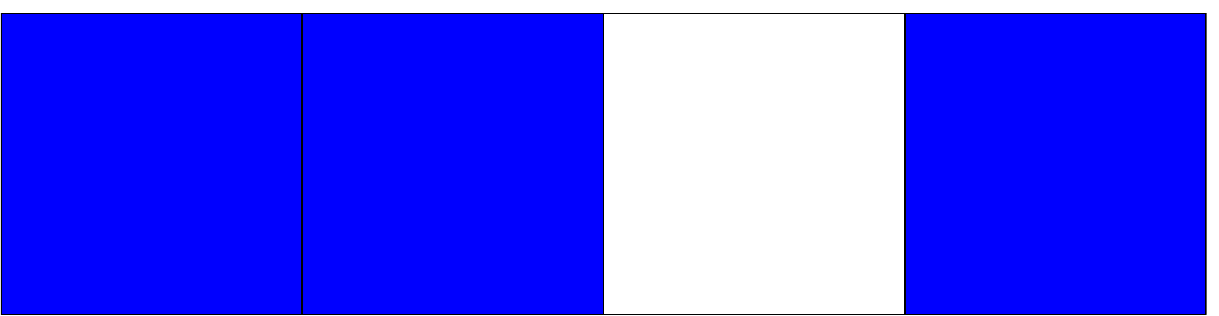} & 1x6 & \includegraphics[width=0.120000\linewidth]{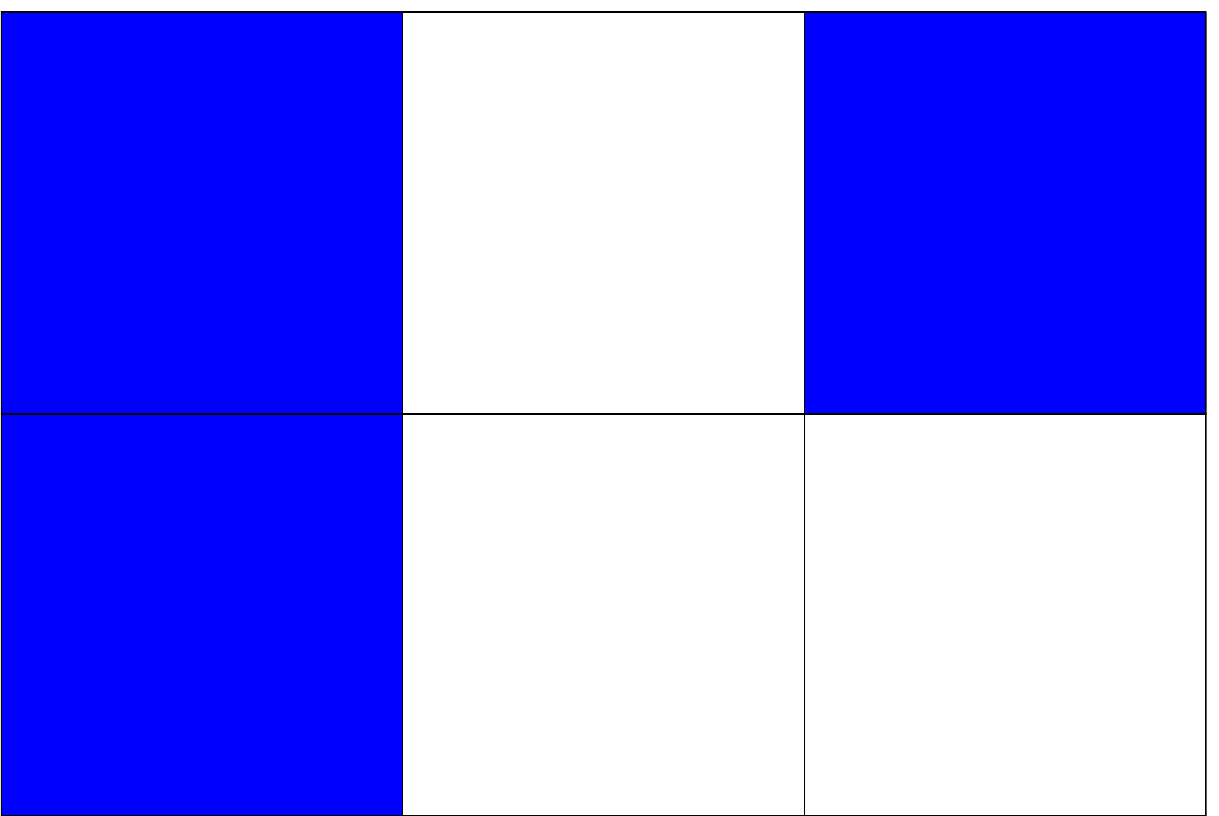} & 3x4 \\ \hline
& & & \\ 
\includegraphics[width=0.120000\linewidth]{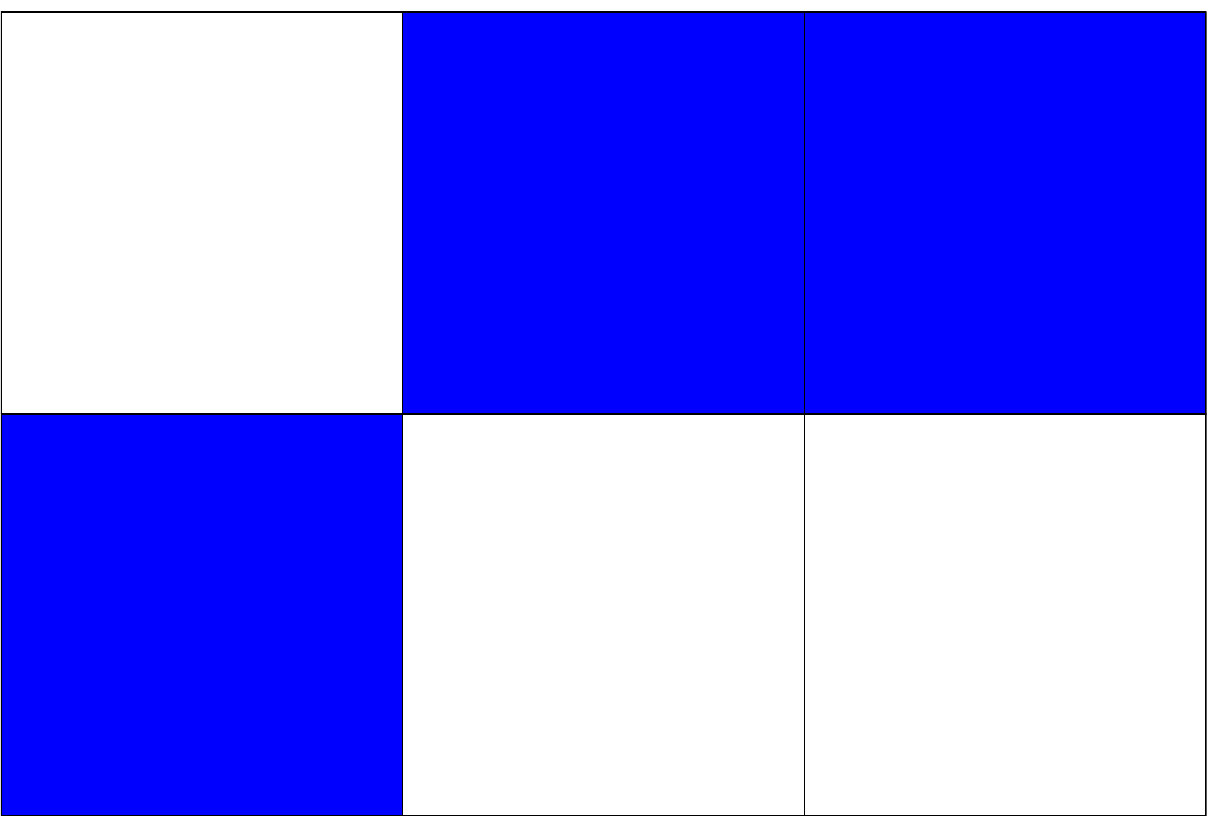} & 2x3 & \includegraphics[width=0.120000\linewidth]{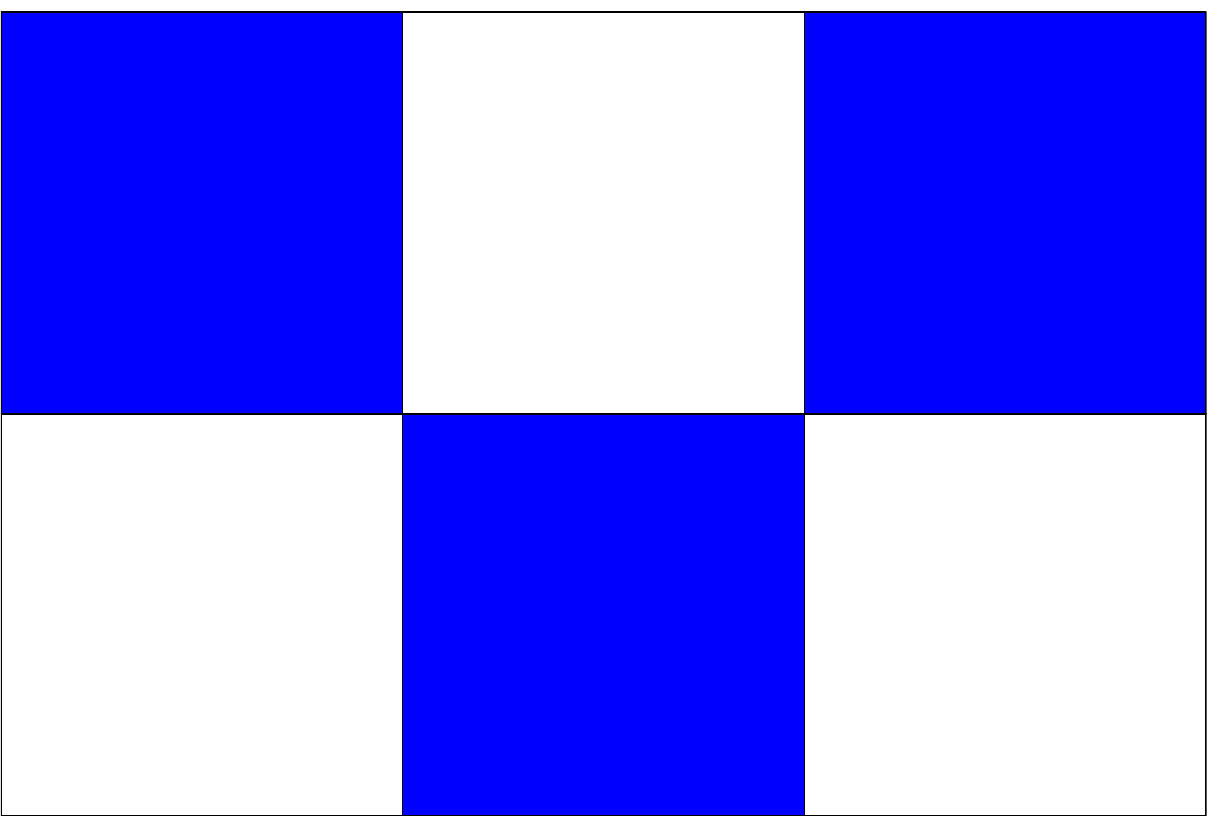} & 2x3 \\ \hline
\end{tabular}
\caption{Smallest known solutions for n=3 and k=1.}
\label{tab:trivial-31}
\end{table}

\begin{table}[!htpb]
\centering
\begin{tabular}{|c|c|c|c|}
\hline
Piece & Smallest Tiling & Piece & Smallest Tiling\\ \hline
& & & \\ 
\includegraphics[width=0.200000\linewidth]{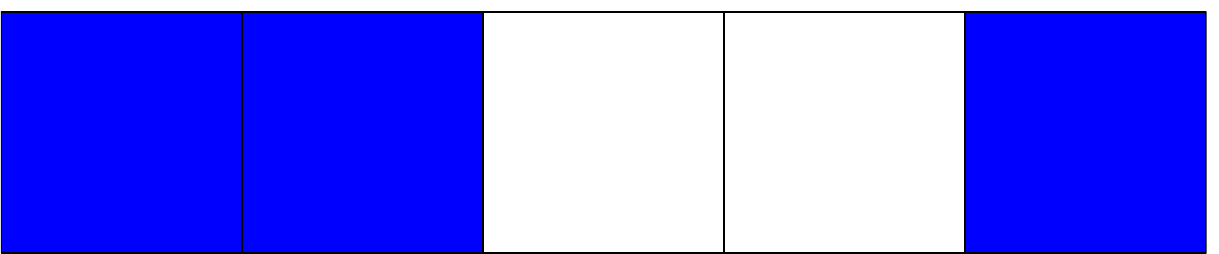} & 1x6 & \includegraphics[width=0.200000\linewidth]{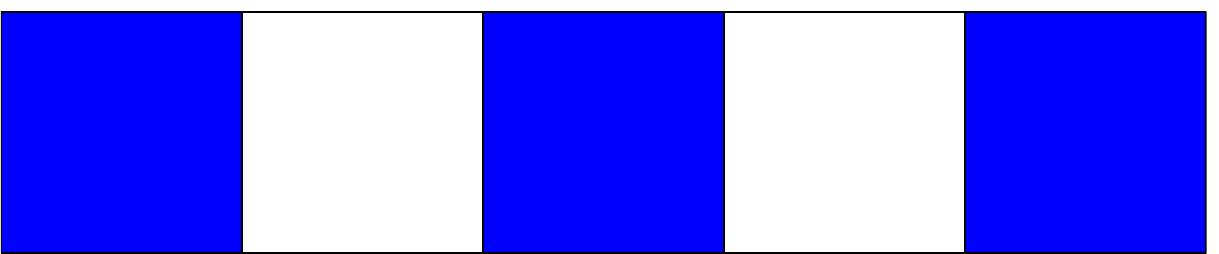} & 1x6 \\ \hline
& & & \\ 
\includegraphics[width=0.160000\linewidth]{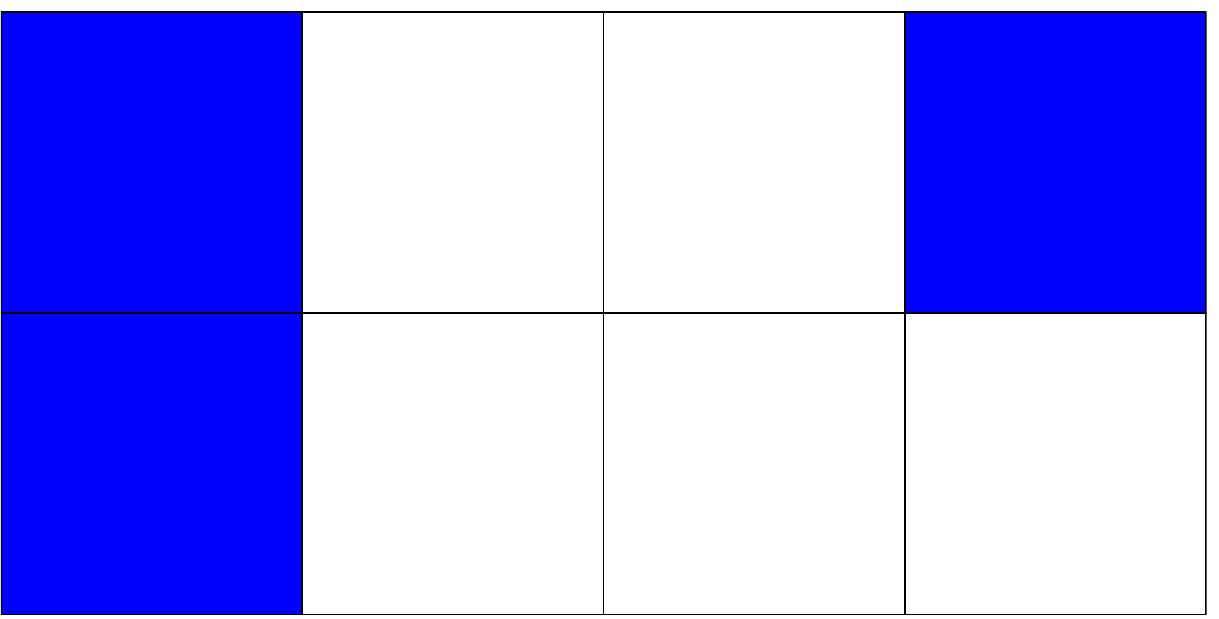} & 2x9 & \includegraphics[width=0.160000\linewidth]{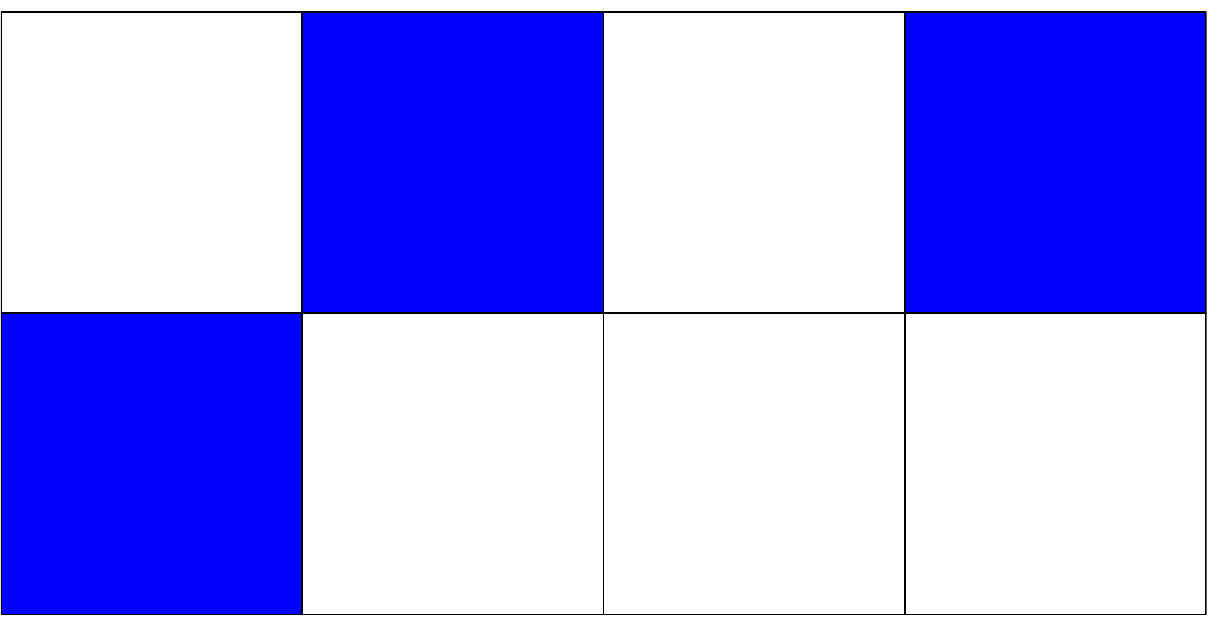} & 2x6 \\ \hline
& & & \\ 
\includegraphics[width=0.160000\linewidth]{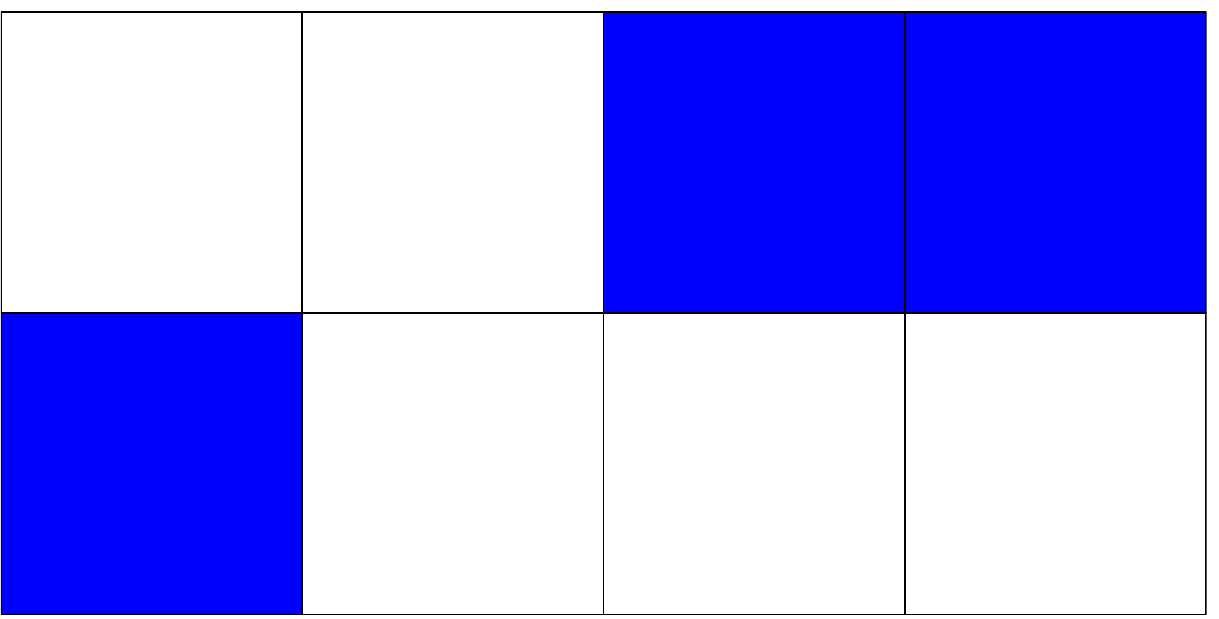} & 2x6 & \includegraphics[width=0.160000\linewidth]{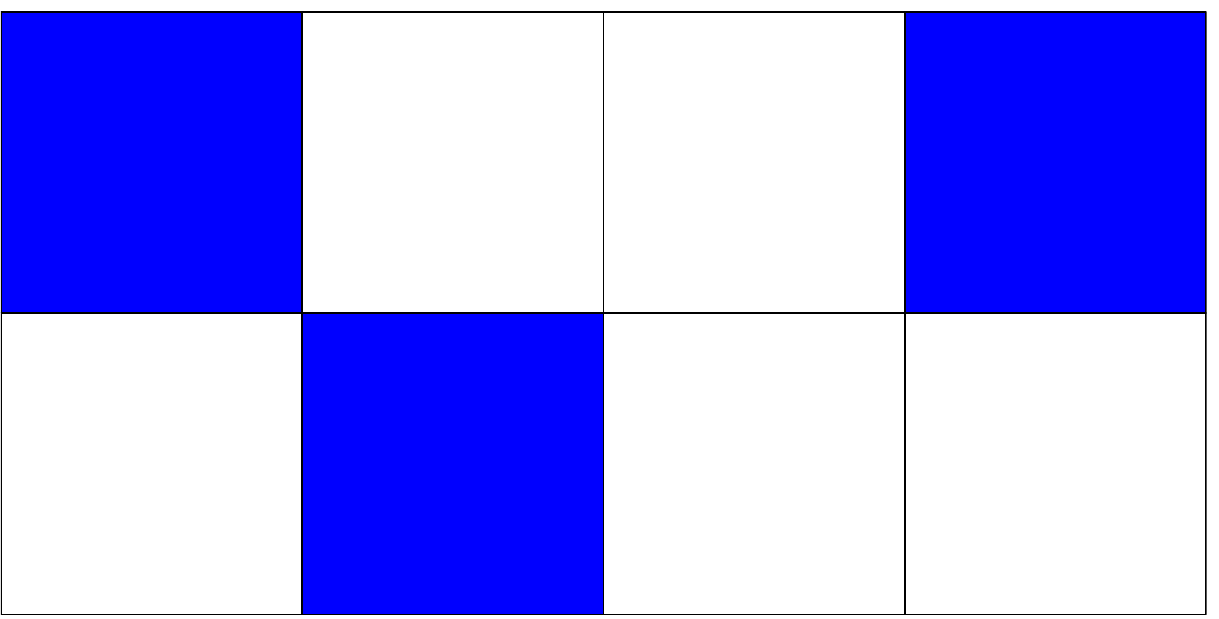} & 2x6 \\ \hline
& & & \\ 
\includegraphics[width=0.120000\linewidth]{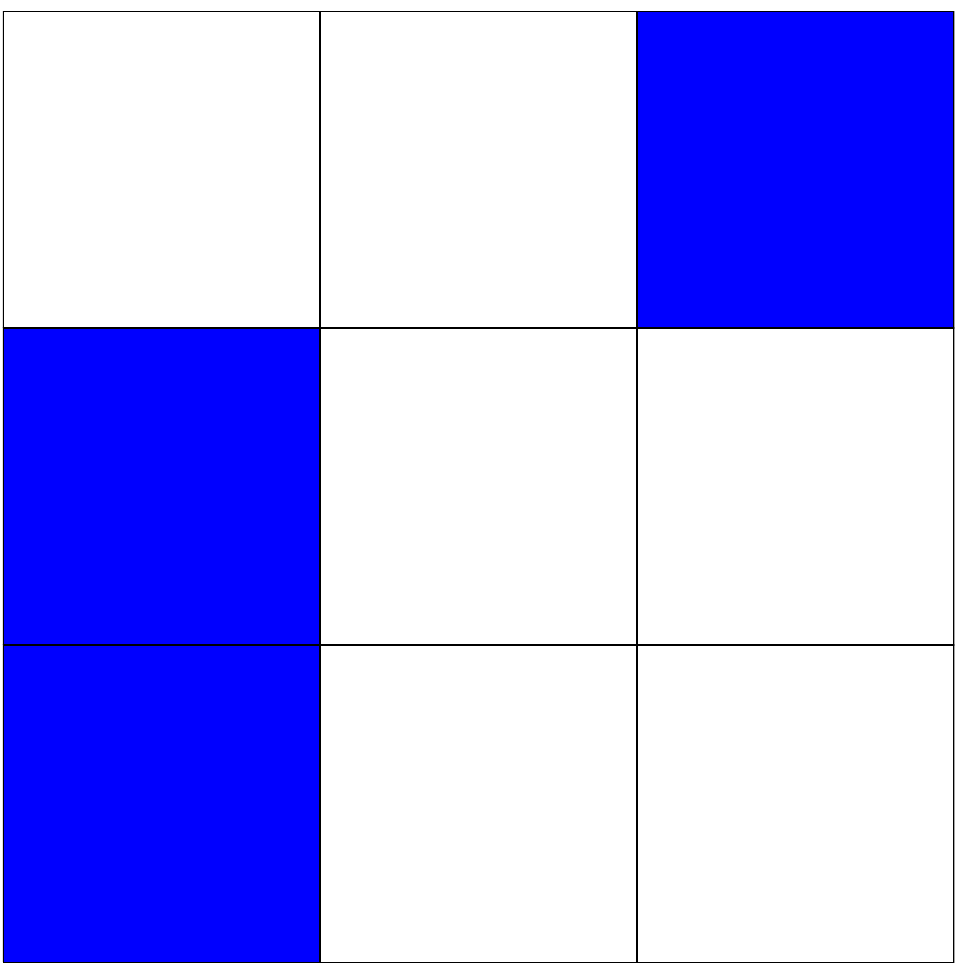} & 3x4 & \includegraphics[width=0.120000\linewidth]{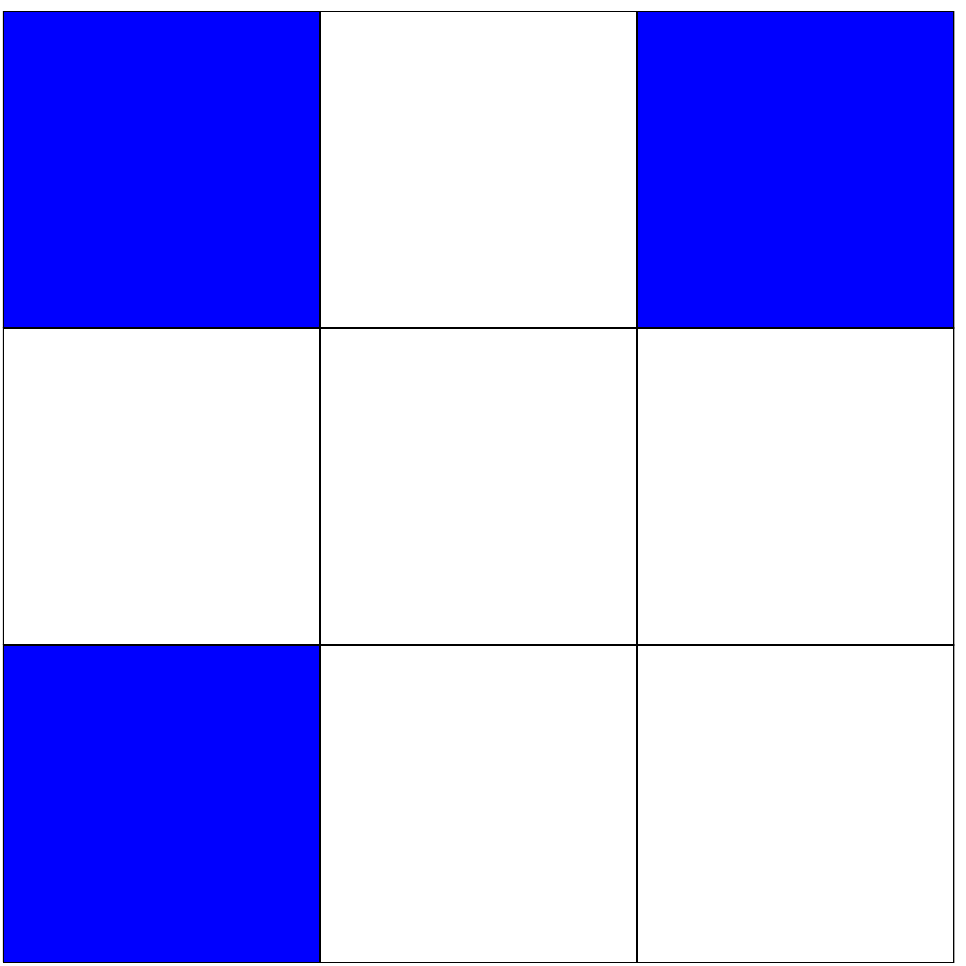} & 4x6 \\ \hline
& & & \\ 
\includegraphics[width=0.120000\linewidth]{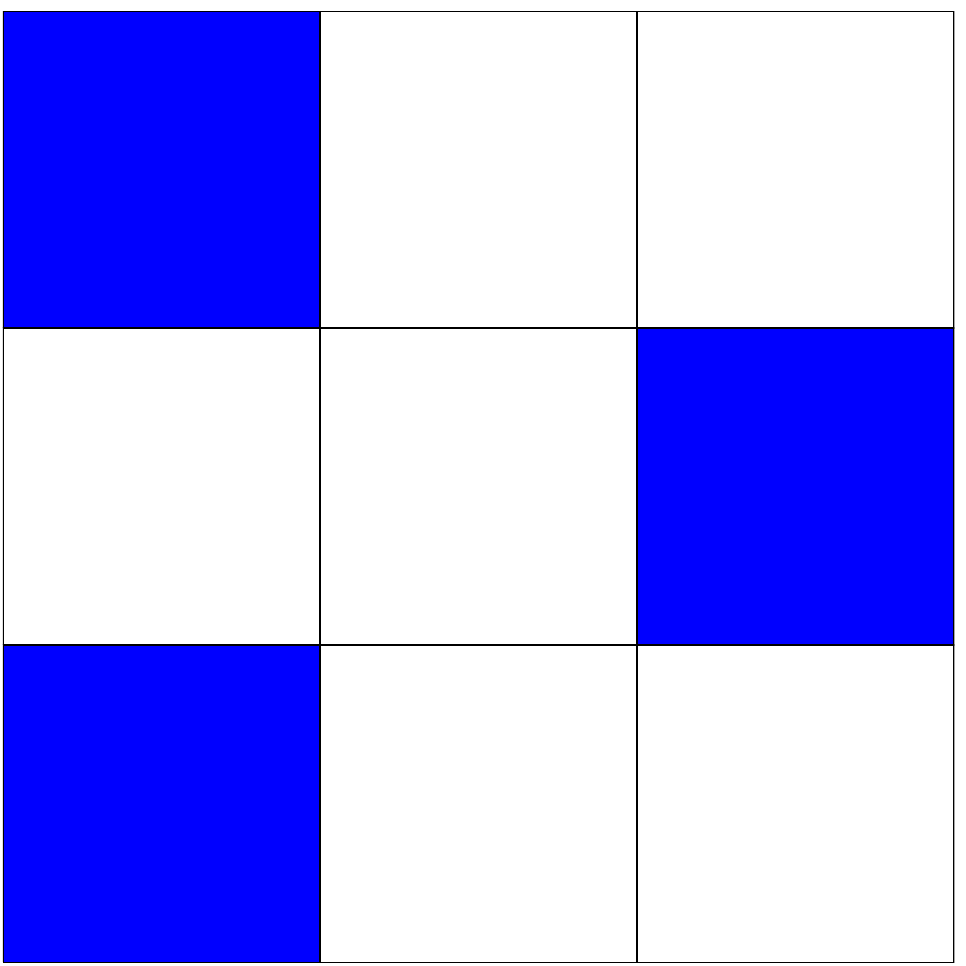} & 3x4  & & \\ \hline
\end{tabular}
\caption{Smallest known solutions for n=3 and k=2.}
\label{tab:trivial-32}
\end{table}

\begin{table}[!htpb]
\centering
\begin{tabular}{|c|c|c|c|}
\hline
Piece & Smallest Tiling & Piece & Smallest Tiling\\ \hline
& & & \\ 
\includegraphics[width=0.200000\linewidth]{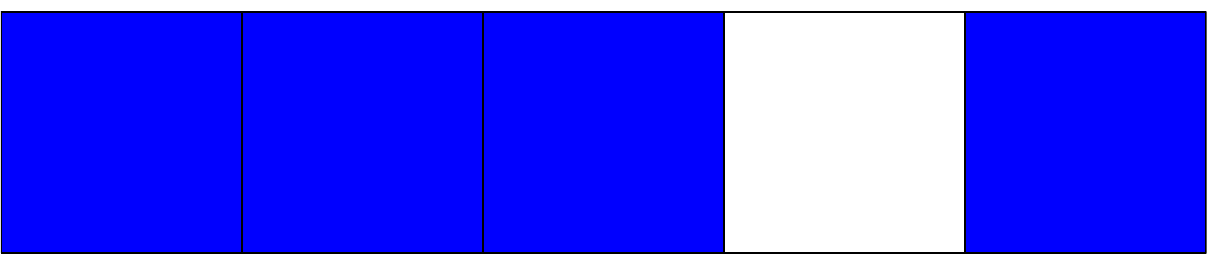} & 1x8 & \includegraphics[width=0.160000\linewidth]{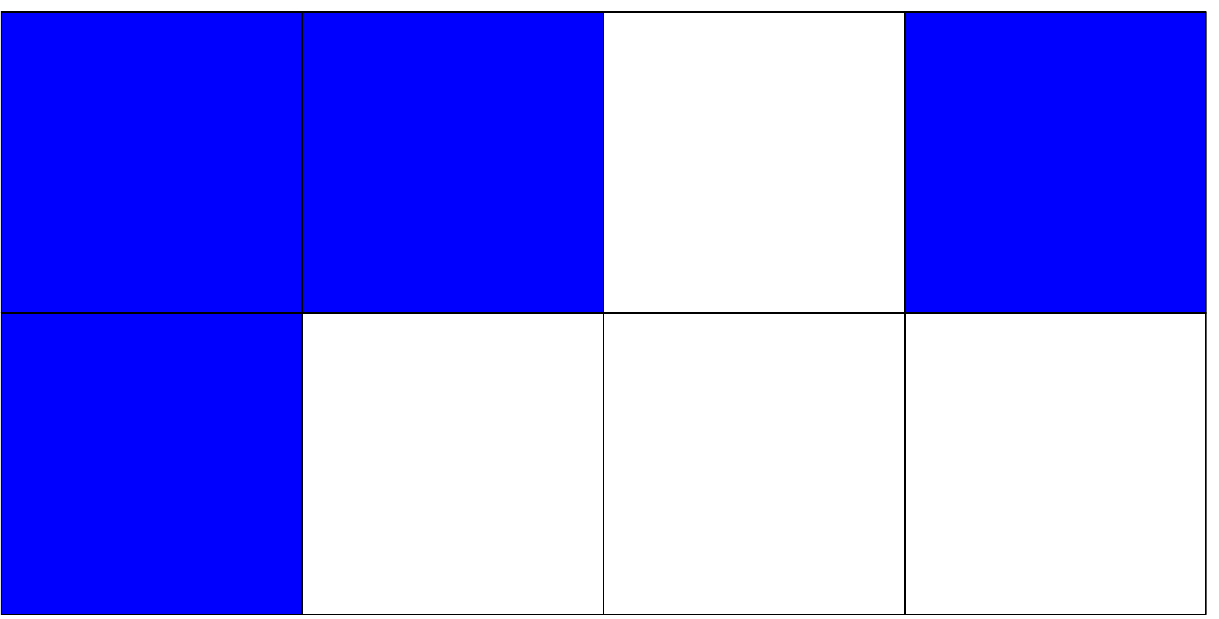} & 4x4 \\ \hline
& & & \\ 
\includegraphics[width=0.160000\linewidth]{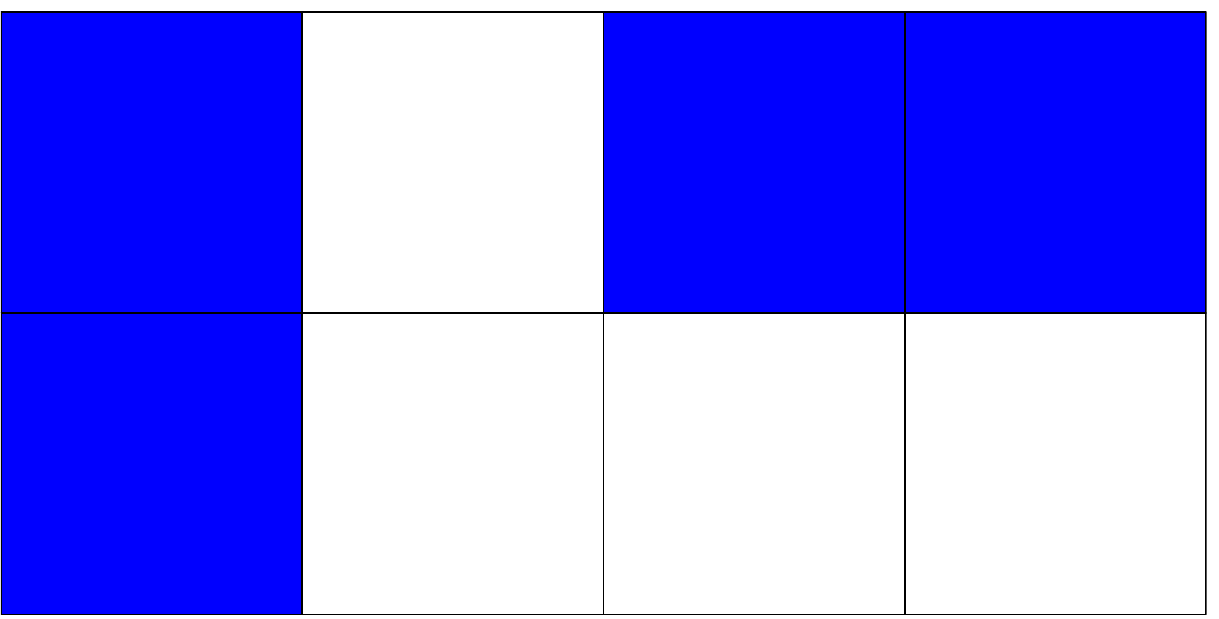} & 4x4 & \includegraphics[width=0.160000\linewidth]{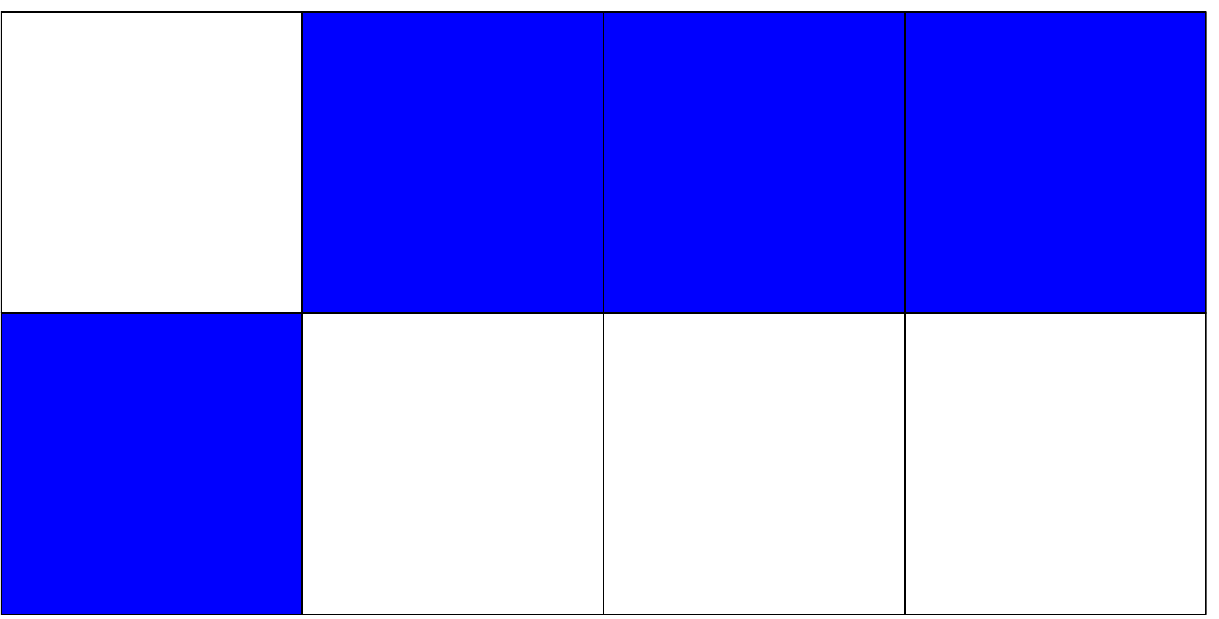} & 2x4 \\ \hline
& & & \\ 
\includegraphics[width=0.160000\linewidth]{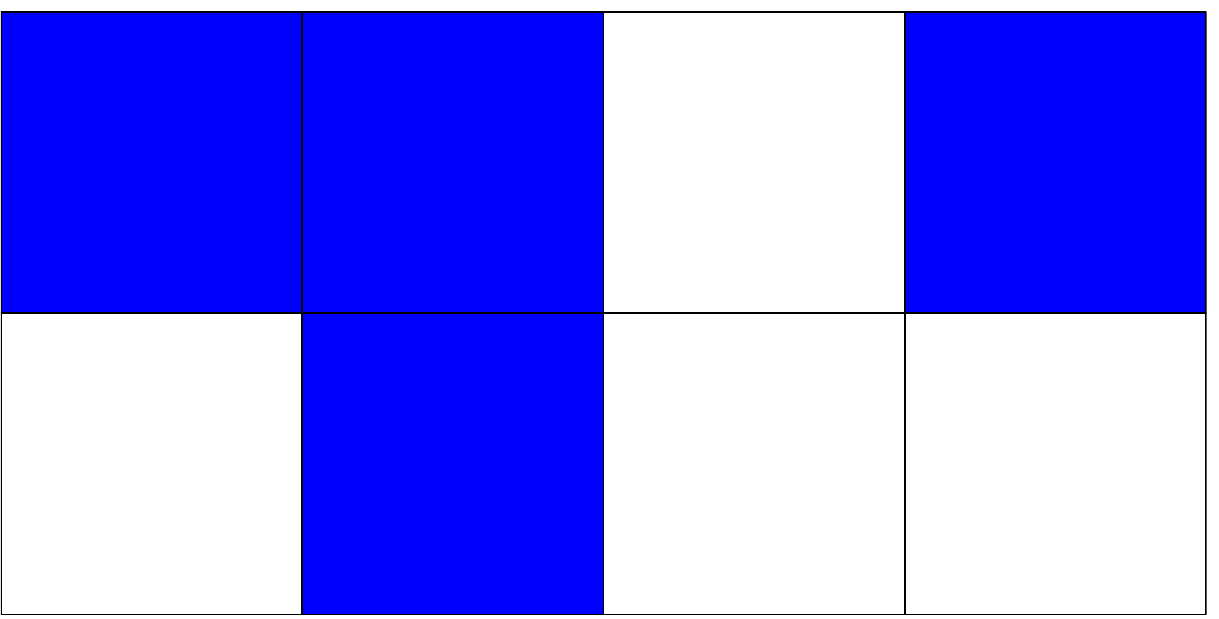} & 2x4 & \includegraphics[width=0.160000\linewidth]{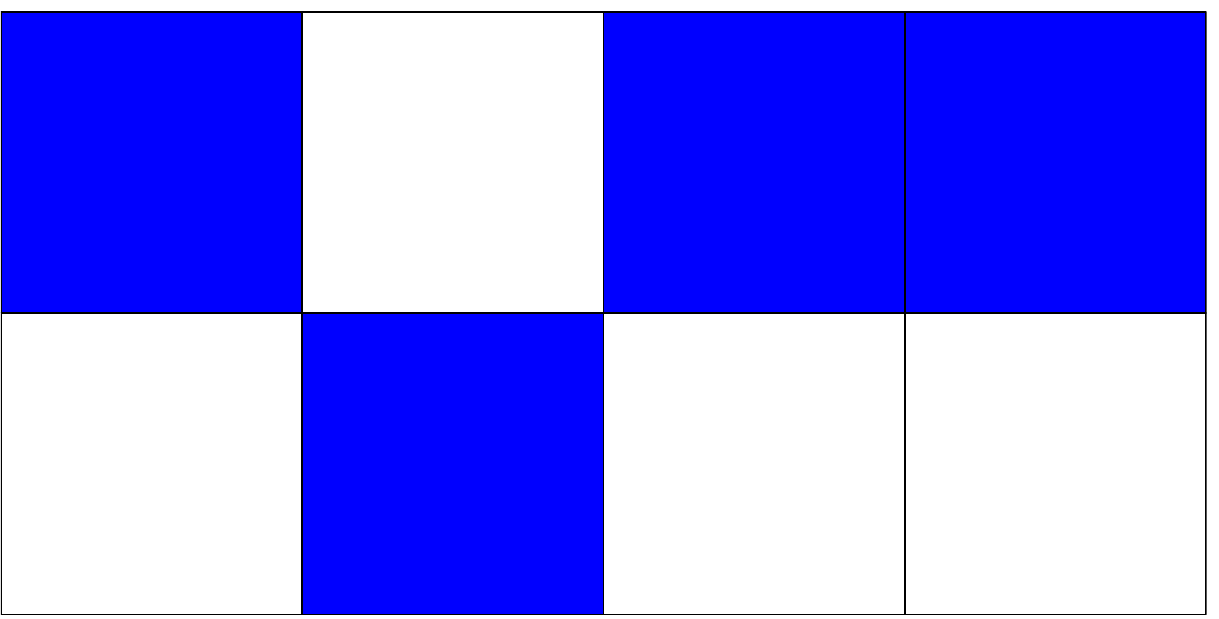} & 2x4 \\ \hline
& & & \\ 
\includegraphics[width=0.160000\linewidth]{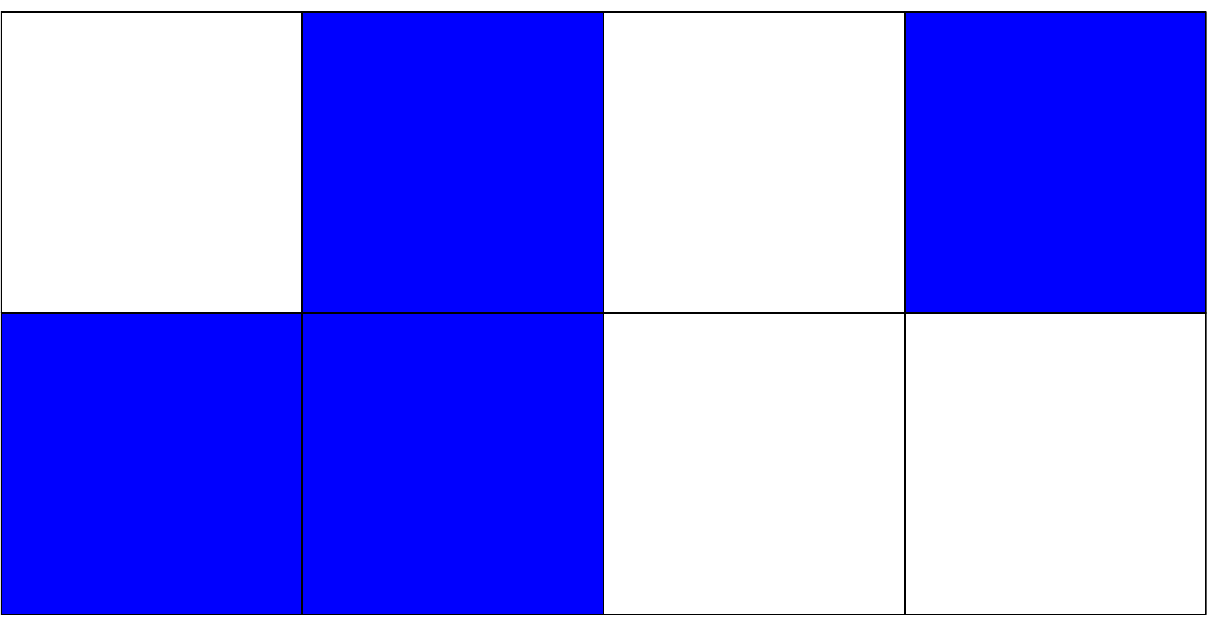} & 2x4 & \includegraphics[width=0.160000\linewidth]{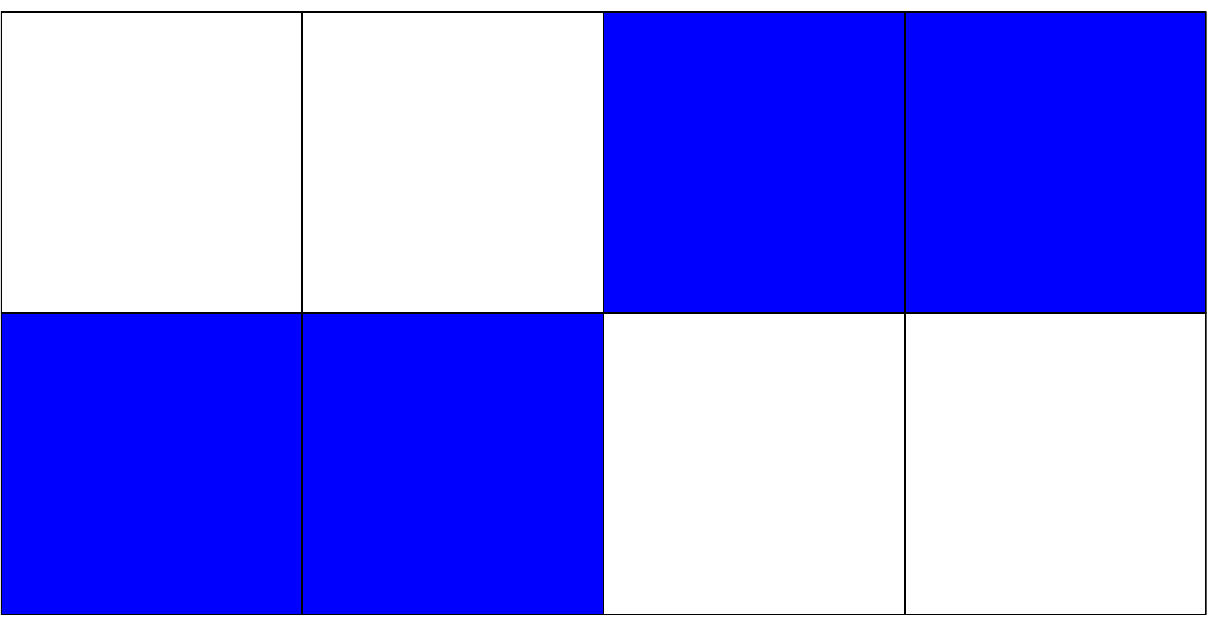} & 2x4 \\ \hline
& & & \\ 
\includegraphics[width=0.120000\linewidth]{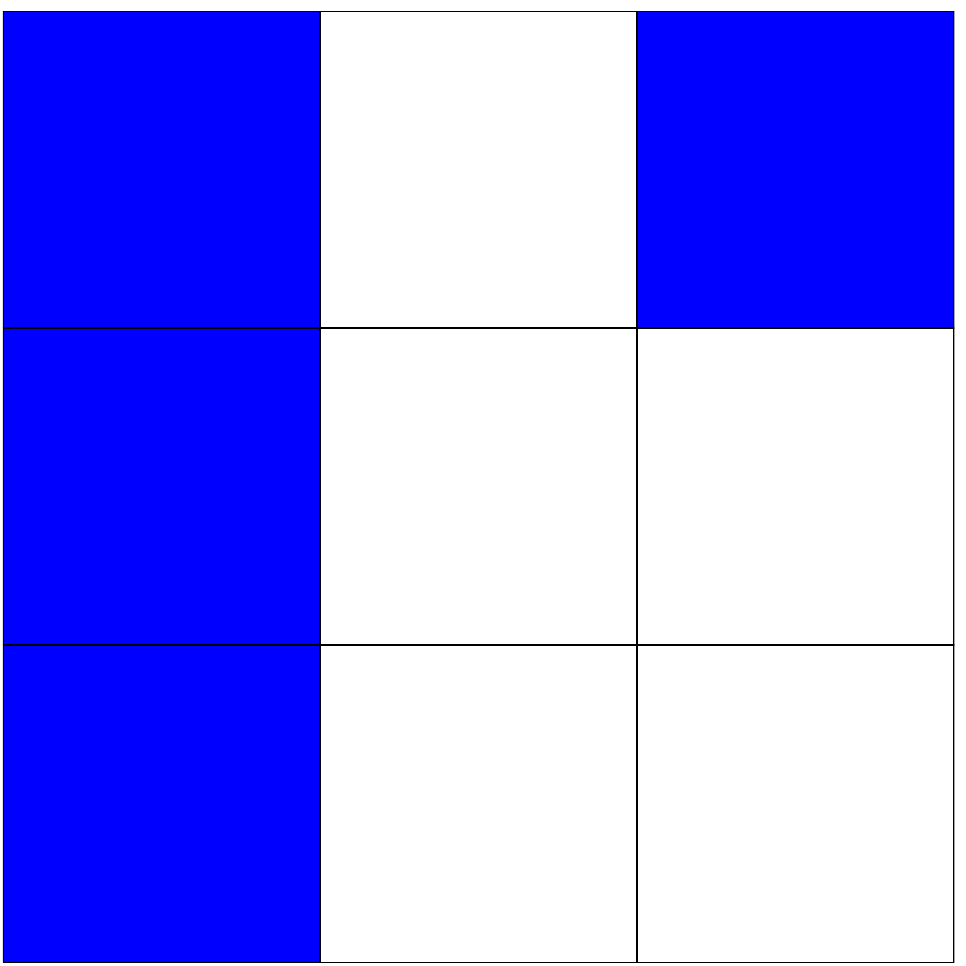} & 4x4 & \includegraphics[width=0.120000\linewidth]{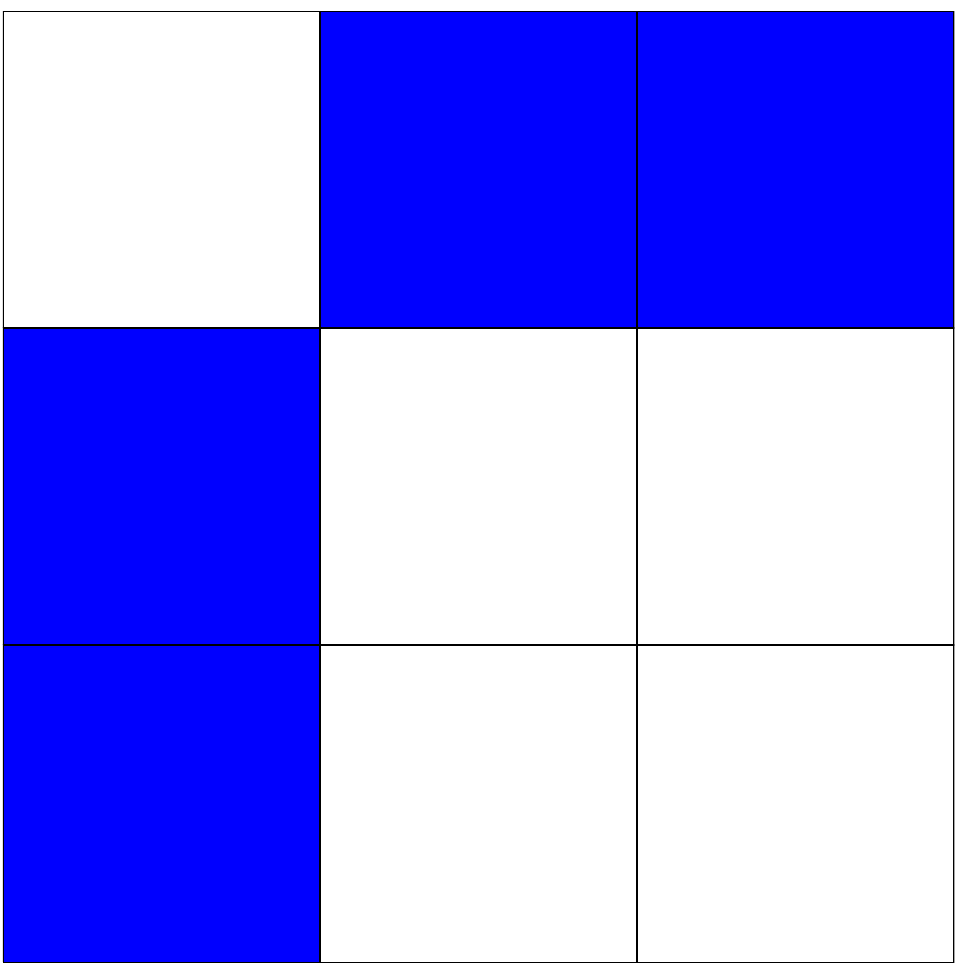} & 14x36 \\ \hline
& & & \\ 
\includegraphics[width=0.120000\linewidth]{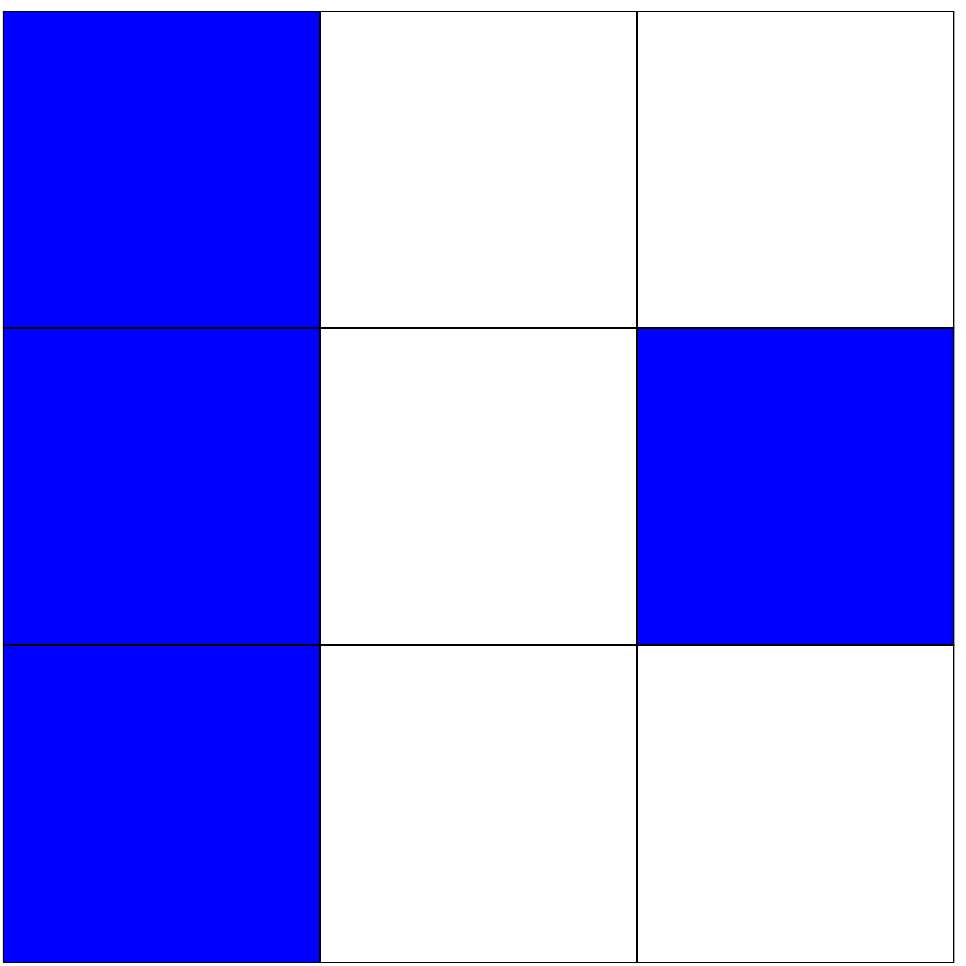} & 4x4 & \includegraphics[width=0.120000\linewidth]{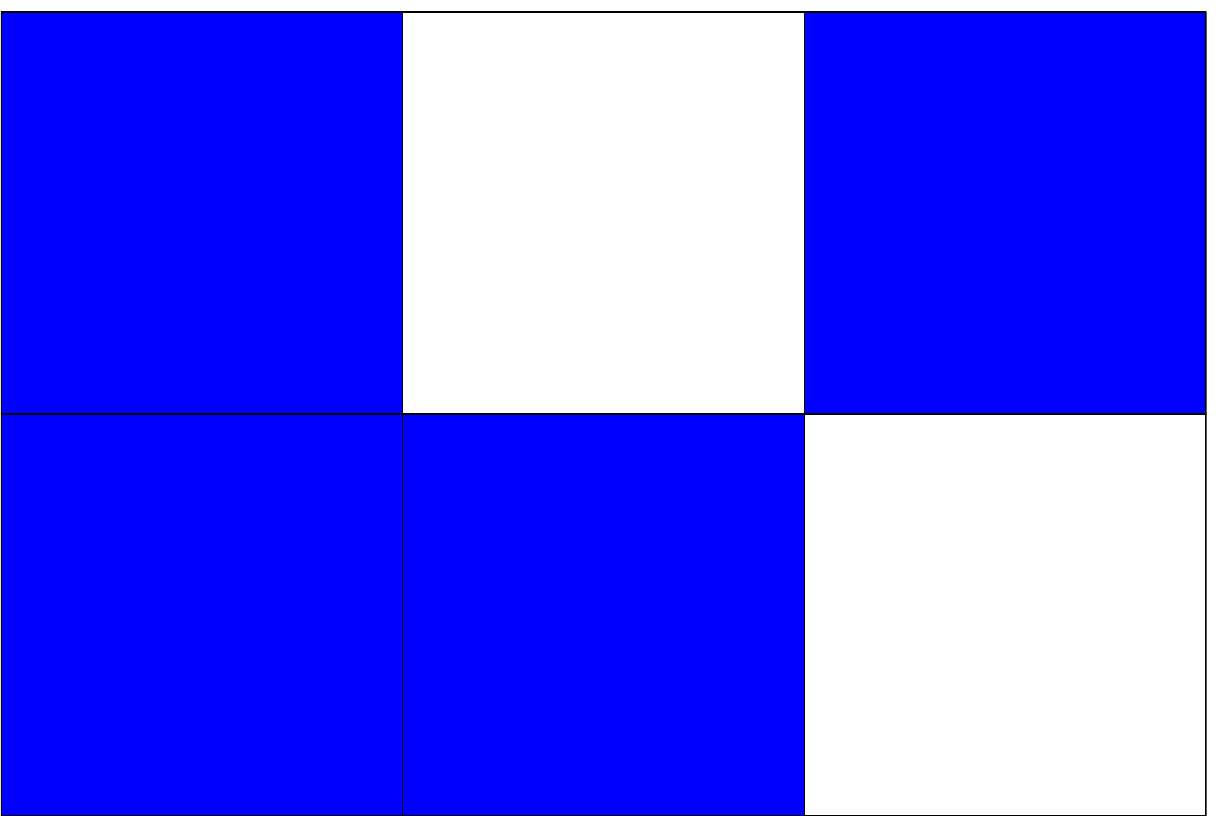} & 2x4 \\ \hline
& & & \\ 
\includegraphics[width=0.120000\linewidth]{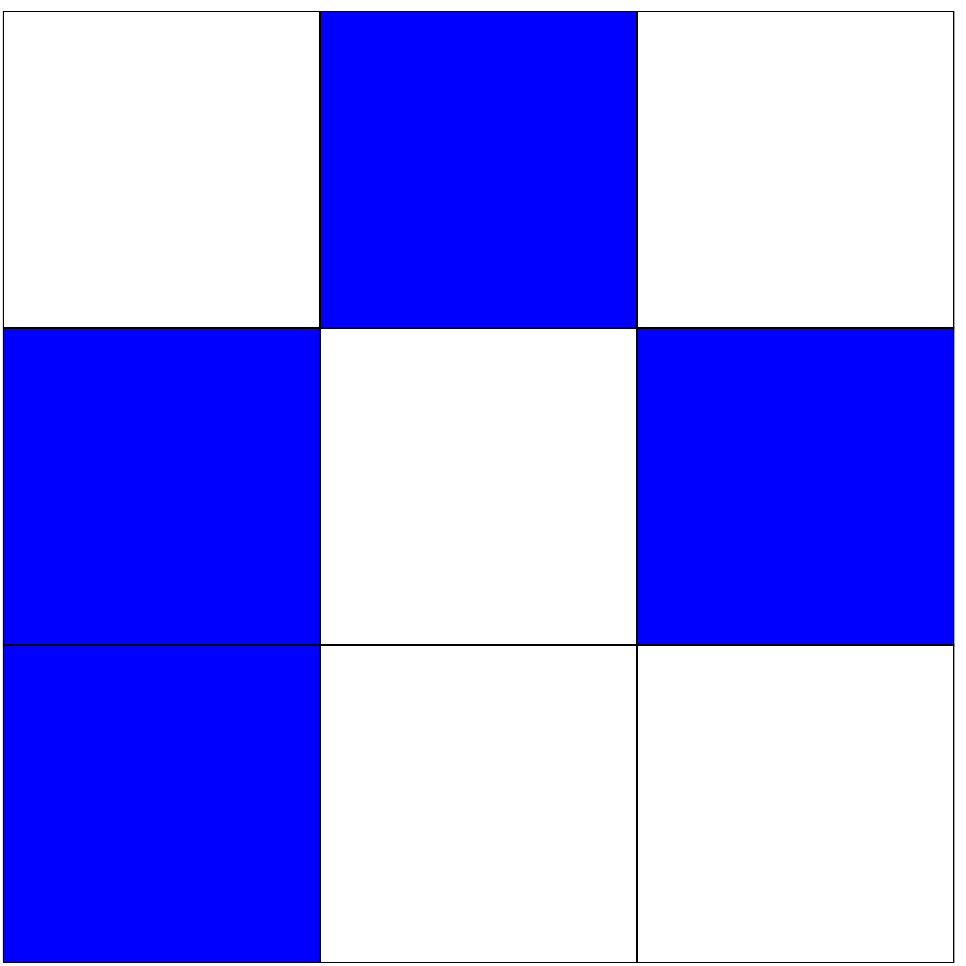} & 8x12 & \includegraphics[width=0.120000\linewidth]{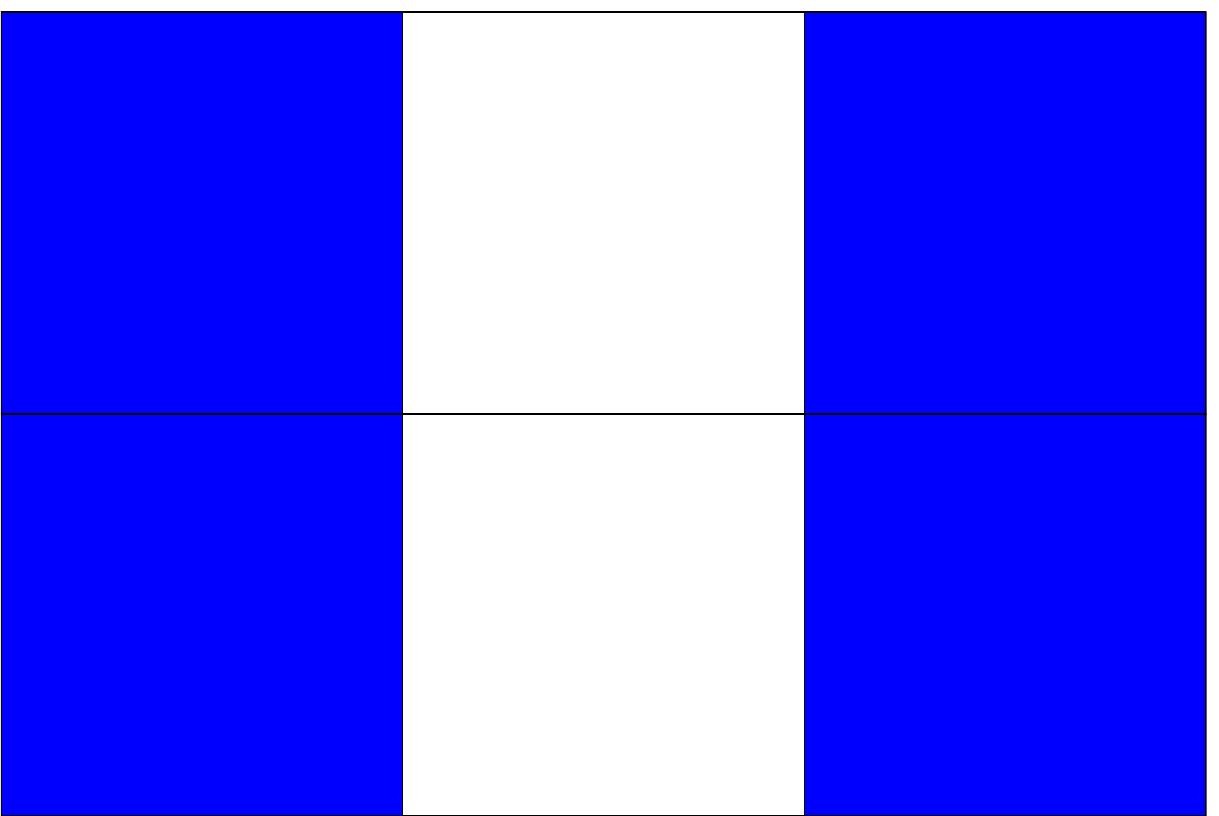} & 2x4 \\ \hline
\end{tabular}
\caption{Smallest known solutions for n=4 and k=1.}
\label{tab:trivial-41}
\end{table}

\begin{table}[!htpb]
\centering
\begin{tabular}{|c|c|c|c|}
\hline
Piece & Smallest Tiling & Piece & Smallest Tiling\\ \hline
& & & \\ 
\includegraphics[width=0.240000\linewidth]{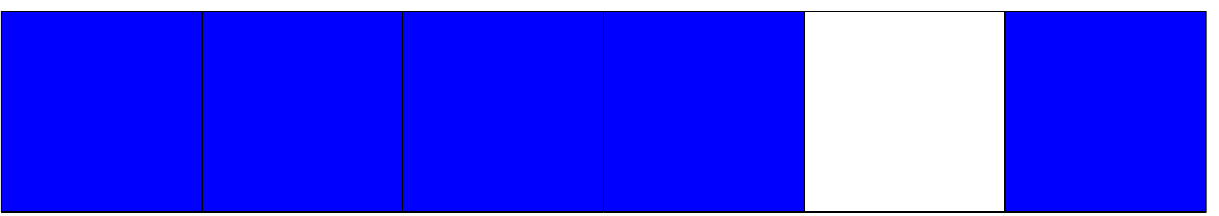} & 1x10 & \includegraphics[width=0.240000\linewidth]{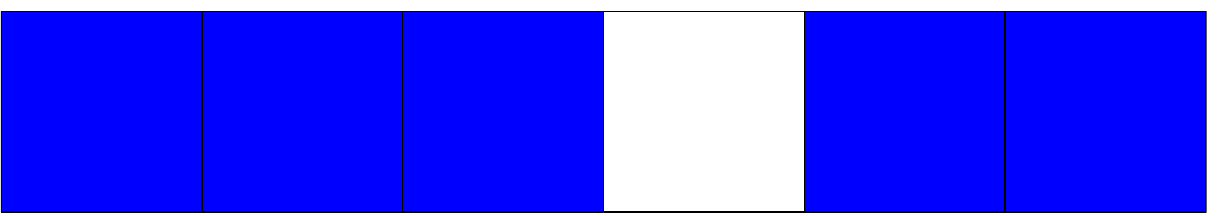} & 6x20 \\ \hline
& & & \\ 
\includegraphics[width=0.200000\linewidth]{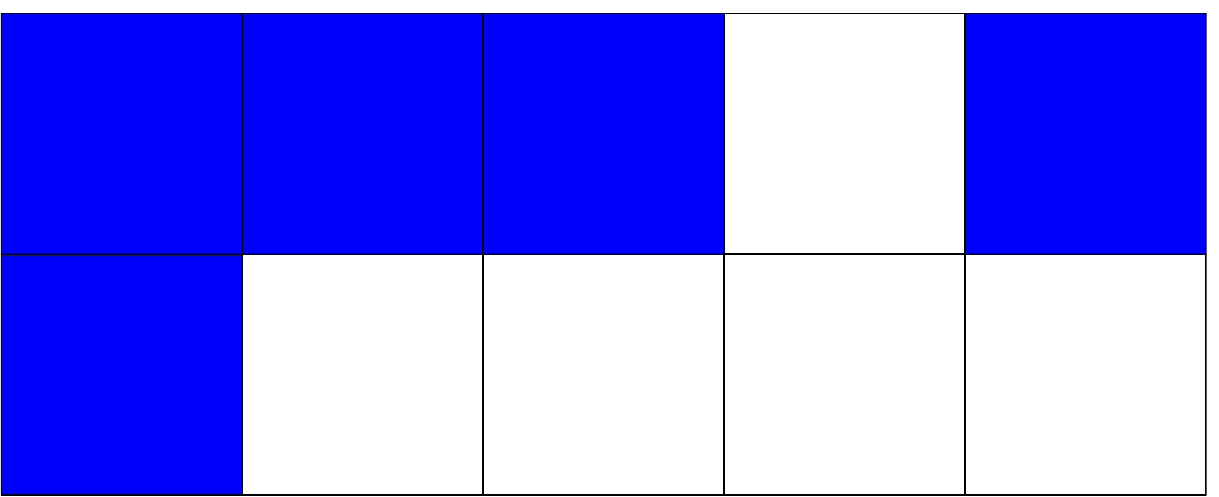} & 10x10 & \includegraphics[width=0.200000\linewidth]{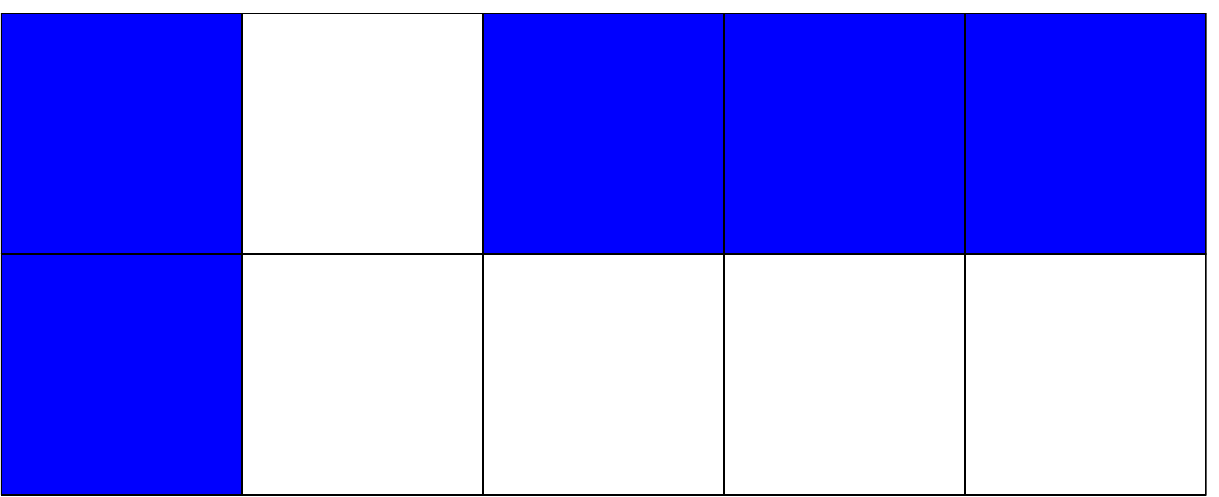} & 10x10 \\ \hline
& & & \\ 
\includegraphics[width=0.200000\linewidth]{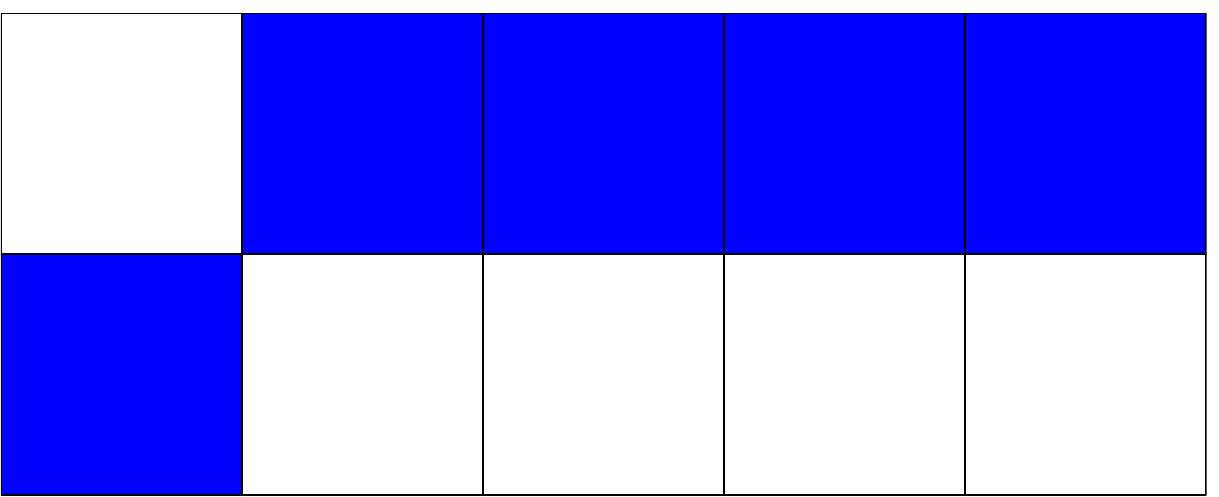} & 2x5 & \includegraphics[width=0.200000\linewidth]{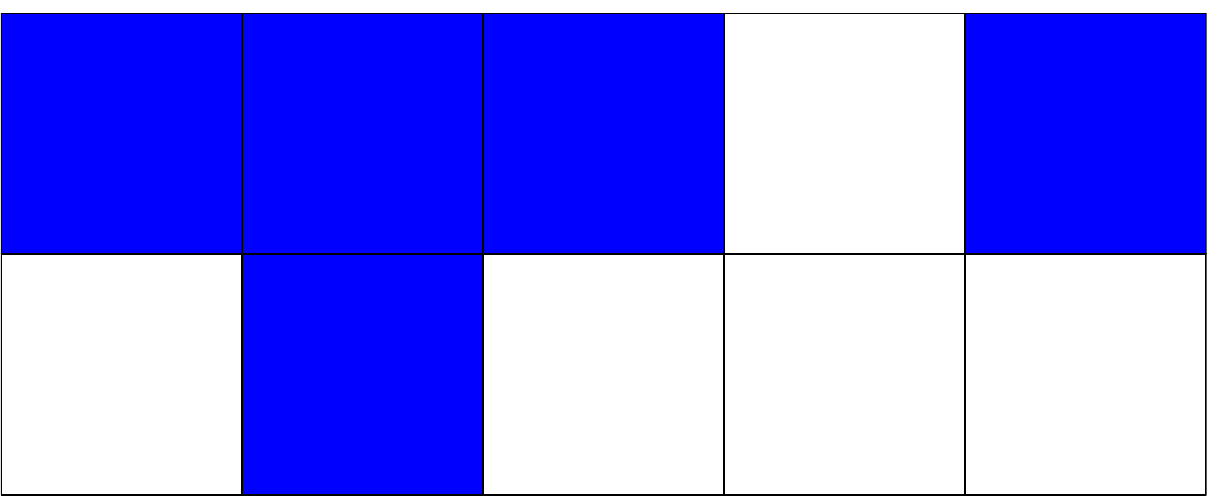} & 2x5 \\ \hline
& & & \\ 
\includegraphics[width=0.200000\linewidth]{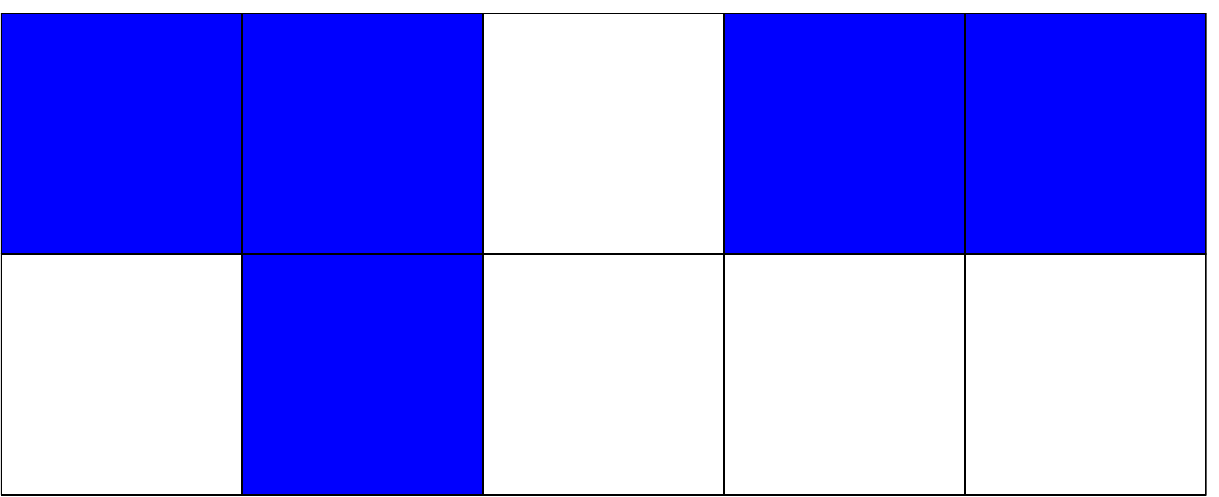} & 12x15 & \includegraphics[width=0.200000\linewidth]{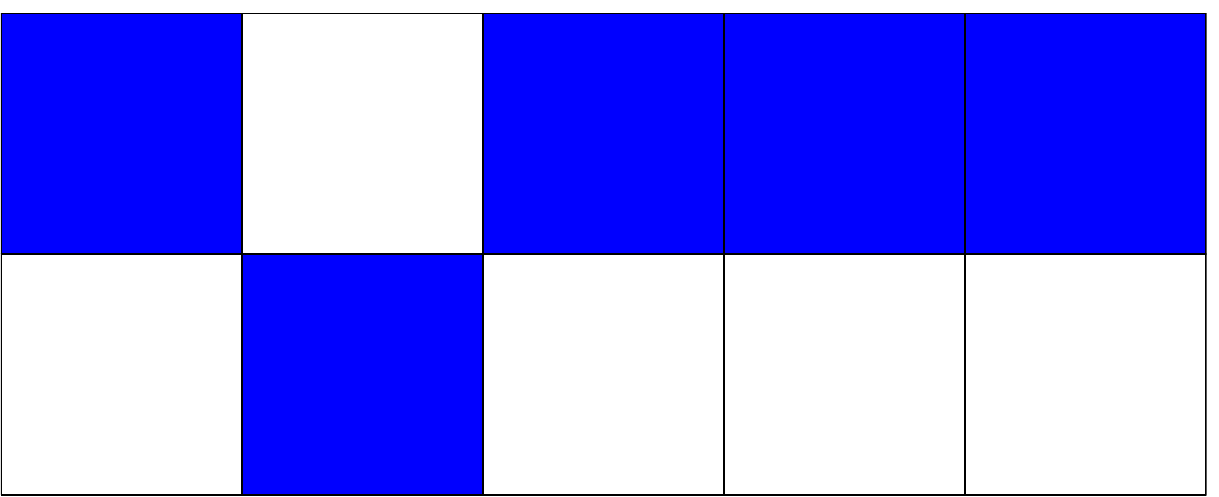} & 2x5 \\ \hline
& & & \\ 
\includegraphics[width=0.200000\linewidth]{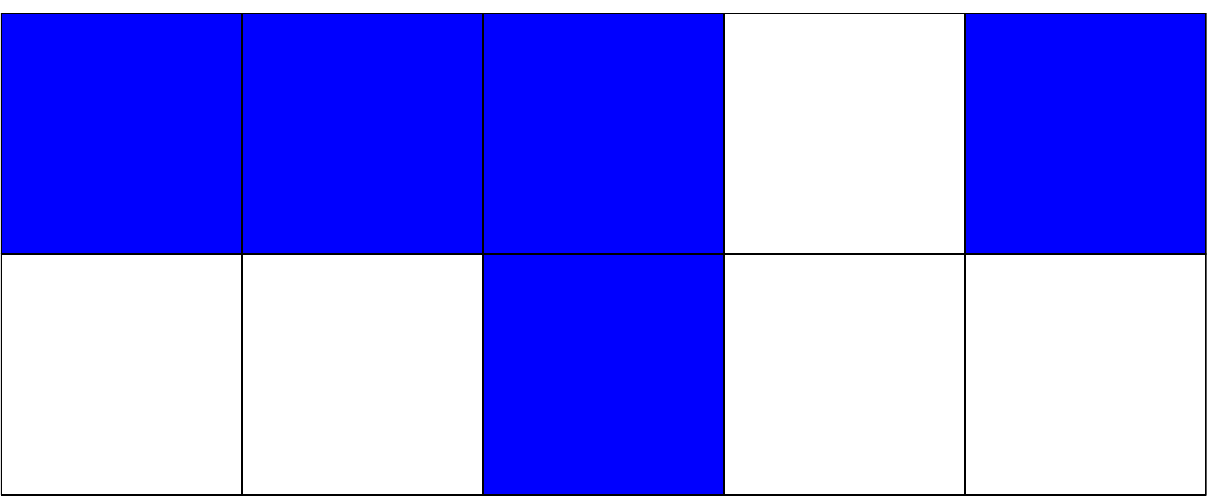} & 15x18 & \includegraphics[width=0.200000\linewidth]{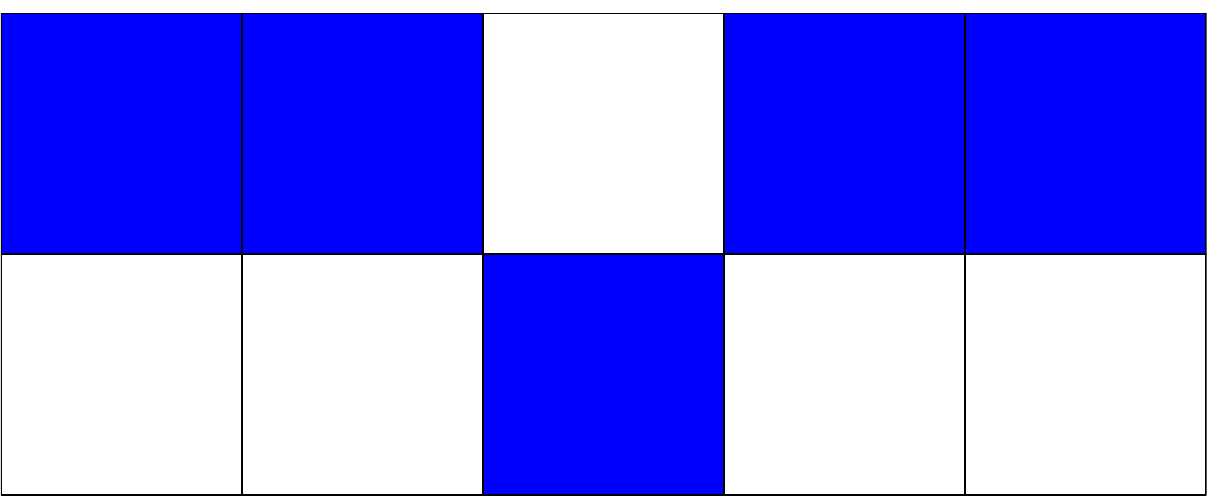} & 2x5 \\ \hline
& & & \\ 
\includegraphics[width=0.200000\linewidth]{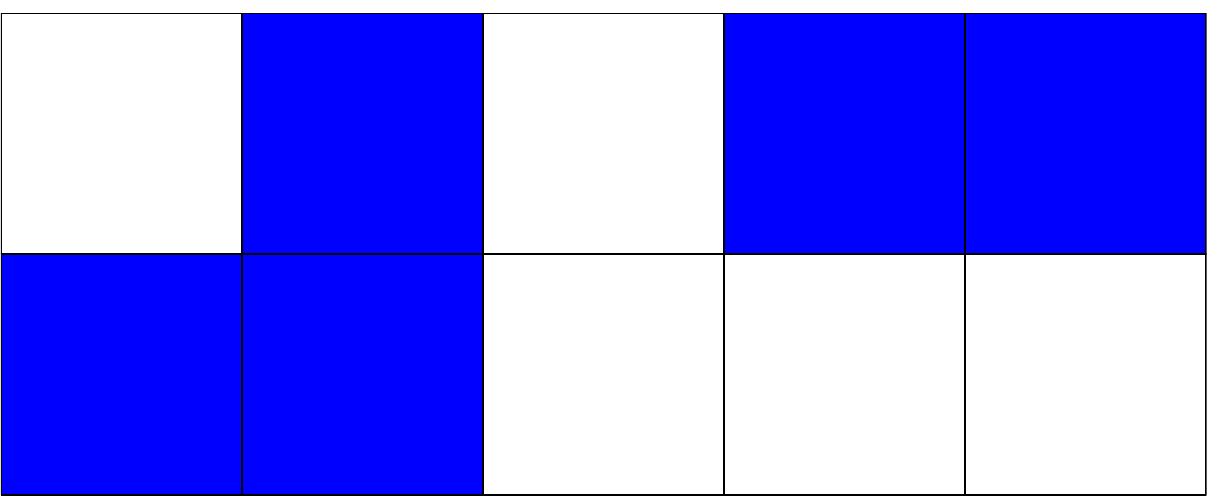} & 20x22 & \includegraphics[width=0.200000\linewidth]{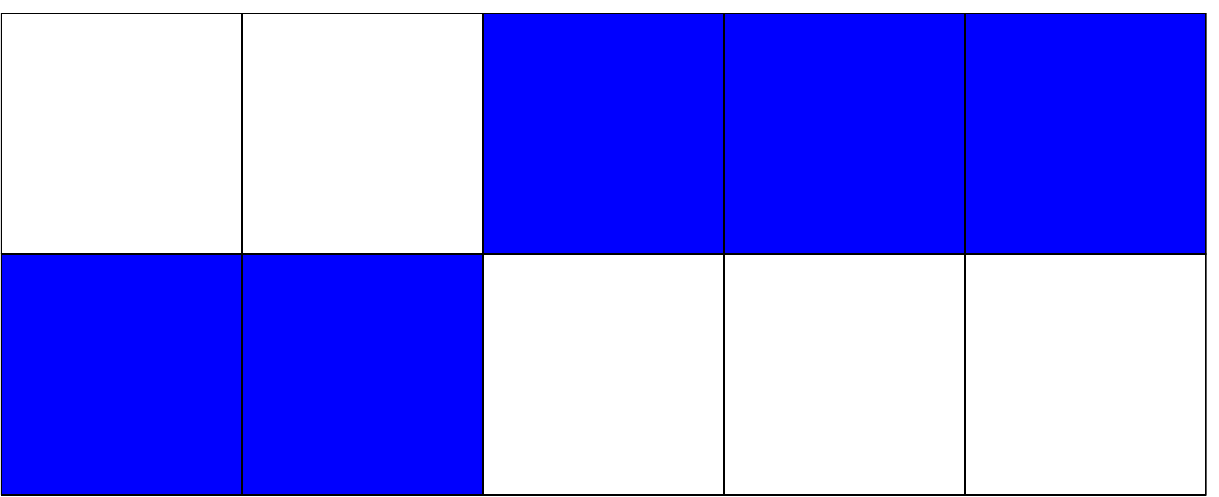} & 2x5 \\ \hline
& & & \\ 
\includegraphics[width=0.160000\linewidth]{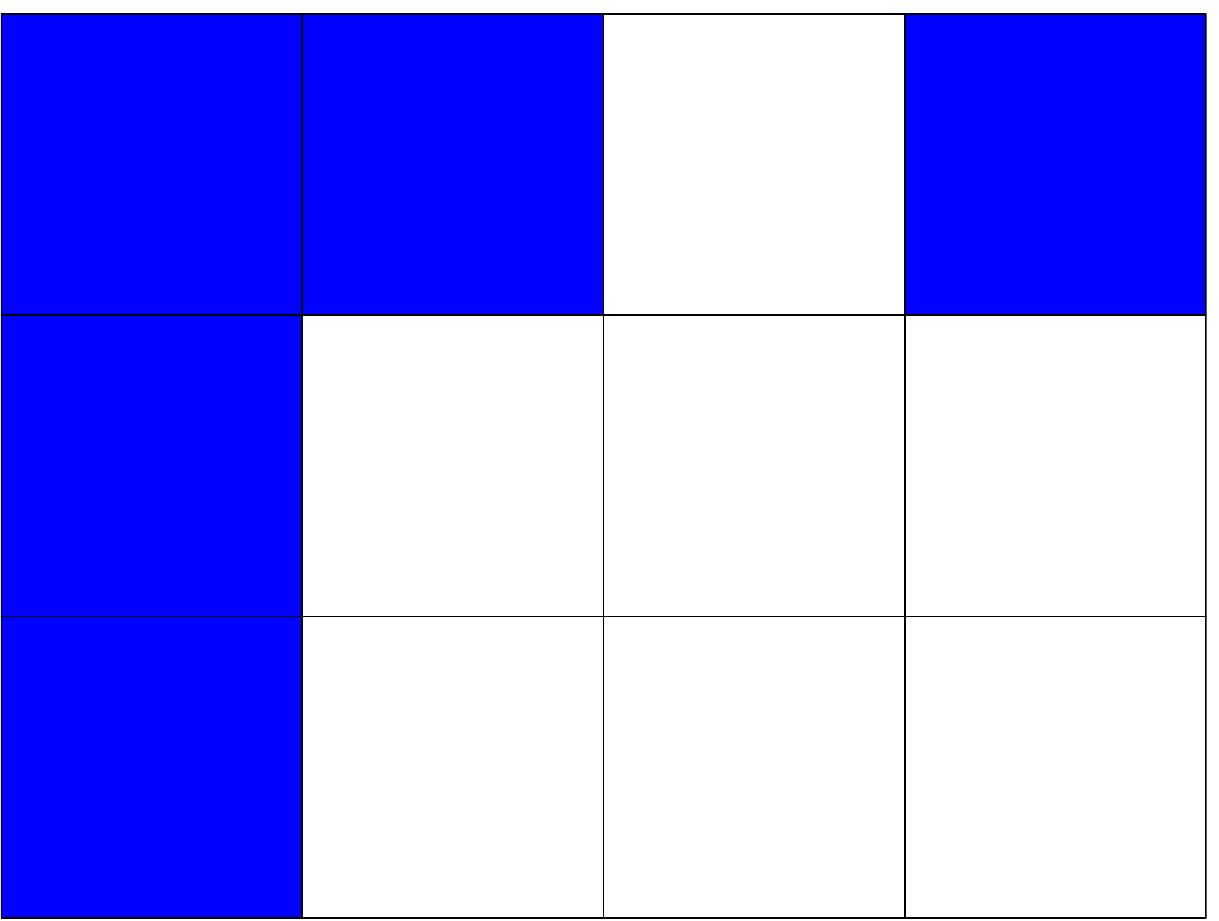} & 4x5 & \includegraphics[width=0.160000\linewidth]{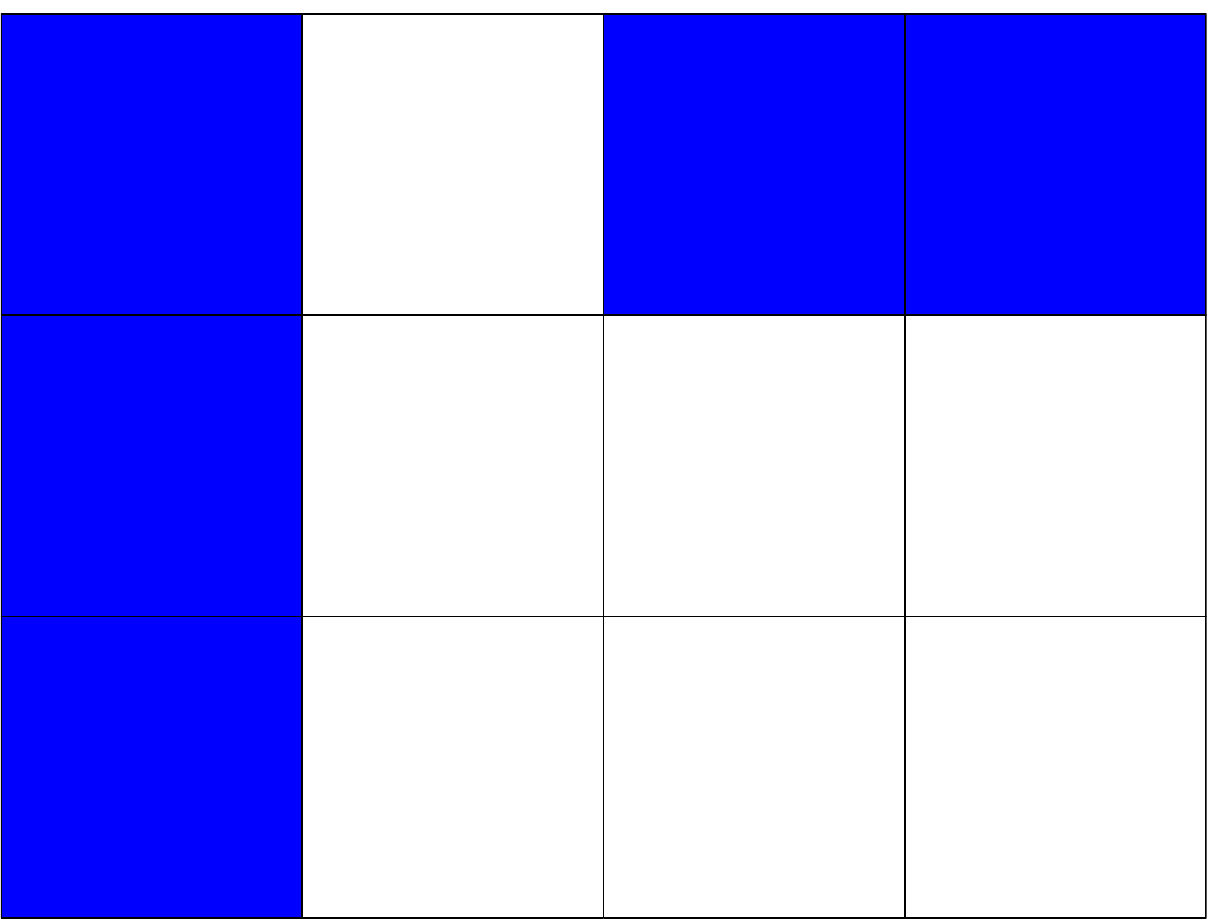} & 18x20 \\ \hline
\end{tabular}
\caption{Smallest known solutions for n=5 and k=1.}
\label{tab:trivial-51a}
\end{table}

\begin{table}[!htpb]
\centering
\begin{tabular}{|c|c|c|c|}
\hline
Piece & Smallest Tiling & Piece & Smallest Tiling\\ \hline
& & & \\ 
\includegraphics[width=0.160000\linewidth]{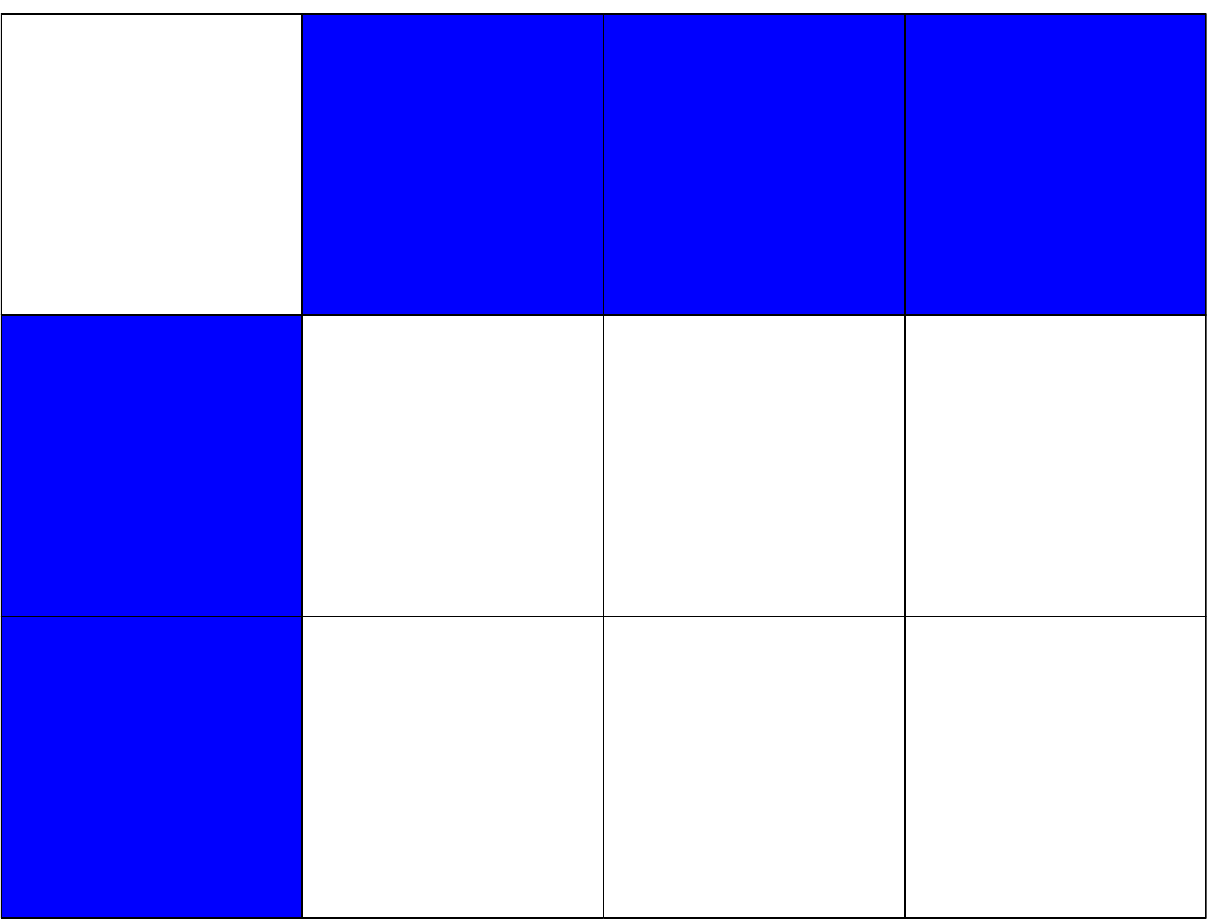} & 20x20 & \includegraphics[width=0.160000\linewidth]{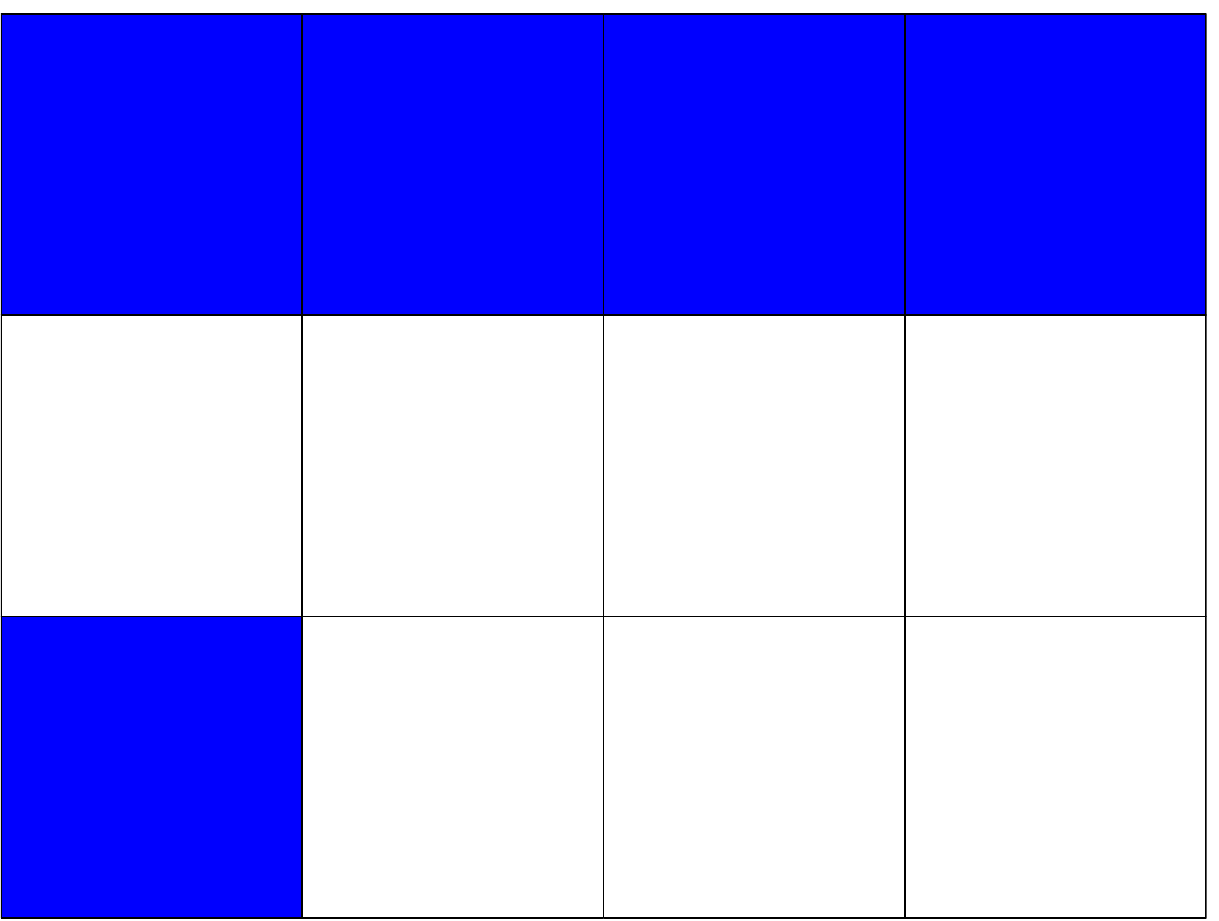} & 4x5 \\ \hline
& & & \\ 
\includegraphics[width=0.160000\linewidth]{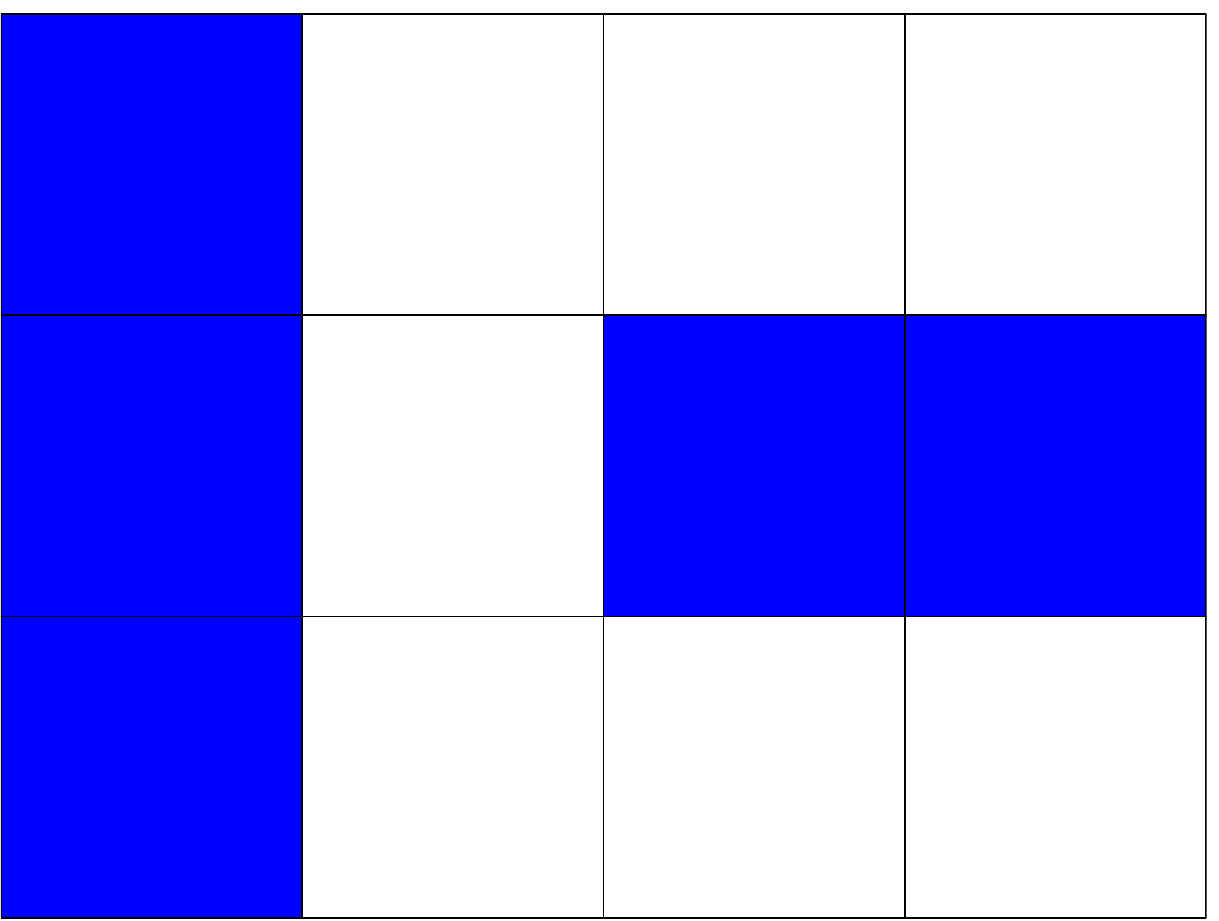} & 4x5 & \includegraphics[width=0.160000\linewidth]{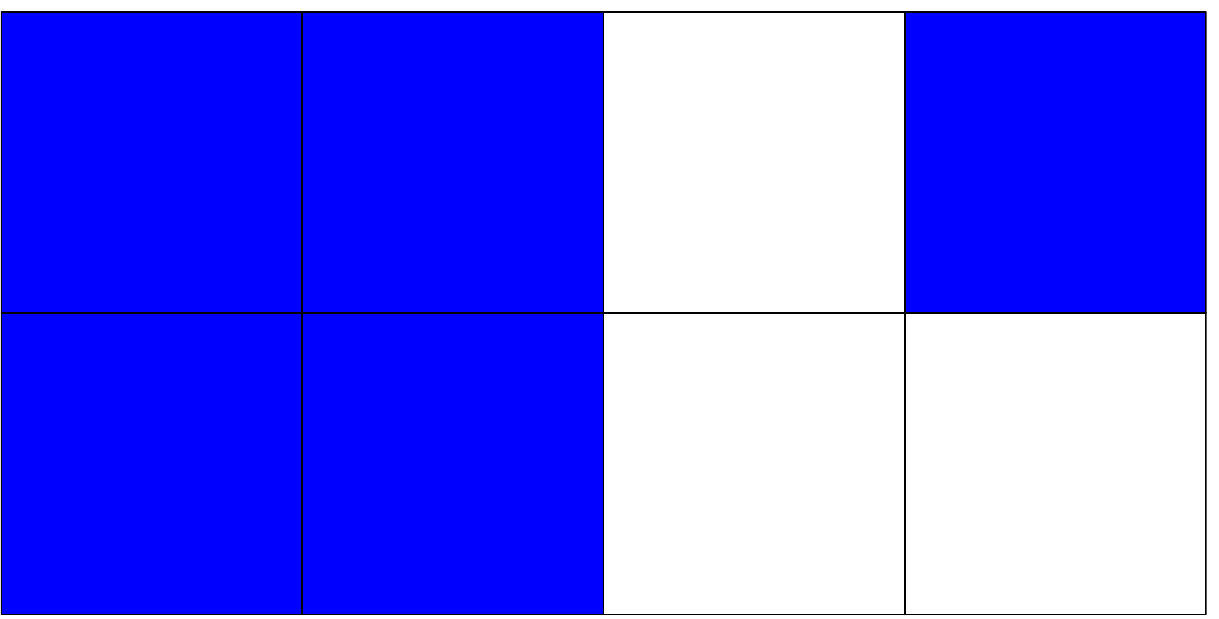} & 6x10 \\ \hline
& & & \\ 
\includegraphics[width=0.160000\linewidth]{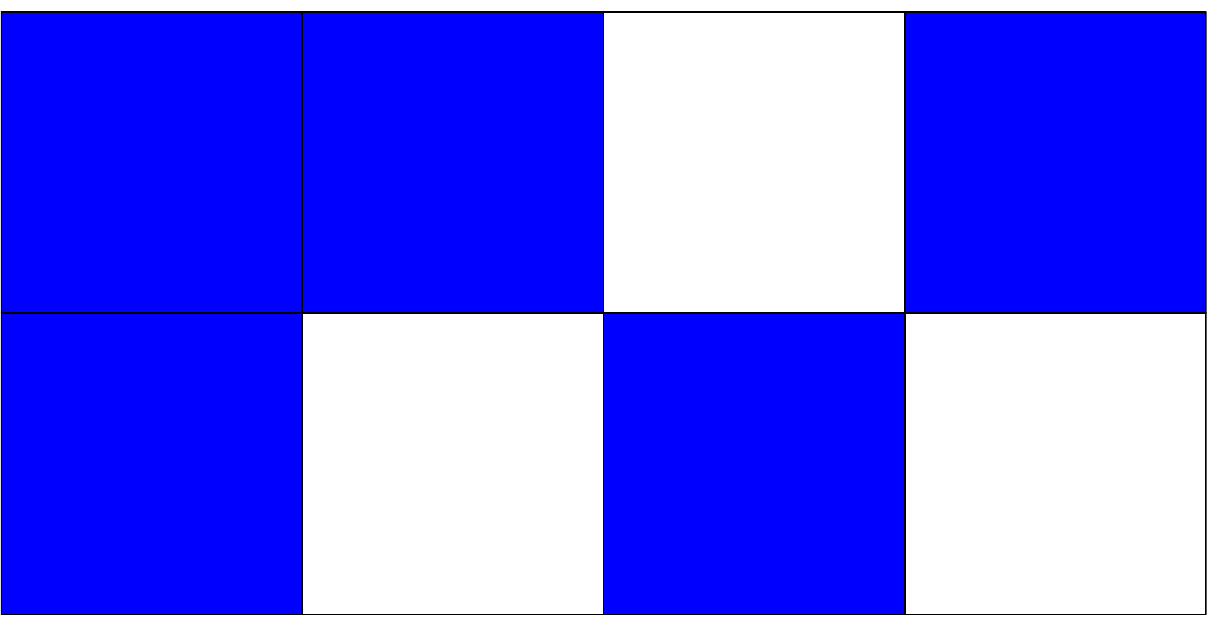} & 2x5 & \includegraphics[width=0.160000\linewidth]{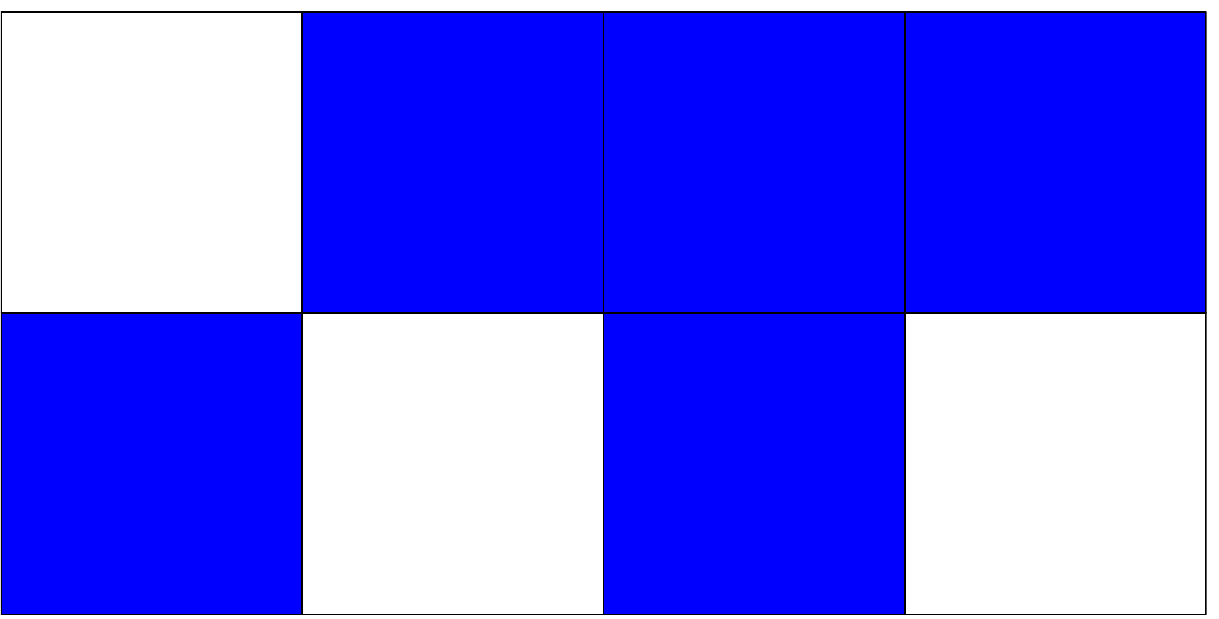} & 4x10 \\ \hline
& & & \\ 
\includegraphics[width=0.160000\linewidth]{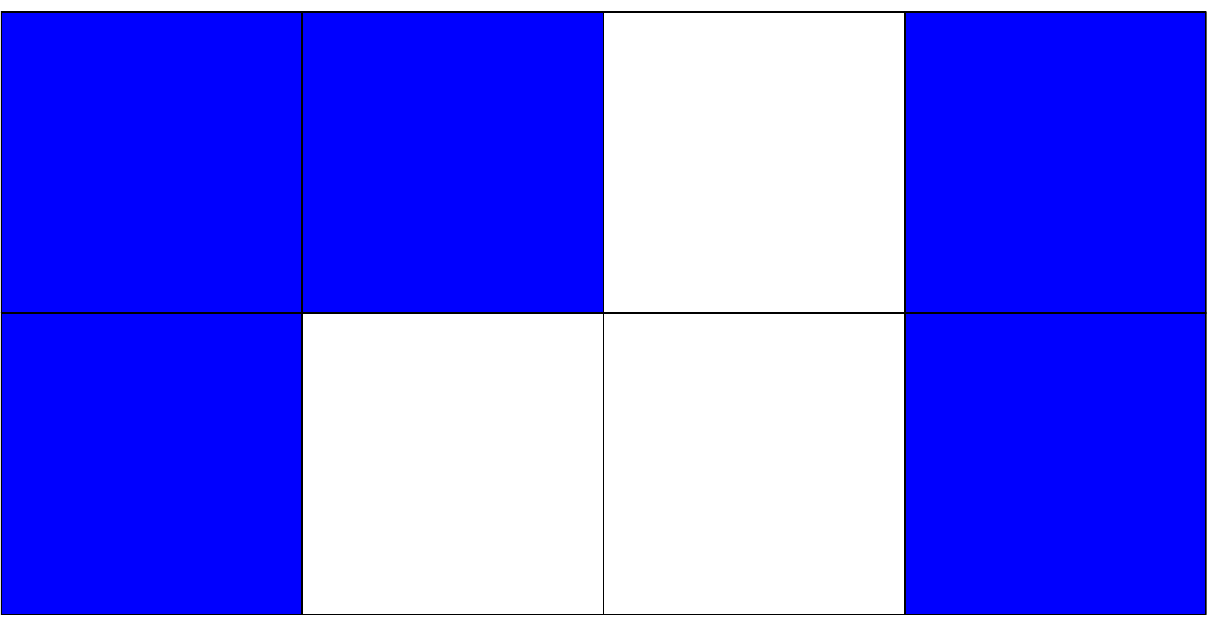} & 2x5 & \includegraphics[width=0.160000\linewidth]{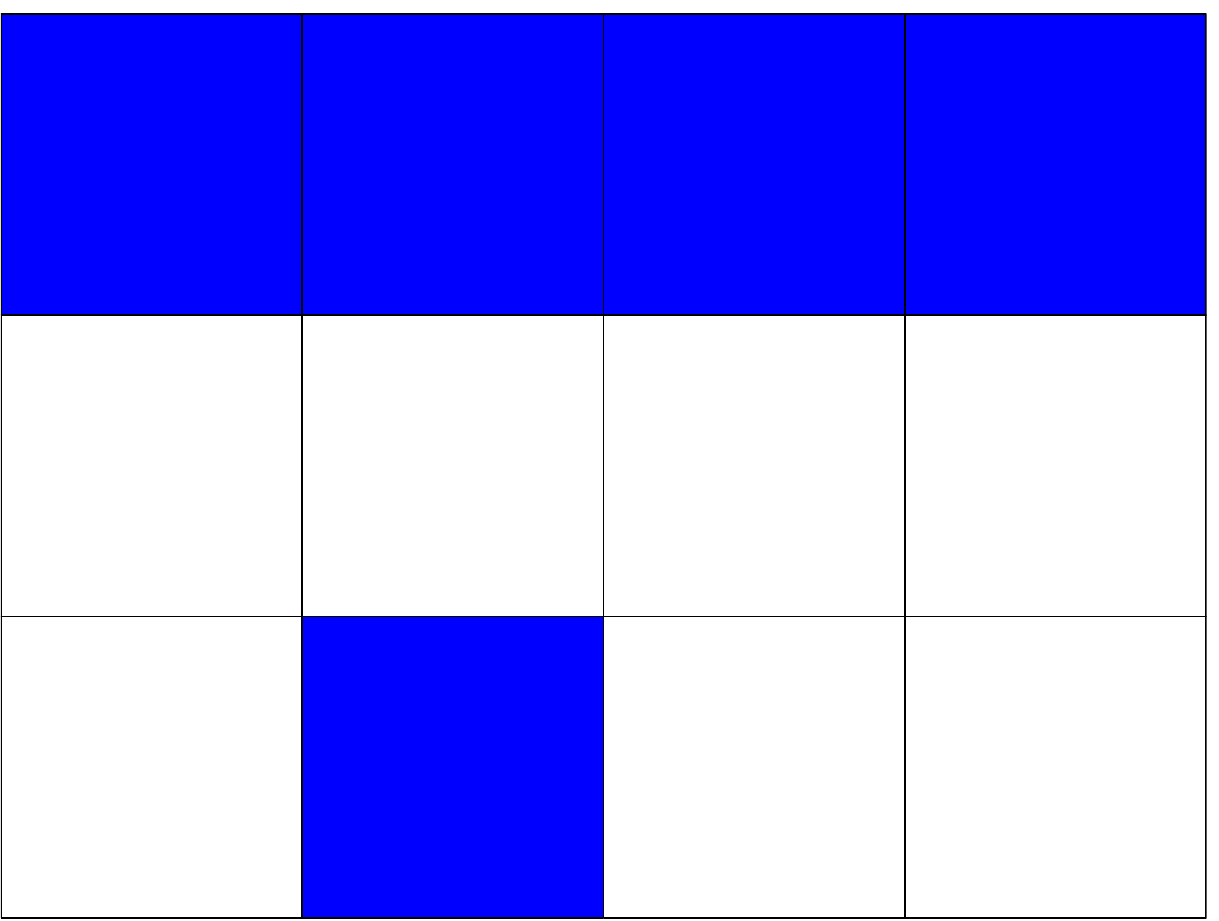} & 5x6 \\ \hline
& & & \\ 
\includegraphics[width=0.160000\linewidth]{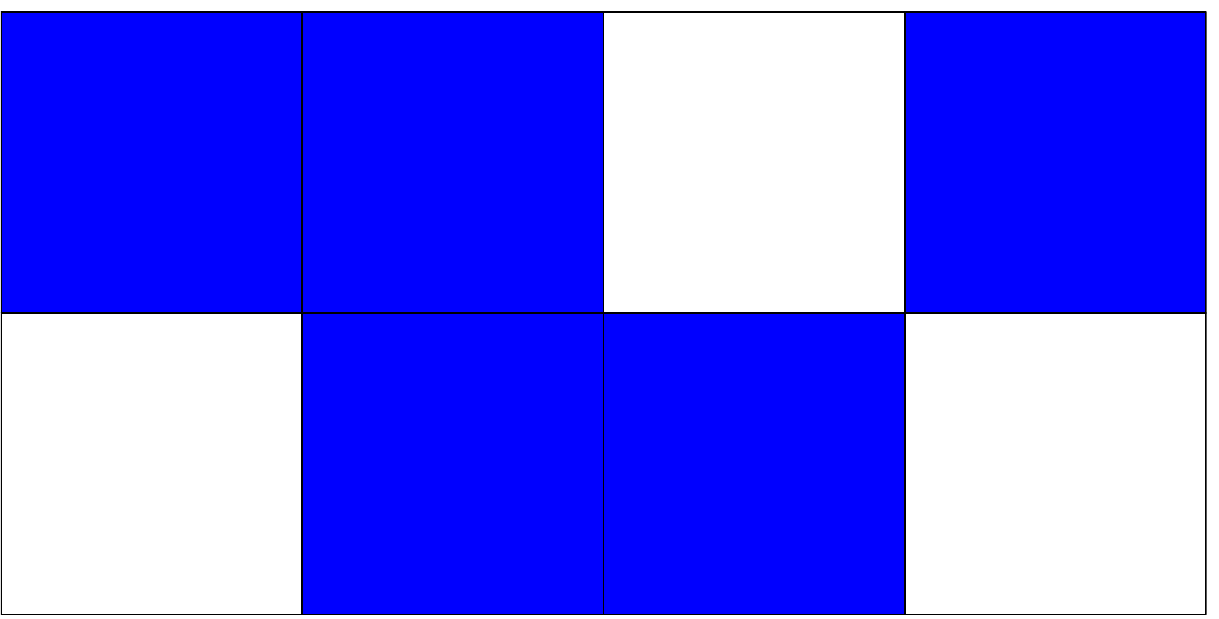} & 4x10 & \includegraphics[width=0.200000\linewidth]{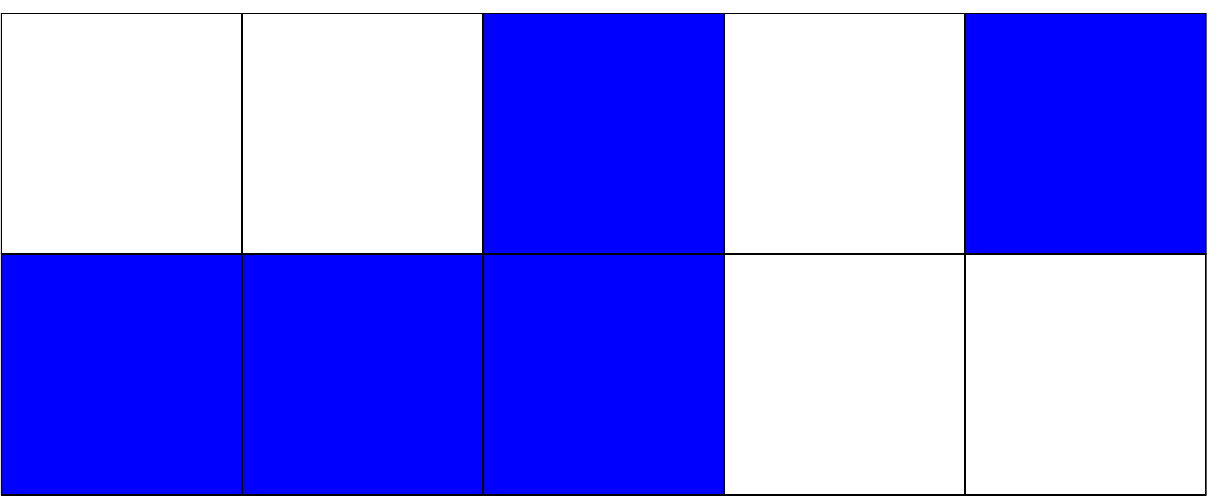} & 10x12 \\ \hline
& & & \\ 
\includegraphics[width=0.120000\linewidth]{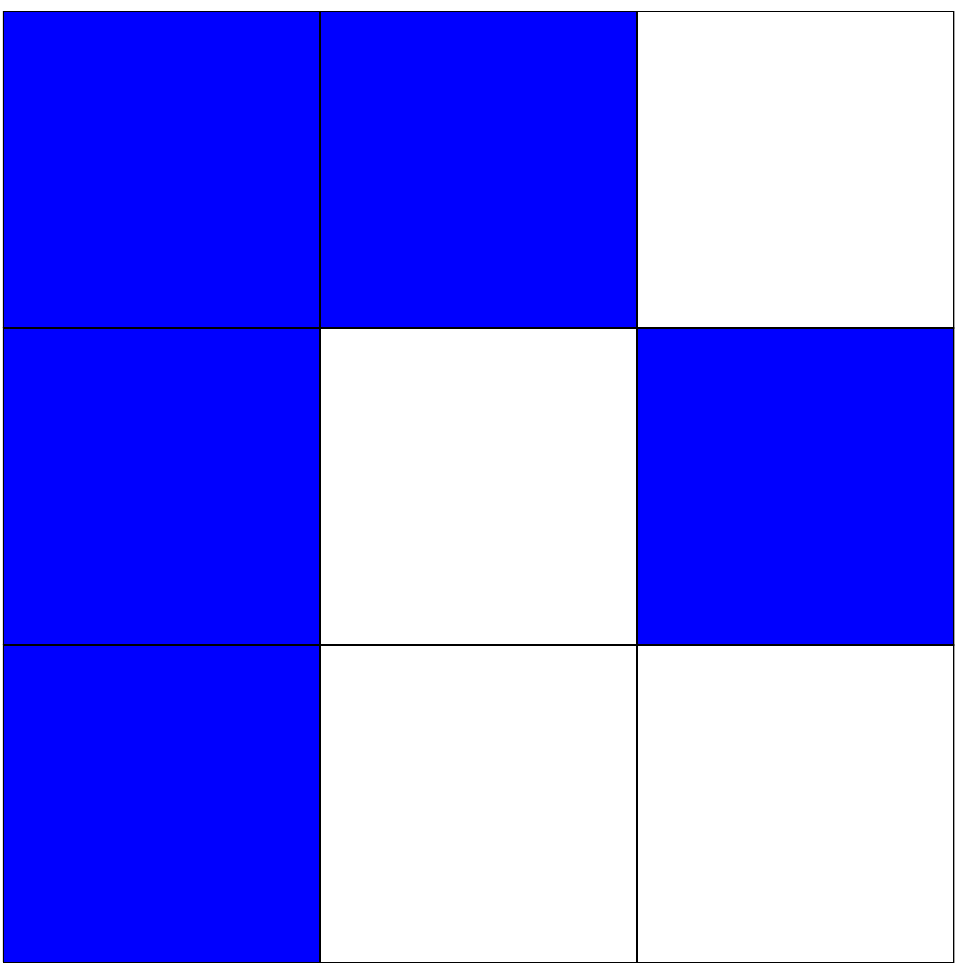} & 6x10 & \includegraphics[width=0.120000\linewidth]{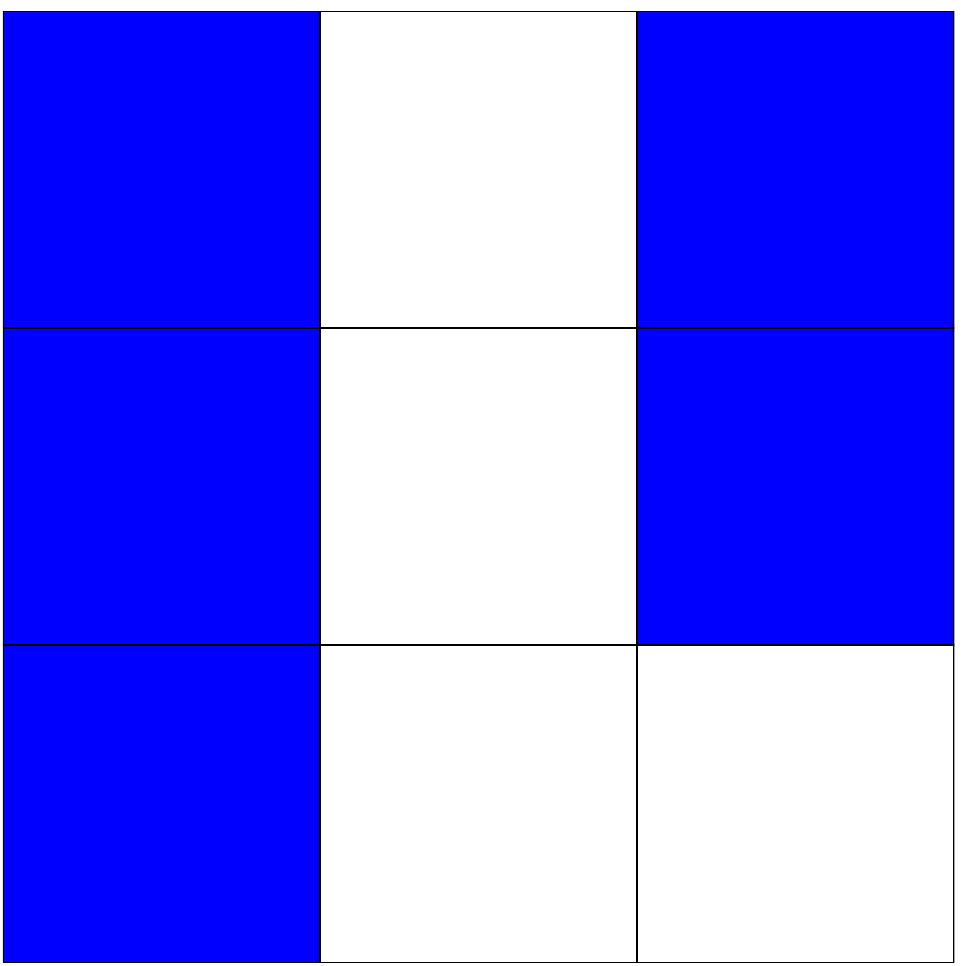} & 4x5 \\ \hline
& & & \\ 
\includegraphics[width=0.120000\linewidth]{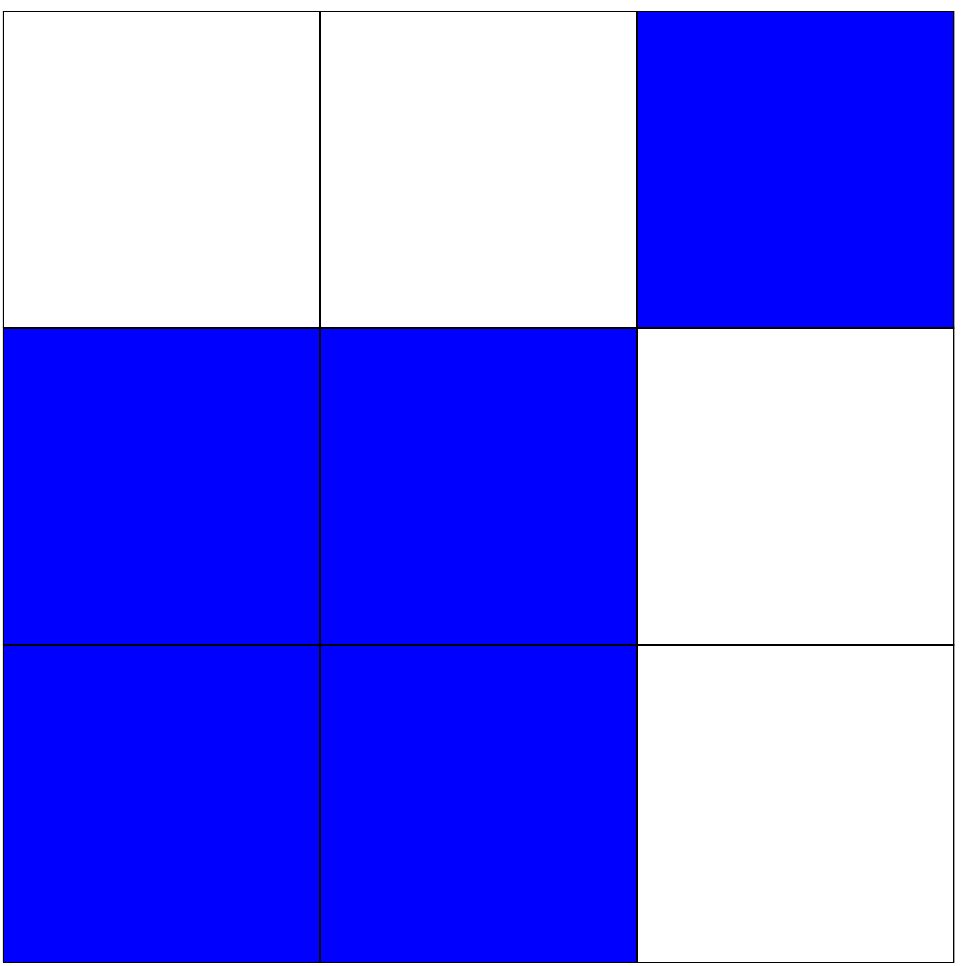} & 20x24 & \includegraphics[width=0.200000\linewidth]{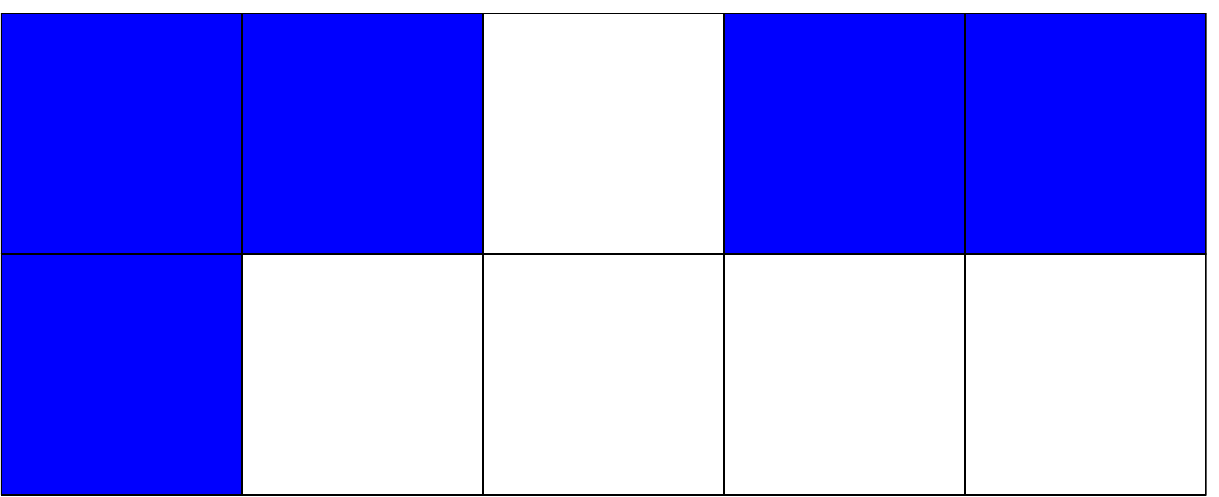} & 11x40 \\ \hline
\end{tabular}
\caption{Smallest known solutions for n=5 and k=1.}
\label{tab:trivial-51b}
\end{table}

%non-trivial
\begin{table}[!htpb]
\centering
\begin{tabular}{|c|c|c|}
\hline
Piece & Solution & Size\\ \hline
 & & \\ 
\includegraphics[width=0.076923\linewidth]{tile54_11.pdf} &\includegraphics[width=0.923077\linewidth]{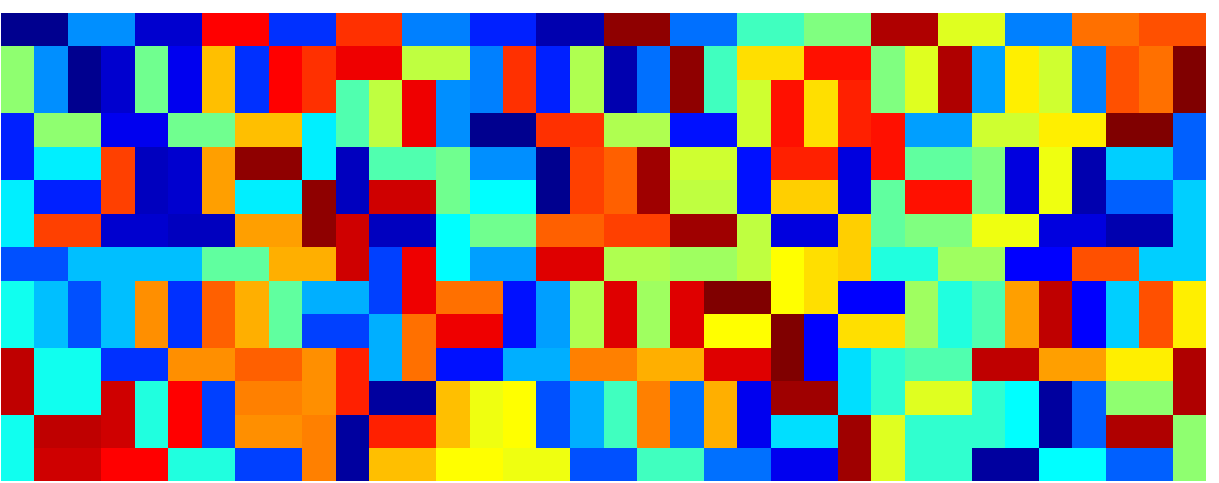} &14x36 \\ \hline
 & & \\ 
\includegraphics[width=0.120000\linewidth]{tile54_15.pdf} &\includegraphics[width=0.480000\linewidth]{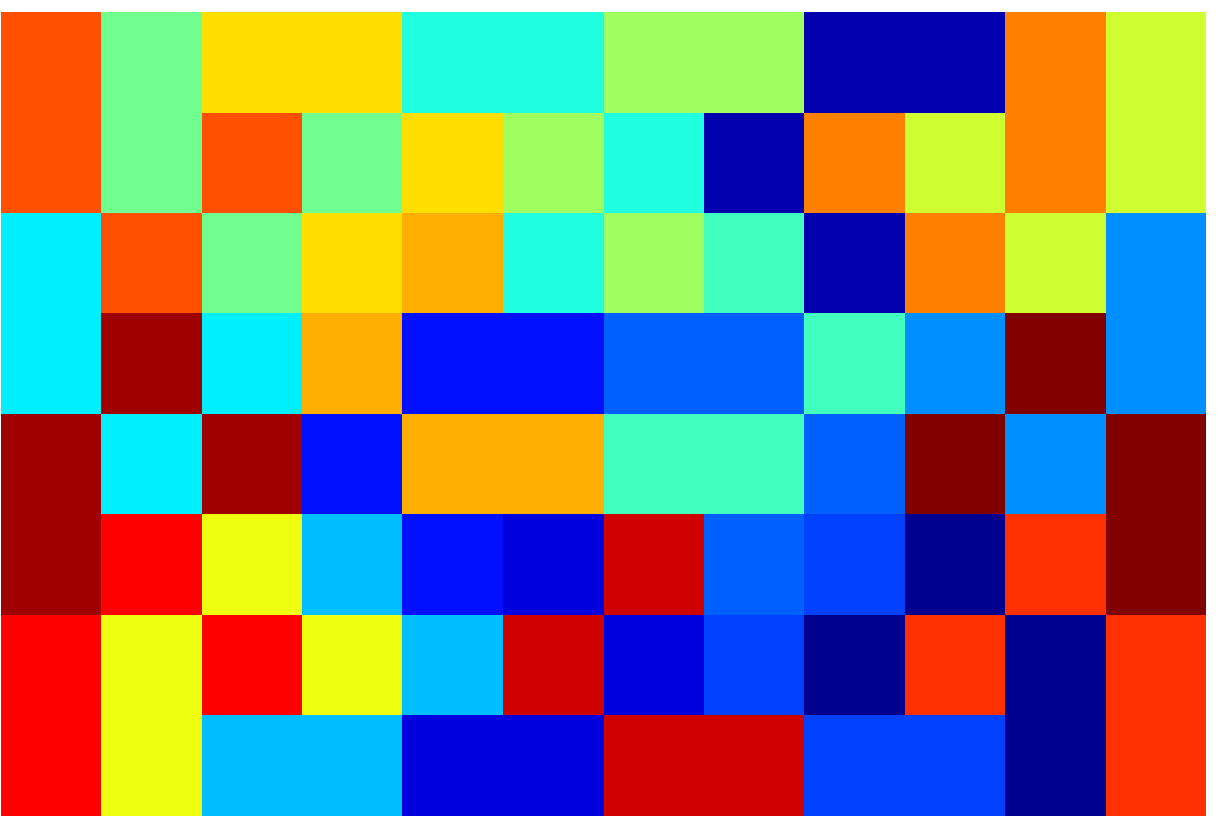} &8x12 \\ \hline
 & & \\ 
\includegraphics[width=0.230769\linewidth]{tile65_2.pdf} &\includegraphics[width=0.769231\linewidth]{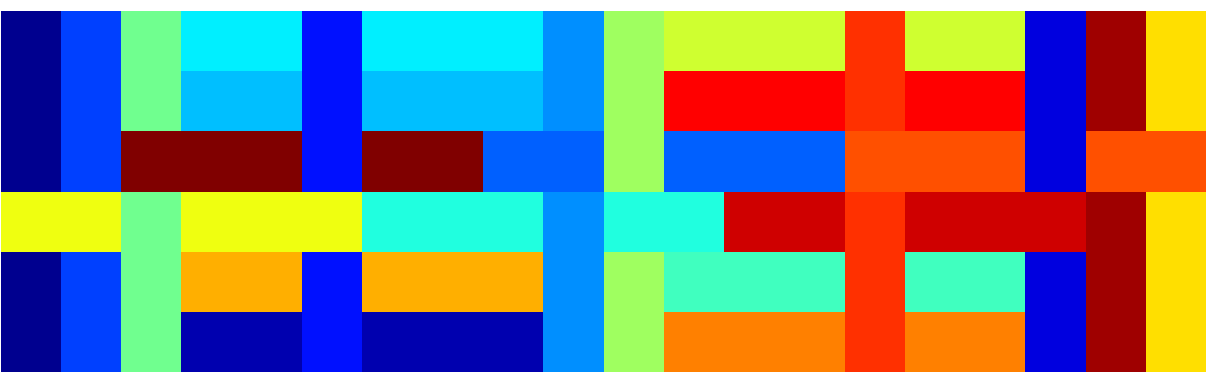} &6x20 \\ \hline
 & & \\ 
\includegraphics[width=0.200000\linewidth]{tile65_3.pdf} &\includegraphics[width=0.400000\linewidth]{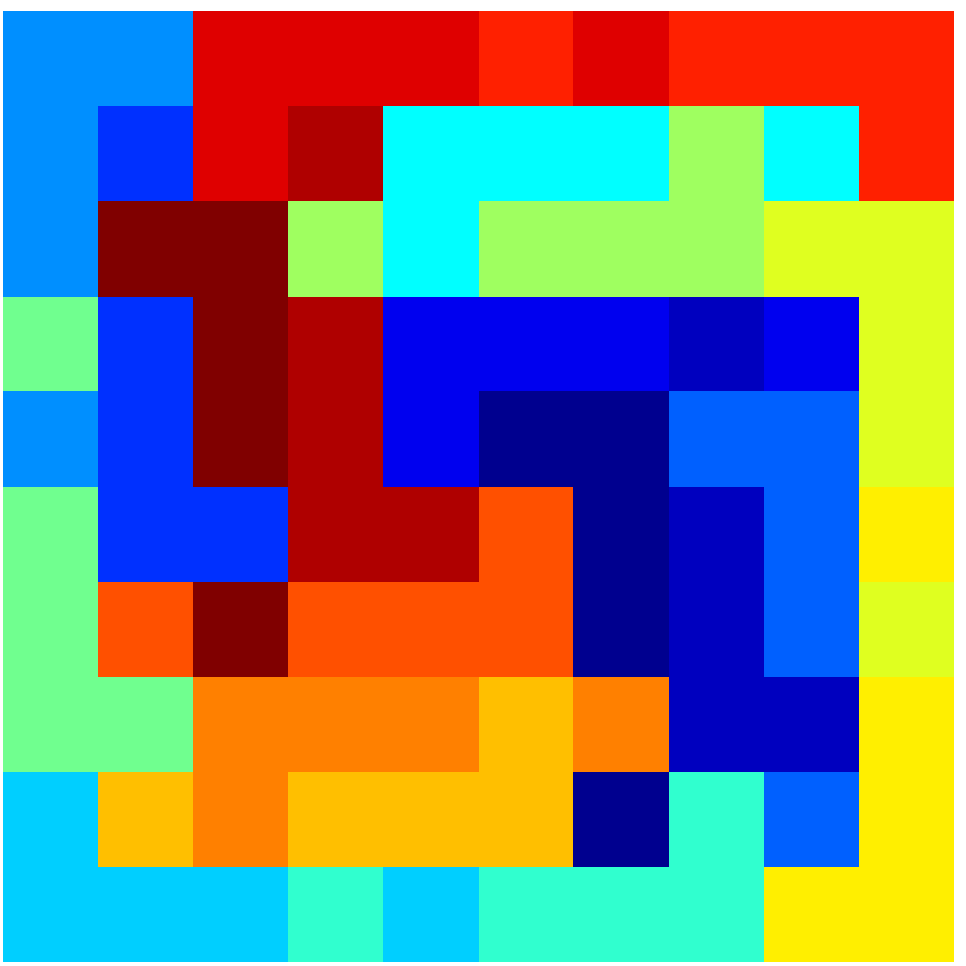} &10x10 \\ \hline
\end{tabular}
\caption{Non-trivial solutions 1.}
\label{tab:cool1}
\end{table}

\begin{table}[!htpb]
\centering
\begin{tabular}{|c|c|c|}
\hline
Piece & Solution & Size\\ \hline
 & & \\ 
\includegraphics[width=0.200000\linewidth]{tile65_5.pdf} &\includegraphics[width=0.400000\linewidth]{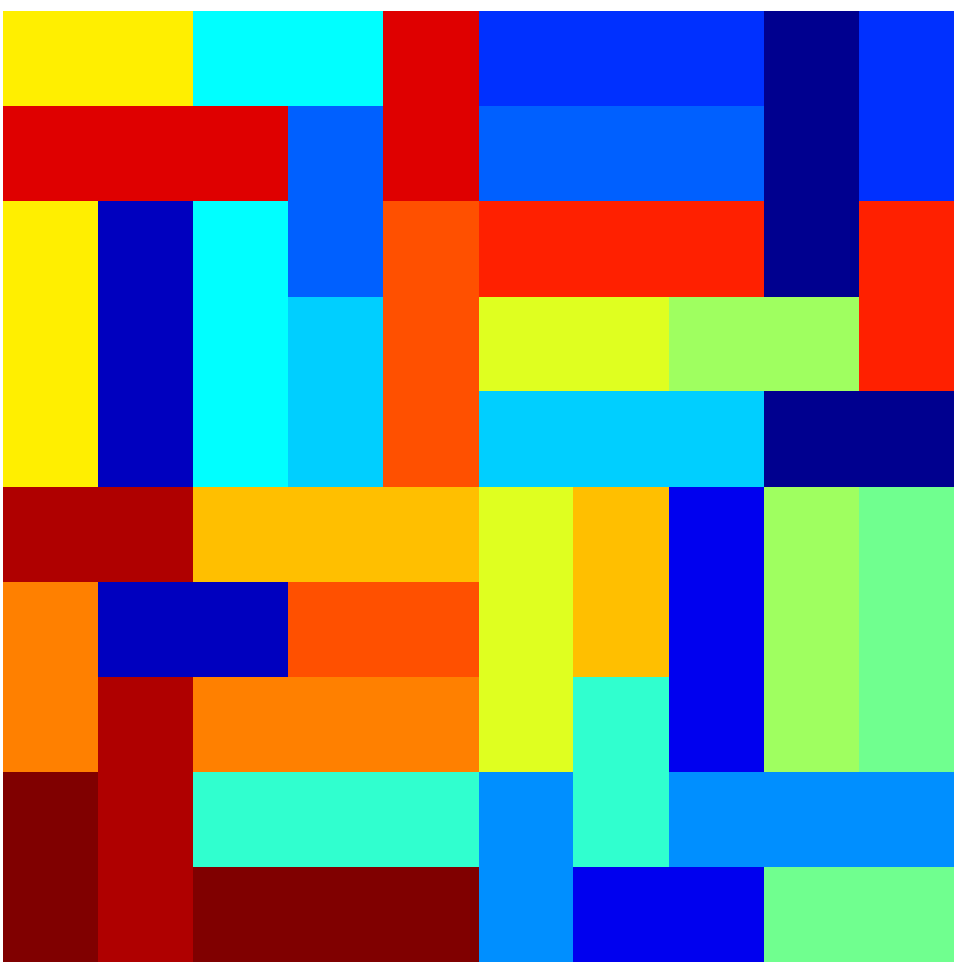} &10x10 \\ \hline
 & & \\ 
\includegraphics[width=0.200000\linewidth]{tile65_8.pdf} &\includegraphics[width=0.600000\linewidth]{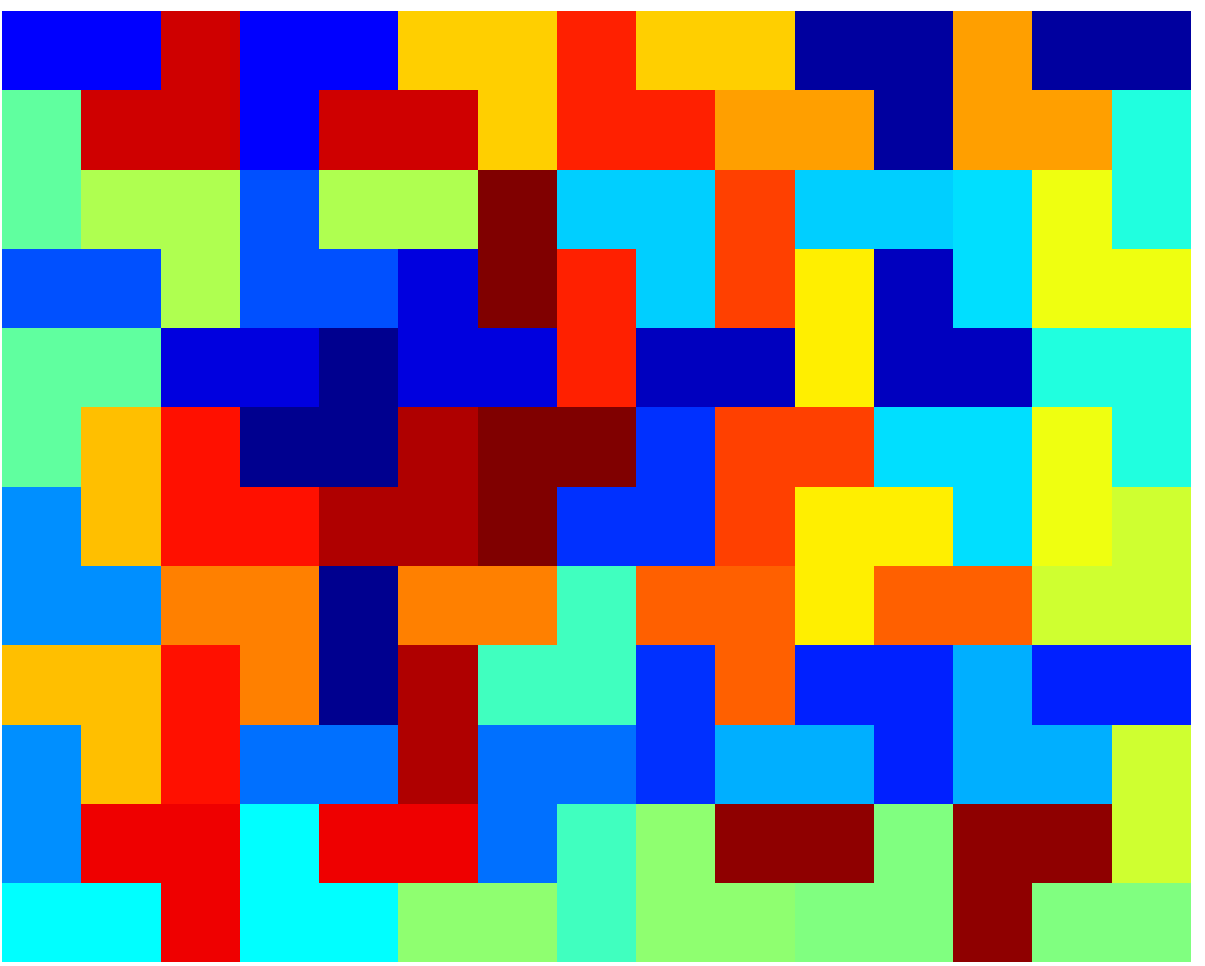} &12x15 \\ \hline
 & & \\ 
\includegraphics[width=0.111111\linewidth]{tile65_4.pdf} &\includegraphics[width=0.888888\linewidth]{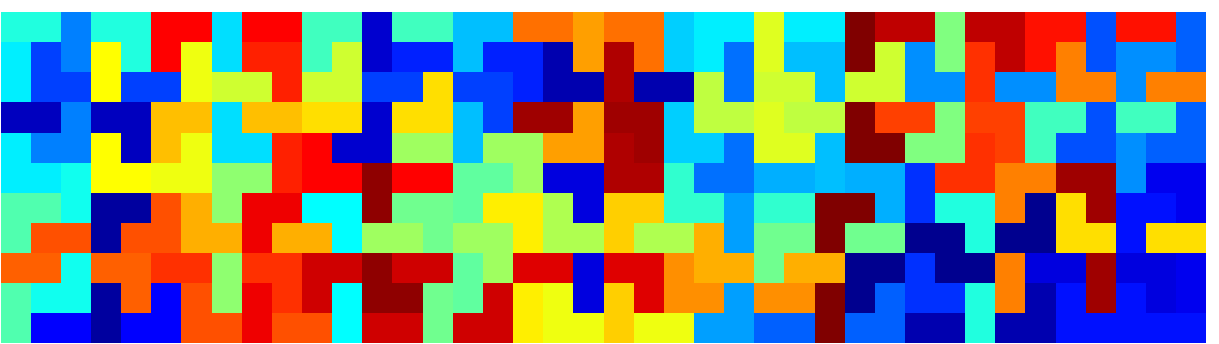} &11x40 \\ \hline
\end{tabular}
\caption{Non-trivial solutions 2.}
\label{tab:cool2}
\end{table}

\begin{table}[!htpb]
\centering
\begin{tabular}{|c|c|c|}
\hline
Piece & Solution & Size\\ \hline
 & & \\ 
\includegraphics[width=0.200000\linewidth]{tile65_10.pdf} &\includegraphics[width=0.720000\linewidth]{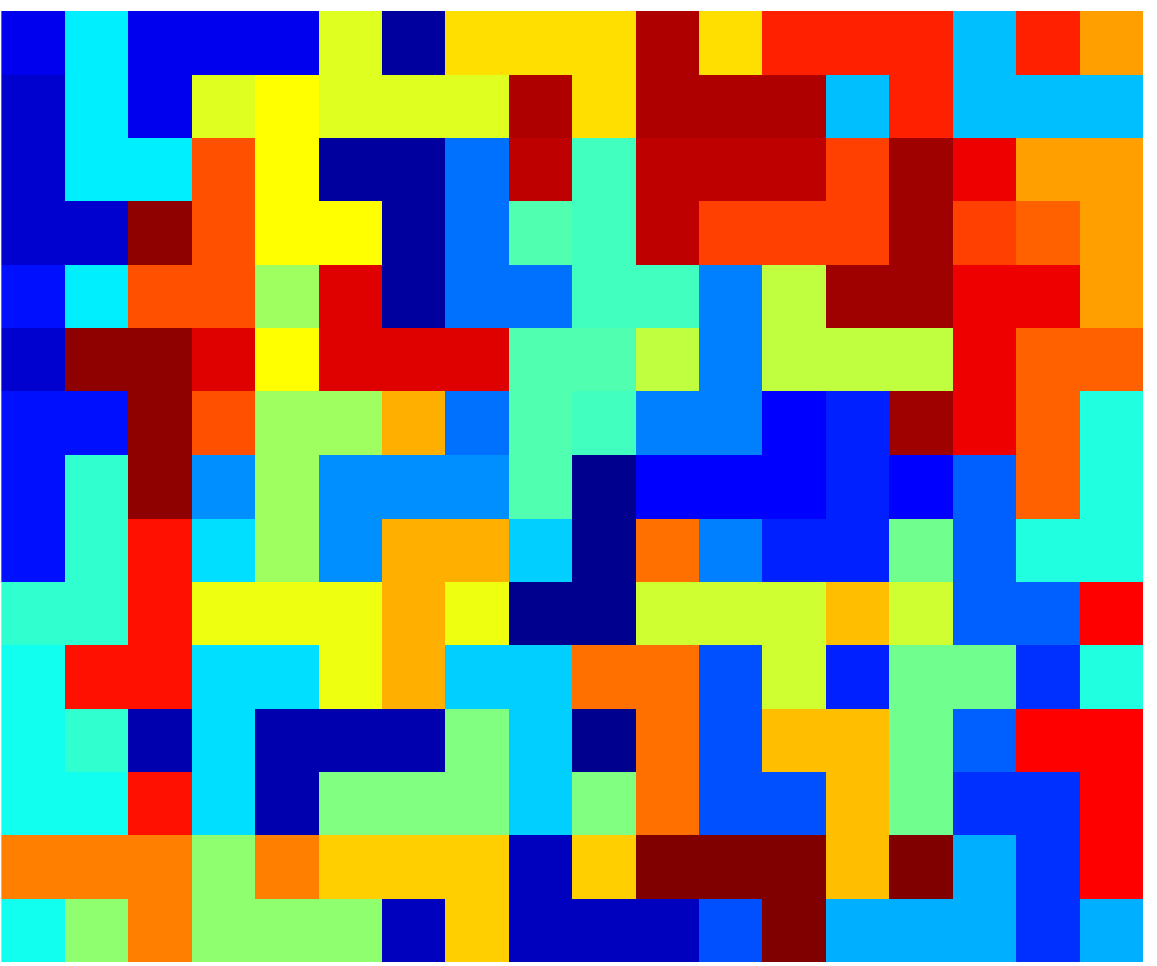} &15x18 \\ \hline
 & & \\ 
\includegraphics[width=0.185185\linewidth]{tile65_13.pdf} &\includegraphics[width=0.814815\linewidth]{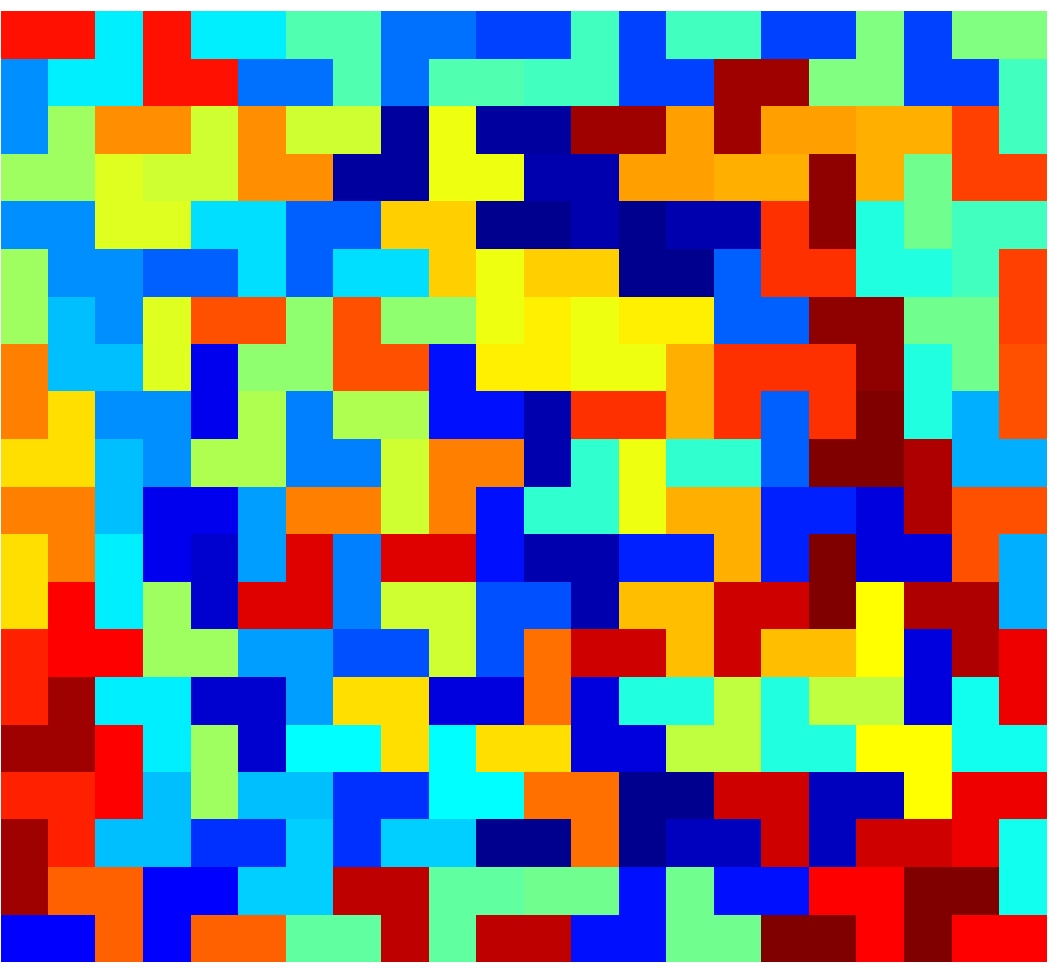} &20x22 \\ \hline
\end{tabular}
\caption{Non-trivial solutions 3.}
\label{tab:cool3}
\end{table}

\pagebreak
\begin{table}[!htpb]
\centering
\begin{tabular}{|c|c|c|}
\hline
Piece & Solution & Size\\ \hline
 & & \\ 
\includegraphics[width=0.160000\linewidth]{tile65_16.pdf} &\includegraphics[width=0.800000\linewidth]{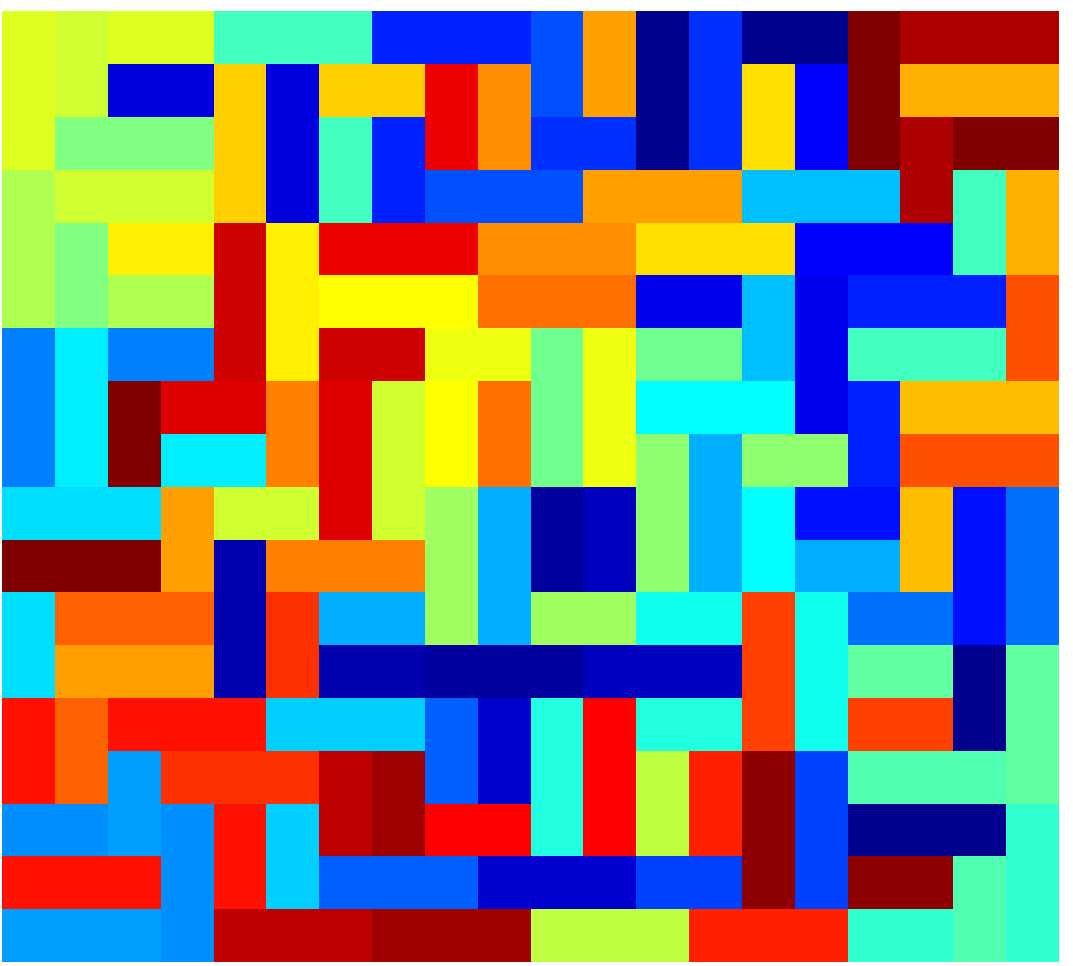} &18x20 \\ \hline
\end{tabular}
\caption{Non-trivial solutions 4.}
\label{tab:cool4}
\end{table}

\pagebreak
\begin{table}[!htpb]
\centering
\begin{tabular}{|c|c|c|}
\hline
Piece & Solution & Size\\ \hline
 & & \\ 
\includegraphics[width=0.160000\linewidth]{tile65_17.pdf} &\includegraphics[width=0.800000\linewidth]{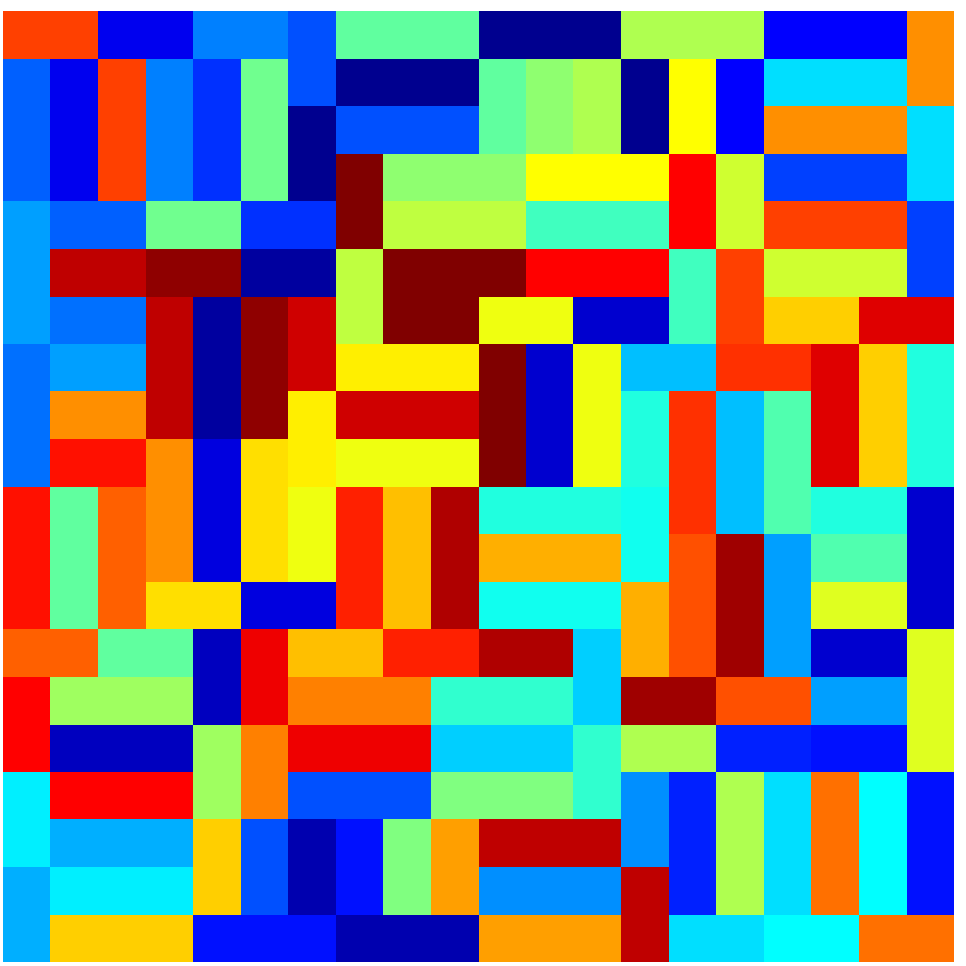} &20x20 \\ \hline
 & & \\ 
\includegraphics[width=0.160000\linewidth]{tile65_22.pdf} &\includegraphics[width=0.400000\linewidth]{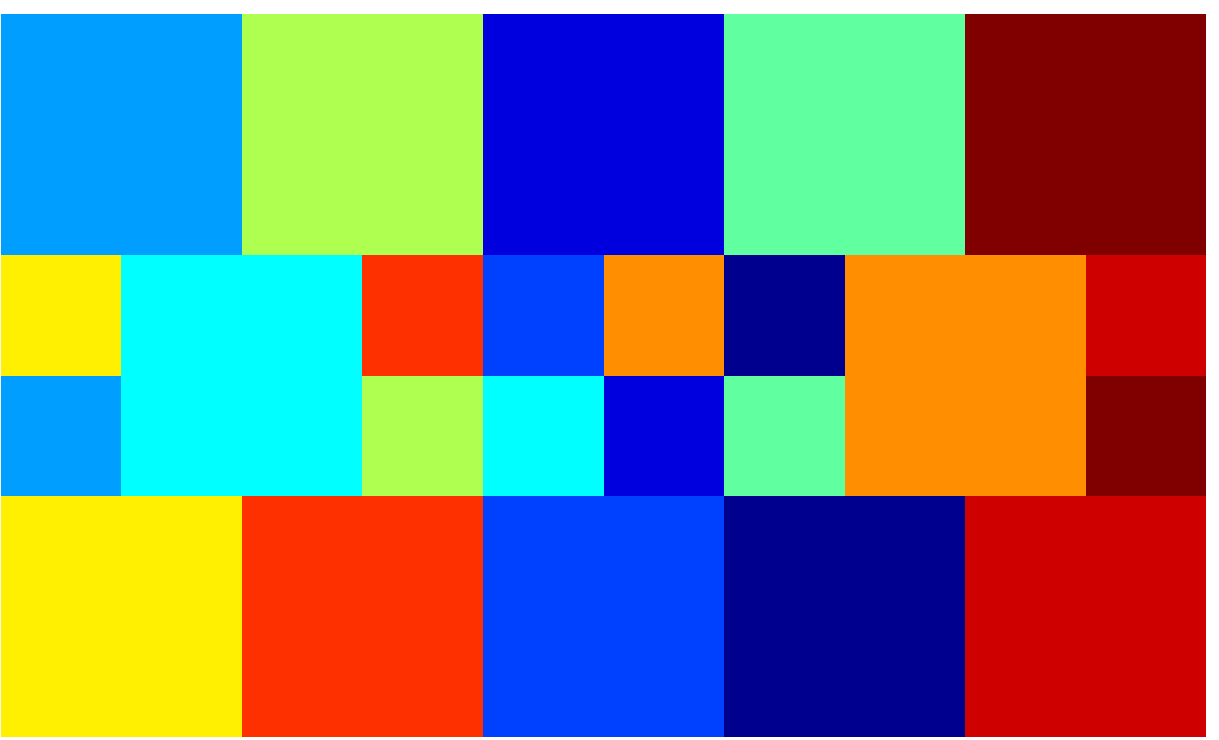} &6x10 \\ \hline
\end{tabular}
\caption{Non-trivial solutions 5.}
\label{tab:cool5}
\end{table}

\pagebreak
\begin{table}[!htpb]
\centering
\begin{tabular}{|c|c|c|}
\hline
Piece & Solution & Size\\ \hline
 & & \\ 
\includegraphics[width=0.200000\linewidth]{tile65_44.pdf} &\includegraphics[width=0.480000\linewidth]{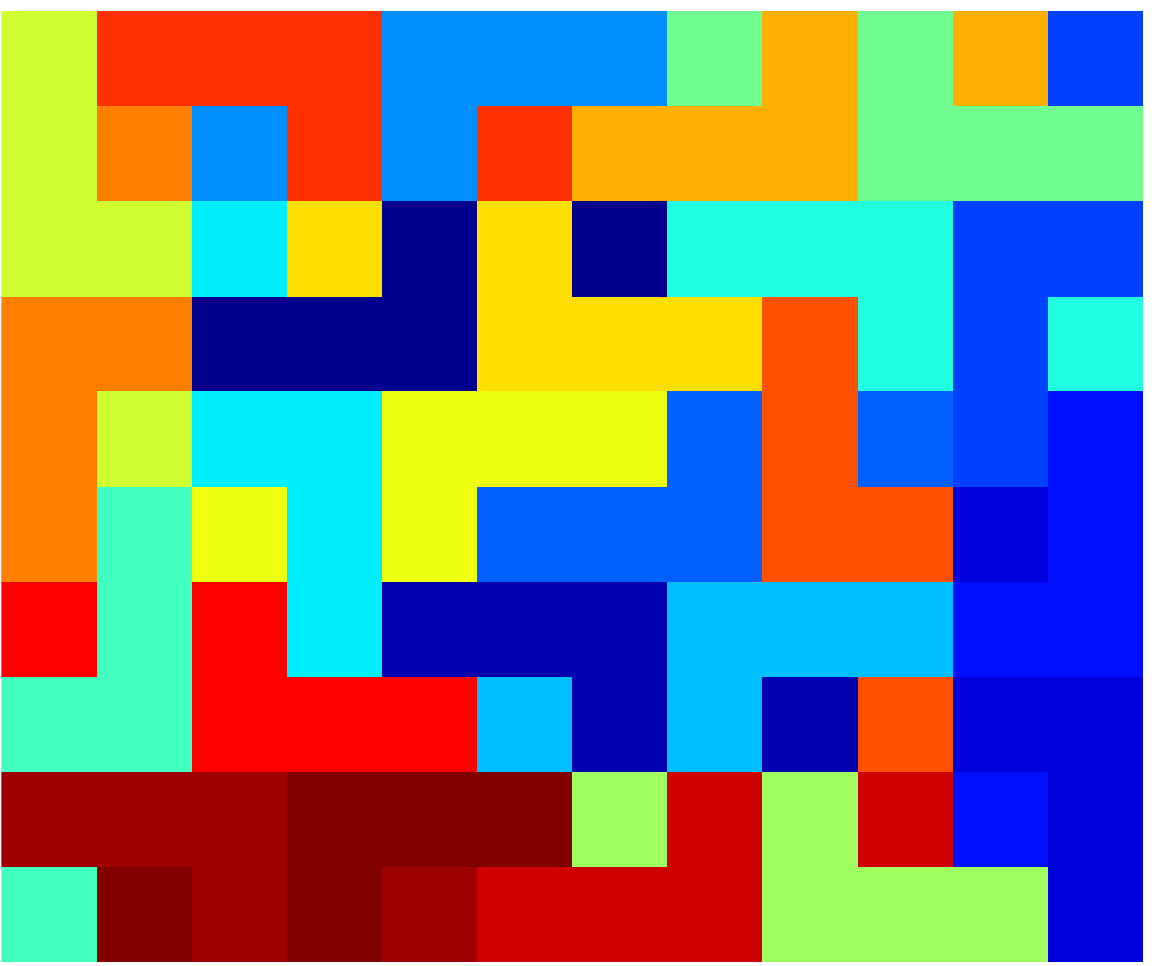} &10x12 \\ \hline
 & & \\ 
\includegraphics[width=0.120000\linewidth]{tile65_60.pdf} &\includegraphics[width=0.400000\linewidth]{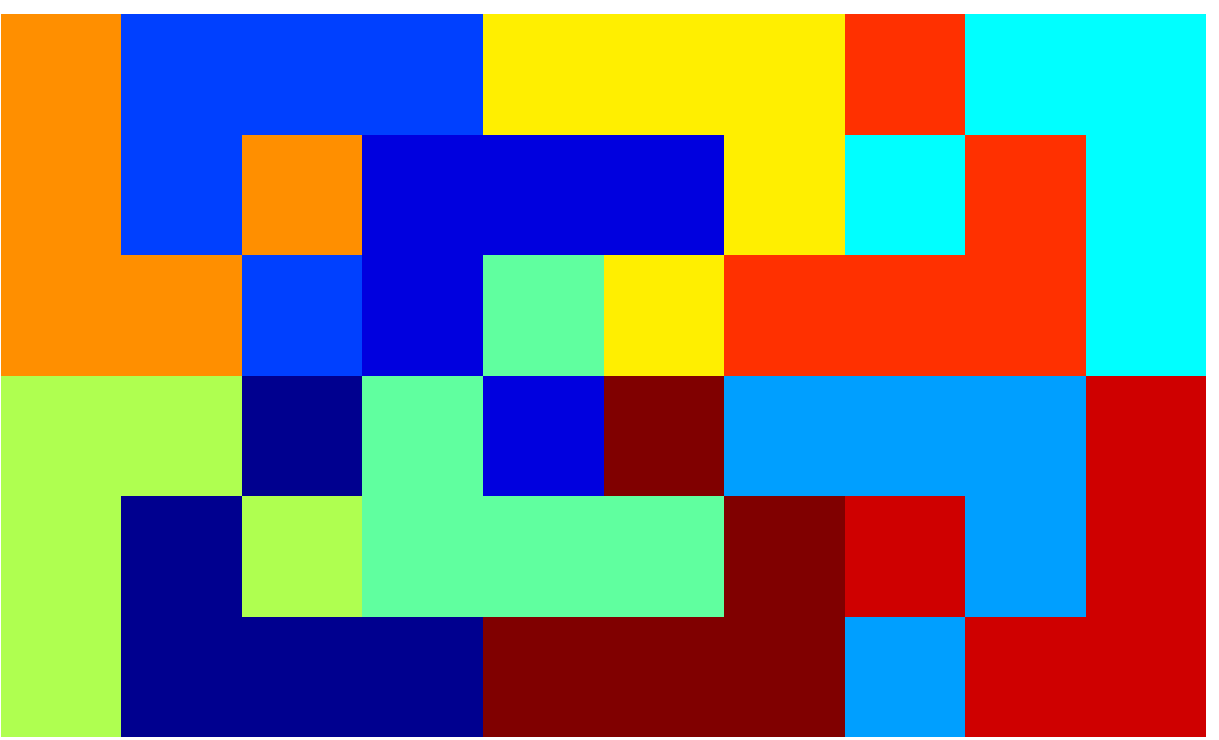} &6x10 \\ \hline
 & & \\ 
\includegraphics[width=0.111111\linewidth]{tile65_67.pdf} &\includegraphics[width=0.888889\linewidth]{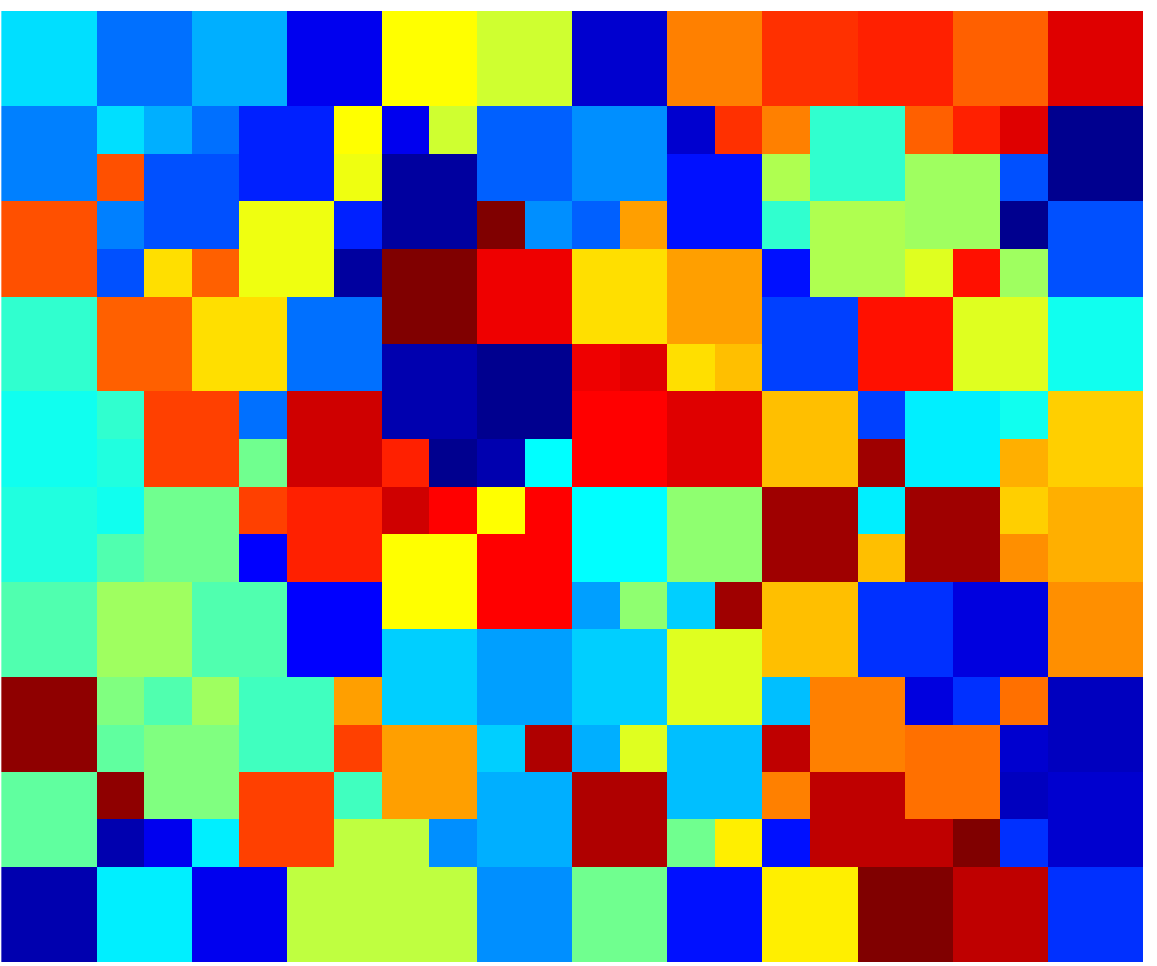} &20x24 \\ \hline
\end{tabular}
\caption{Non-trivial solutions 6.}
\label{tab:cool6}
\end{table}

%%%%%%%%%%%%%%%%%%%%%%%%%%%%%%%%%%%%%%%%%%%%%%%%%%%%%%%%%%%%%%%%%%
%impossible pieces

\pagebreak
\subsection{Impossible polyominoes}

In some trivial cases it is possible to determine if a polyomino is unrectifiable. First we define a tile's \emph{bounding box}:

\begin{defn}
A tile's \emph{bounding box} is the smallest rectangle that surrounds the tile's visible cells.
\end{defn}

\begin{theorem}
A tile is not rectifiable if every corner cell of its bounding box is transparent.
\label{th:impossible2}
\end{theorem}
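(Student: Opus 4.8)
The plan is to argue by contradiction. Suppose a tile $P$ whose four bounding-box corner cells are all transparent nonetheless tiles some rectangle $\mathcal{R}$. Fix coordinates so that the top-left cell of $\mathcal{R}$ is $(0,0)$, with rows increasing downward and columns increasing rightward; then every cell of $\mathcal{R}$ has row index $\geq 0$ and column index $\geq 0$. A complete tiling must cover \emph{every} cell of $\mathcal{R}$ by a visible square (a transparent square contributes no coverage), so in particular the corner cell $(0,0)$ is covered by a visible square belonging to some placed copy of $P$; call this copy $P'$.

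The key step is to pin down the bounding box of $P'$ relative to $(0,0)$. First I would record that the hypothesis is invariant under the symmetries allowed for free polyominoes: a rotation or reflection merely permutes the four corner cells of the bounding box among themselves, so if all four are transparent in the reference orientation they are all transparent in whatever orientation $P'$ is actually placed. Next, since $P'$ lies inside $\mathcal{R}$, every visible square of $P'$ has row index $\geq 0$ and column index $\geq 0$; and because $P'$ has a visible square exactly at $(0,0)$, the minimum row index and minimum column index over the visible squares of $P'$ are both $0$. Hence the top-left corner cell of the bounding box of $P'$ — the cell at (minimum row, minimum column) — is precisely the cell $(0,0)$.

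This gives the contradiction at once: the cell $(0,0)$ is a corner cell of the bounding box of $P'$, hence transparent by hypothesis, yet $(0,0)$ is covered by a visible square of $P'$. A single cell of a tile cannot be simultaneously visible and transparent, so no such tiling exists, and $P$ is unrectifiable.

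The reasoning is short, and I expect the only delicate point to be the bookkeeping in the second paragraph: one must verify that the visible square sitting at the rectangle's corner forces a \emph{corner} of the tile's bounding box to coincide with it, rather than merely landing somewhere in the interior of $\mathcal{R}$, and one must check that the "all four corners transparent" hypothesis survives the rotation or reflection used in placing $P'$ — which is exactly why all four corners, and not just one, are required in the statement. Everything else is immediate, and since the same argument applies verbatim at each of the four corners of $\mathcal{R}$, choosing the top-left corner is without loss of generality.
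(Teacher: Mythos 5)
Your proof is correct and takes essentially the same approach as the paper: the paper's entire argument is the single observation that a tile whose bounding-box corner cells are all transparent cannot cover a corner of the rectangle, which is exactly the contradiction you derive. Your version merely supplies the bookkeeping the paper leaves implicit --- the invariance of the hypothesis under rotations and reflections, and the coordinate argument forcing the placed tile's bounding-box corner to coincide with the rectangle's corner cell $(0,0)$.
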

\begin{proof}
If every corner cell of a polyomino's bounding box is transparent then 
it cannot tile a corner. Hence such a polyomino cannot tile a rectangle.
\end{proof}
For example, this theorem can applied to the right-most tile in Table~\ref{tab:impossible-41}.
Note that there are rectifiable tiles with 3 (out of 4) transparent corners in their bounding box,
for example see second row in Table~\ref{tab:cool1}.

\begin{defn}
An empty (transparent) square cannot be \emph{filled} with a tile if there is no way of legally placing that tile such that
the square becomes visible.
\end{defn}
\begin{theorem}
A tile is not rectifiable if it has a transparent square inside its bounding box that cannot be filled with that tile.
\label{th:impossibleHole}
\end{theorem}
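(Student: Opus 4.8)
The plan is to argue by contradiction, in the same spirit as Theorem~\ref{th:impossible2}: I will assume the tile does tile some rectangle and then exhibit precisely the ``fill'' of $s$ that the hypothesis forbids. First I would fix one copy of the tile, call it $T_0$, occurring in a hypothetical tiling of a rectangle $R$, and let $s$ denote the offending transparent square lying inside $T_0$'s bounding box. The first key observation is that the bounding box of $T_0$ is contained in $R$: each of its four sides is pinned by a visible cell of $T_0$, and every visible cell of every placed copy must lie inside $R$ to be part of the tiling; since $R$ is a rectangle, the whole bounding box, and in particular the cell $s$, lies inside $R$.

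Next I would invoke the meaning of a completed tiling. Because the tiling fills $R$ entirely, every cell of $R$ must be covered by exactly one visible square of some placed copy; this applies to $s$. Let $T_1$ be the copy whose visible square covers $s$. Since $s$ is one of $T_0$'s transparent squares inside the bounding box, $T_0$ does not cover $s$ with a visible square, so necessarily $T_1 \neq T_0$.

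The final step is to recognise $T_1$ as a legal filling of $s$ in the sense of the preceding definition, which contradicts the hypothesis. In any valid tiling the visible squares of $T_1$ do not coincide with the visible squares of $T_0$, so $T_1$ is already a legal placement relative to $T_0$ that makes $s$ visible; dropping all the other tiles and keeping only $T_0$ merely removes overlap constraints, so the placement $T_1$ remains legal in that reduced setting. This is exactly a way of filling $s$, contradicting the assumption that $s$ cannot be filled. Hence no such tiling of a rectangle exists, and the tile is unrectifiable.

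I expect the only genuinely delicate point to be the first step, namely arguing that $s$ is forced to be a cell of $R$ that must be covered; the statement is vacuous unless ``inside the bounding box'' guarantees that $s$ lies in the interior of any rectangle the tile could tile, so the containment of the bounding box in $R$ is what carries the argument. Everything after that is a direct appeal to the definition of a valid (completely filled) tiling and to the definition of a fillable square.
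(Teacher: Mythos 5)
Your proof is correct and takes essentially the same approach as the paper: the paper's own (two-sentence) proof simply asserts that rectifiability requires every transparent square inside the bounding box to be fillable, which is precisely the contradiction you derive. Your write-up just supplies the details the paper leaves implicit --- that the bounding box (hence $s$) lies inside the tiled rectangle $R$, and that the tiling copy covering $s$ constitutes a legal fill of $s$ relative to $T_0$.
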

\begin{proof}
For a tile to be rectifiable, it must be possible to fill every transparent square inside its bounding box.
Hence if there is a transparent square that cannot be filled then the tile is not rectifiable.
\end{proof}
The above theorem can be applied to the tile shown in Figure~\ref{fig:impossibleHole}, where it is impossible to fill in the centre square.

\begin{figure}[!htpb]
\centering
\includegraphics[width=0.20\linewidth]{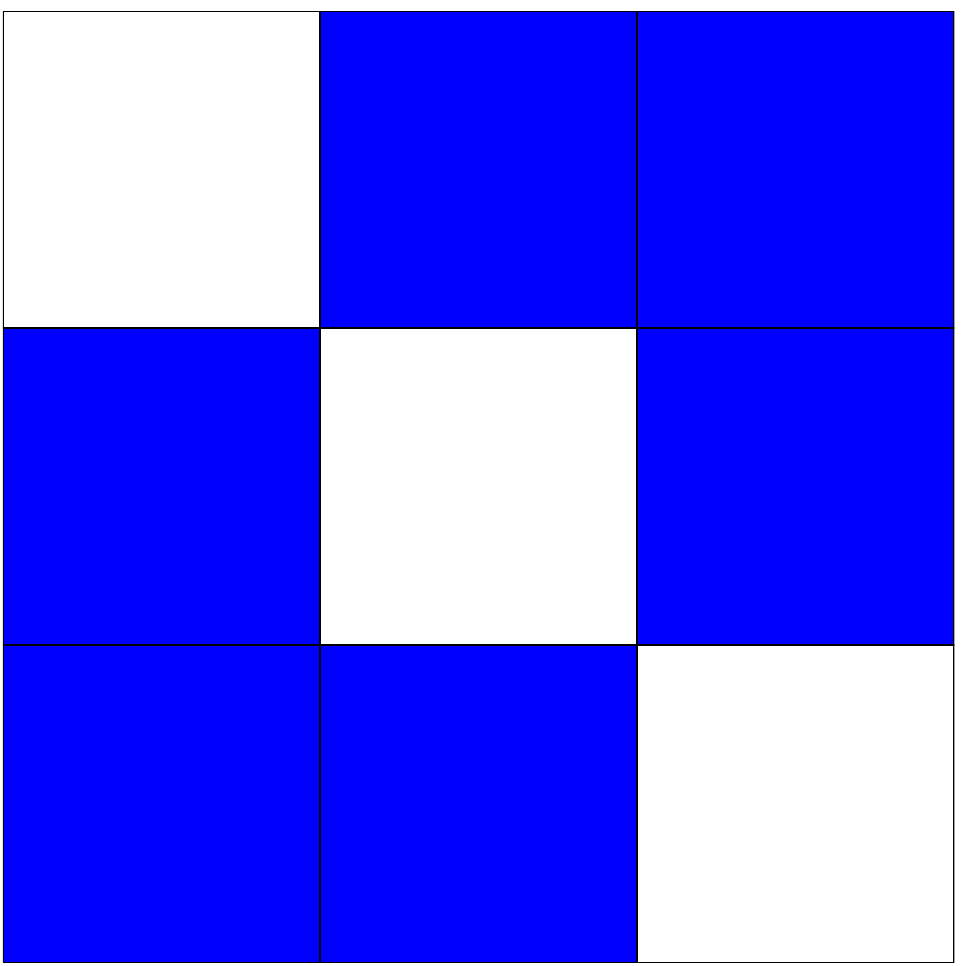}
\caption{This polyomino is unrectifiable, because it is impossible to fill its centre square.}
\label{fig:impossibleHole}
\end{figure}

Apart from the simple cases covered by Theorems~\ref{th:impossible2}-\ref{th:impossibleHole},
we do not have an efficient method for verifying that a piece is impossible.
In fact
there is no single pattern of visible and transparent squares whose existence in the polyomino would
signal that the polyomino is impossible. This is shown in Theorem~\ref{th:impossible}.

\begin{defn}
A polyomino $P$ \emph{contains} polyomino $T$, if $T$ can be placed on top of $P$ (superpositioned)
such that its visible squares lie on top of $P$'s visible squares and its transparent squares lie
on top of $P$'s transparent squares. Note that a polyomino always contains itself.
\end{defn}
\begin{theorem}
%For every pattern of visible and transparent squares, there is always a rectifiable polyomino containing this pattern.
For every polyomino $T$ there exists a rectifiable polyomino $P$ that contains $T$.
\label{th:impossible}
\end{theorem}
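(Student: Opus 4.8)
The plan is to realise $T$ as one ``half'' of a rectangle that is tiled by exactly two copies of the polyomino I construct, using a $180^\circ$ rotation to supply the complementary half. First I would record the right reformulation of rectifiability. Since the rules allow a visible square to sit on top of a transparent one, a family of placed copies tiles a rectangle precisely when their \emph{visible} cells cover every cell of the rectangle exactly once (a transparent cell overlapping anything, visible or transparent, is always legal). Consequently the holes of the tiles never obstruct a tiling on their own: each hole of one copy merely has to be covered by a visible cell of some other copy, and a tiling amounts to an exact cover of the rectangle by the visible-cell sets of the copies.

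Set up notation: let the bounding box of $T$ be $p\times q$, with visible cells $V_0$ and transparent (connecting) cells $H_0$, where $V_0\cap H_0=\emptyset$ and both $V_0,H_0$ lie inside the box. Consider the rectangle $R$ of size $2p\times q$ and identify the top $p\times q$ sub-box with the box of $T$. Let $\rho$ be the $180^\circ$ rotation of $R$ about its centre. Since $2p\times q$ has no central cell, $\rho$ is a fixed-point-free involution, so it partitions the cells of $R$ into two-element orbits, each orbit consisting of one cell in the top half and its image in the bottom half.

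Now I would build $P$ by choosing its visible set $V(P)\subseteq R$ to be a transversal of $\rho$: exactly one cell from each orbit, subject to $V_0\subseteq V(P)$ and $H_0\cap V(P)=\emptyset$. This is always possible, because every forced cell ($V_0$ in, $H_0$ out) lies in the top half, each orbit contains exactly one top cell, and the partner cell in the bottom half is unconstrained; so I set the top cell as required and give its partner the complementary value. Declaring every cell of $R$ outside $V(P)$ to be transparent, $P$ is transparent on $H_0$ and visible on $V_0$, hence $P$ contains $T$. By construction $V(P)\sqcup\rho(V(P))=R$, which says that $P$ together with its $180^\circ$ rotation covers $R$ exactly, with each copy's holes falling harmlessly on the other copy's visible cells. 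Therefore $P$ tiles the $2p\times q$ rectangle and is rectifiable.

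The substantive step is the reformulation of tiling as an exact cover by visible cells, combined with the observation that a fixed-point-free $180^\circ$ involution always admits a transversal respecting the two sign constraints; once that is in place the tiling is immediate. The only remaining fiddliness, which I would verify but not belabour, is confirming that the chosen $V(P)$ is a legitimate holey polyomino, i.e.\ that its visible squares can be connected through transparent squares with the minimal number of holes. This is easy to arrange using the freedom still available in the bottom half and at the unforced top cells, so I expect it to be a routine check rather than a real obstacle.
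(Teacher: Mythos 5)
Your proof is correct and is essentially the paper's own argument: the paper also builds $P$ on a $2p\times q$ box by placing $T$ in one half and the $180^\circ$-rotated complement of $T$'s visible cells in the other, so that $P$ and its $180^\circ$ rotation tile the rectangle. Your transversal-of-an-involution formulation is just a (slightly more general and more rigorous) phrasing of that same construction, with the paper's choice corresponding to making every unconstrained top-half cell transparent.
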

\begin{proof}
If $T$ is already rectifiable then set $P$ to $T$. Otherwise we need to show that 
any unrectifiable polyomino $T$ can be made rectifiable by adding extra
squares to it. Consider an arbitrary unrectifiable polyomino $T$ (such as the one in Figure~\ref{fig:impossible}(a)).
Create the inverse polyomino $T'$ by making all visible squares in $T$ transparent and vice versa.
Create a polyomino $P$ by rotating $T'$ 180 degrees and placing it next to $T$ (see Figure~\ref{fig:impossible}(b)).
Clearly $P$ contains $T$. $P$ is rectifiable, because it forms a rectangle when combined with another copy
(see Figure~\ref{fig:impossible}(c)).
\end{proof}

\begin{figure}[!htpb]
\centering
\begin{tabular}{ccc}
\includegraphics[width=0.18\linewidth]{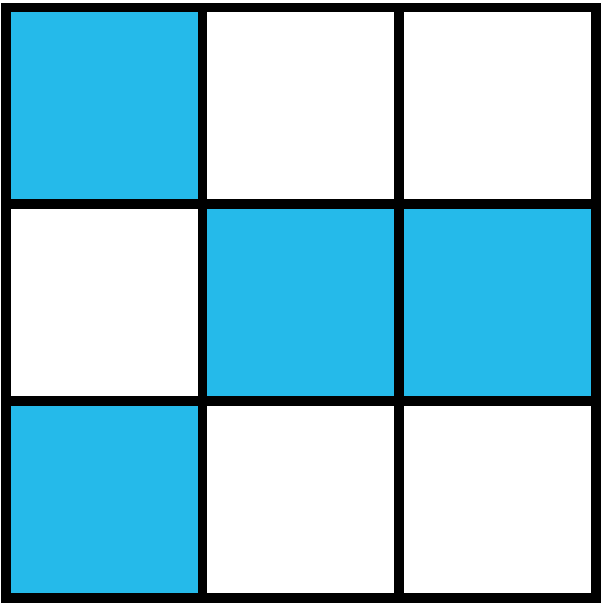} & \includegraphics[width=0.36\linewidth]{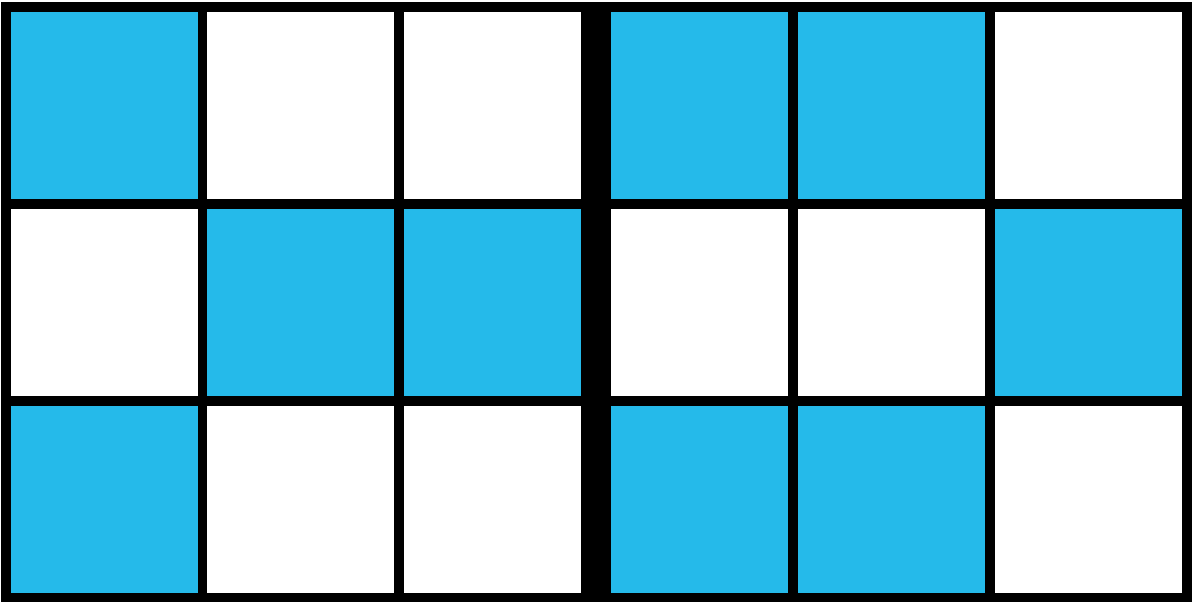} & \includegraphics[width=0.36\linewidth]{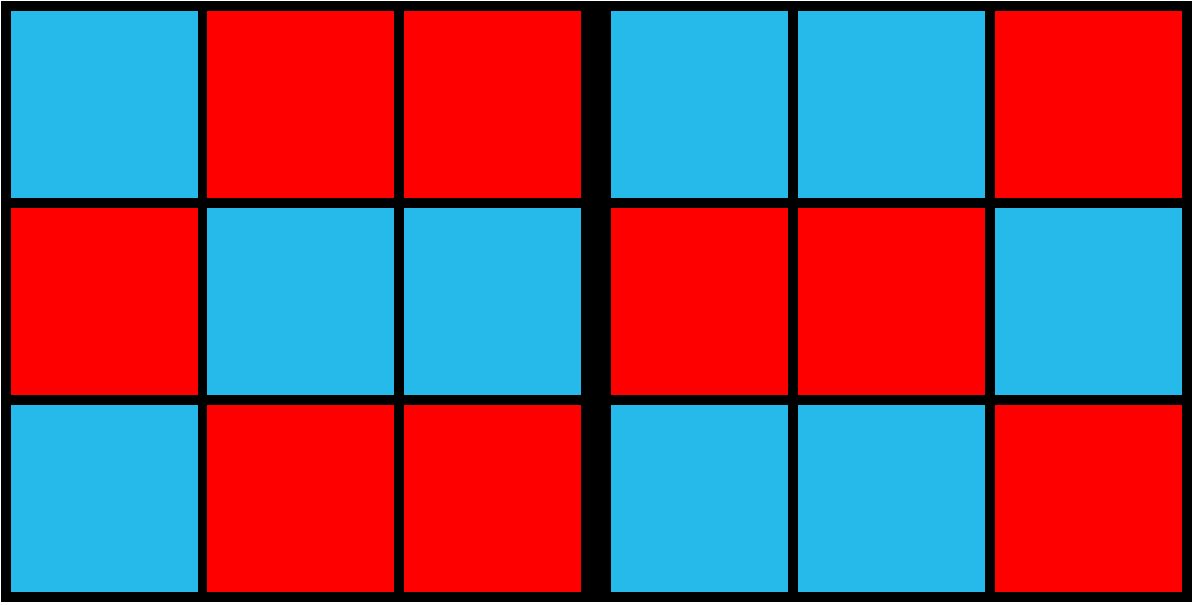} \\
(a) & (b) & (c)
\end{tabular}
\caption{Proof that every unrectifiable polyomino can be made rectifiable.
(a) Unrectifiable polyomino $T$. (b) Rectifiable polyomino $P$ composed of $T$ and its inverse $T'$.
(c) Rectangle tiles by $P$.}
\label{fig:impossible}
\end{figure}

Theorem~\ref{th:impossible} shows how any unrectifiable polyomino can be converted to a rectifiable one by doubling the number of its squares.
However, for some polyominoes it is possible to make the conversion without such a drastic change. In the following example we
make the conversion by making a single transparent square visible. Consider the polyomino in Figure~\ref{fig:impossible2}(left).
If we make square a visible, the resulting polyomino tiles a rectangle (see Figure~\ref{fig:impossible2}(right)).

\begin{figure}[!htpb]
\centering
\begin{tabular}{ccc}
\includegraphics[width=0.30\linewidth]{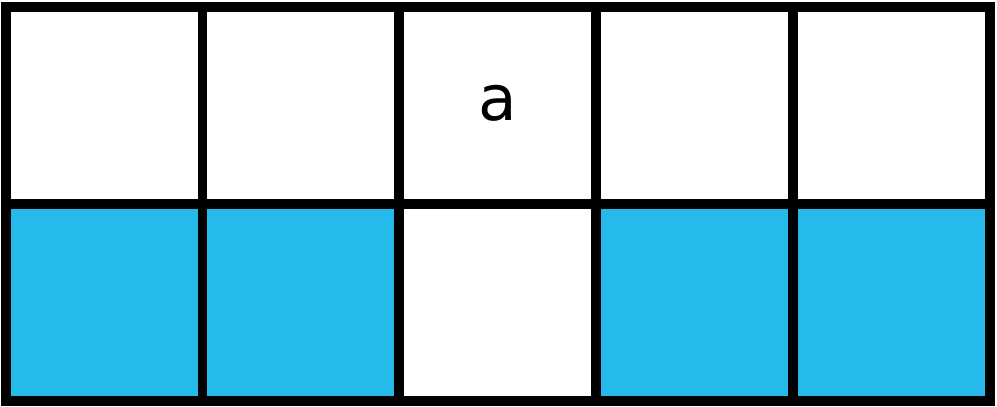} & ~~~~ & \includegraphics[width=0.30\linewidth]{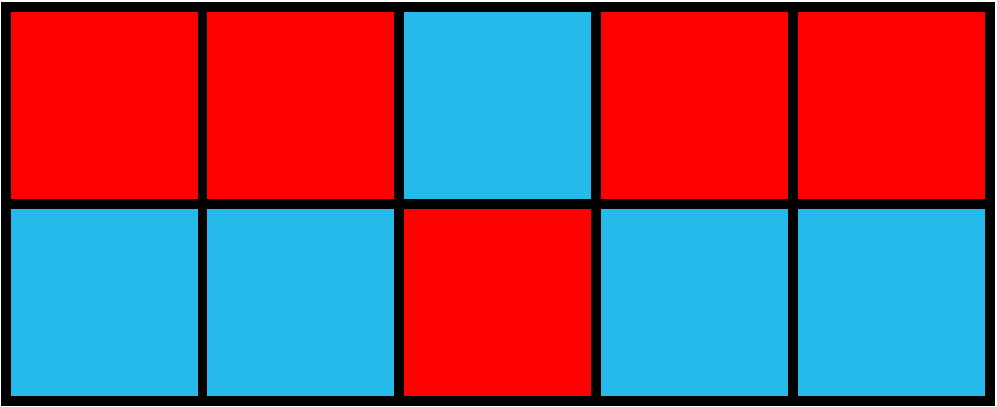}
\end{tabular}
\caption{An impossible polyomino (left) is made rectifiable by making square a visible. The resulting polyomino tiles a rectangle (right).}
\label{fig:impossible2}
\end{figure}

We designed an algorithm which can prove that a polyomino is unrectifiable (see Algorithm~\ref{alg:prover}).
The idea is to start with a sufficiently large empty grid\footnote{In our experiments we found that a $100 \times 100$ grid was sufficient.}
and place new tiles in an empty cell with the smallest value as shown in Figure~\ref{fig:zigzag}. Let $Z(\cdot)$
be a function that takes a grid and returns such a location (see line 1).
This ordering aims to fill the top-left corner, which is a pre-requisite to filling a rectangle.
If we cannot fill this corner after trying all possible tile
placements then the polyomino is not rectifiable.

\begin{figure}[!htpb]
\centering
\includegraphics[width=0.40\linewidth]{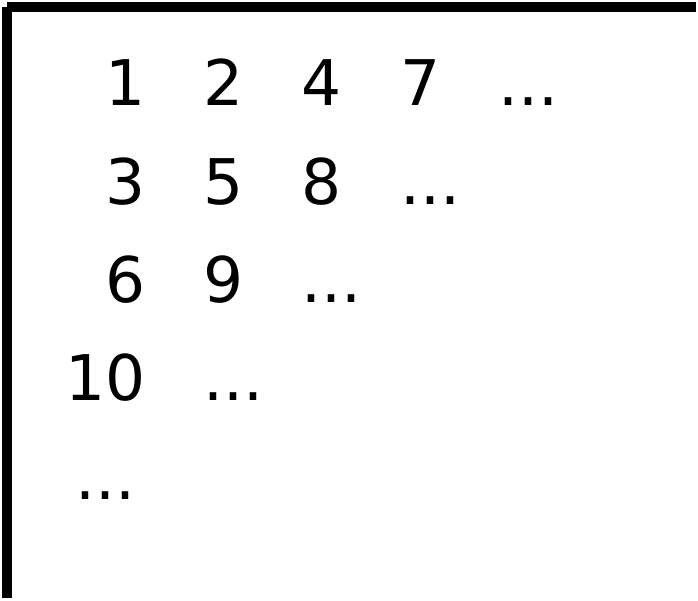}
\caption{}
\label{fig:zigzag}
\end{figure}

\begin{algorithm}[!htpb]
\renewcommand{\arraystretch}{1.15}
\caption{: Algorithm for proving whether a polyomino is unrectifiable.}
\vspace{0.5ex}
\bf function Prove(grid, tile) \\
\begin{tabular}{rl}
\\
1 & $(r,c):=Z(grid)~~~~//first~empty~using~order~from~Figure~\ref{fig:zigzag}$ \\
2 & \\
3 & $//we~found~a~solution$ \\
4 & if $(r,c)=\emptyset$ \\
5 & ~~$grid.print()$ \\
6 & ~~terminate \\
7 & end \\
8 & \\
9 & $//grid~is~too~small,~need~to~enlarge~it$ \\
10 & if $r=grid.height$ \\
11 & ~~terminate \\
12 & end \\
13 & \\
14 & $//for~each~rotated~version~of~the~tile$ \\
15 & $\forall t \in Rot(tile)$ \\
16 & ~~$//move~that~places~t's~visible~cell~with~the~smallest~index$\footnotemark $~at~(r,c)$ \\
17 & ~~$move=~$new $Move(t,r,c)$ \\
18 & ~~if $grid.isValidMove(move)$ \\
19 & ~~~~$grid.makeMove(move)$ \\
20 & ~~~~$Prove(grid,tile)$ ~~~~$//recurse$\\
21 & ~~~~$grid.undoMove(move)$ \\
22 & ~~end \\
23 & end \\
\end{tabular}
\label{alg:prover}
\end{algorithm}
\footnotetext{Here the smallest index is given by the order defined by $Z(\cdot)$.}

If Algorithm~\ref{alg:prover} terminates at line 6 then the piece is rectifiable since we have used it to tile a rectangle.
If it terminates at line 11 then we do not have a proof that the piece is unrectifiable. We can try to enlarge the original
grid, but this may give us the same result. In this case we may never be able to prove that the piece is unrectifiable using this method.
Otherwise, if the algorithm terminates normally then it has proven that the tile is unrectifiable. 
We used Algorithm~\ref{alg:prover} to find most of the impossible pieces in Tables~\ref{tab:impossible-32}-\ref{tab:impossible-51}.

Algorithm~\ref{alg:prover} was not sufficient for some pieces and we had to rely on other methods. Tom Sirgedas~\cite{sirgedas}
presented a nice proof of the impossibility of the last piece in Table~\ref{tab:impossible-41}. His proof is based on the fact that
this piece cannot tile any $3 \times n$ rectangle.

%OLD 
% Consider tiling with the piece in Table~\ref{tab:impossible-32}(left). We first try tiling the corner of a rectangle
% (see Figure~\ref{fig:impossible3}(left)).
% Although we can tile a corner, we are unable to tile the whole side, since we cannot tile the next corner.
% Now consider tiling with the piece in Table~\ref{tab:impossible-32}(right). Similarly in this case we cannot tile the whole side
% (see Figure~\ref{fig:impossible3}(right)), so that piece is also impossible.

% \begin{figure}[!htpb]
% \centering
% \begin{tabular}{ccc}
% \includegraphics[width=0.3\linewidth]{} & ~~~~ & \includegraphics[width=0.6\linewidth]{impossible3.pdf}
% \end{tabular}
% \caption{Certain pieces cannot even tile a side of the rectangle proving that they are impossible.}
% \label{fig:impossible3}
% \end{figure}

\begin{table}[!htpb]
\centering
\begin{tabular}{|c|c|}
\hline
& \\ 
\includegraphics[width=0.120000\linewidth]{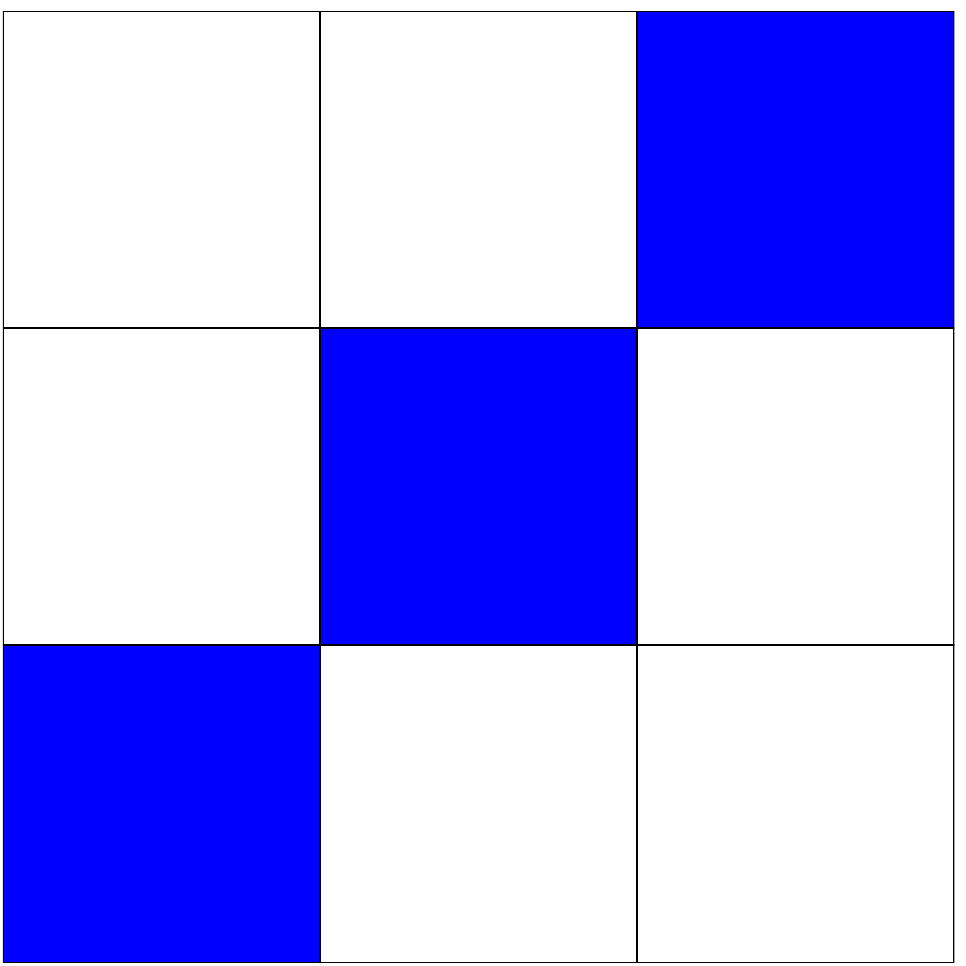} & \includegraphics[width=0.120000\linewidth]{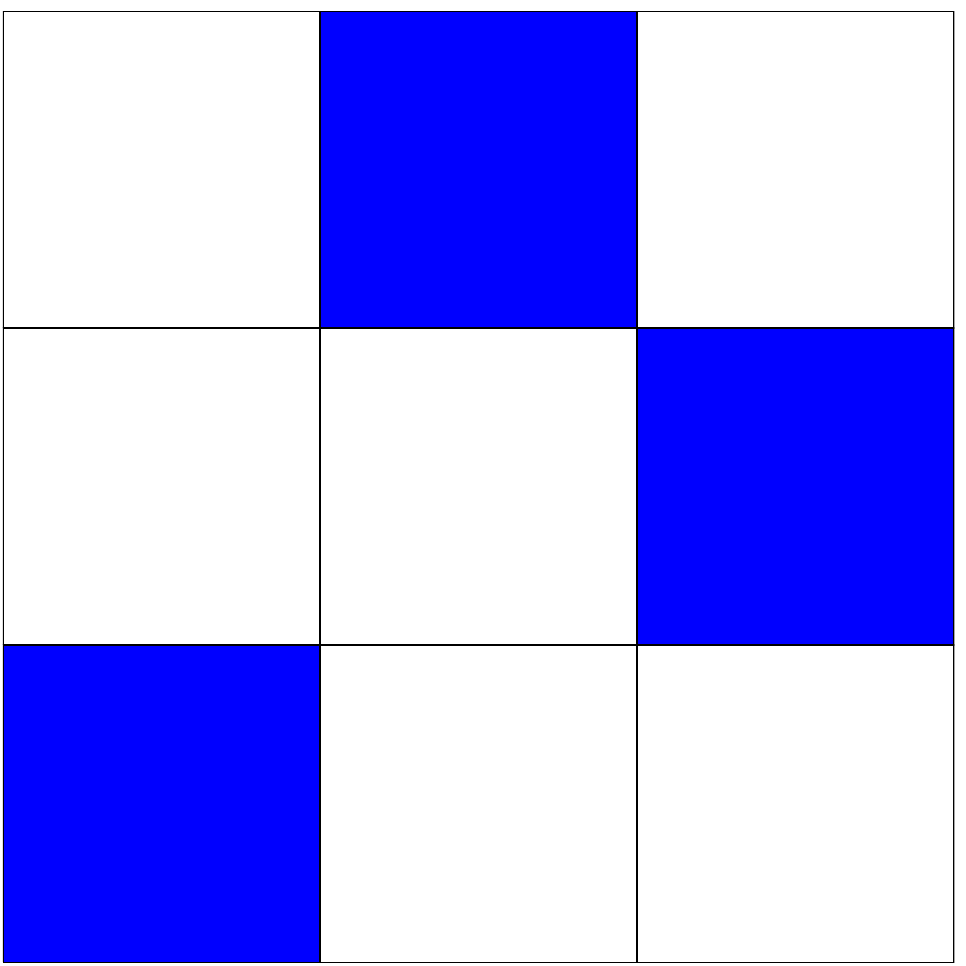} \\ \hline
\end{tabular}
\caption{Impossible pieces for n=3 and k=2.}
\label{tab:impossible-32}
\end{table}

\begin{table}[!htpb]
\centering
\begin{tabular}{|c|c|c|c|c|c|}
\hline
& & & & \\ 
\includegraphics[width=0.200000\linewidth]{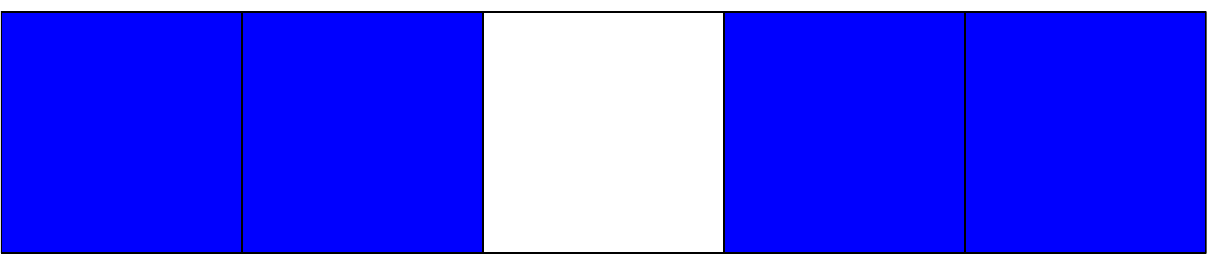} & \includegraphics[width=0.120000\linewidth]{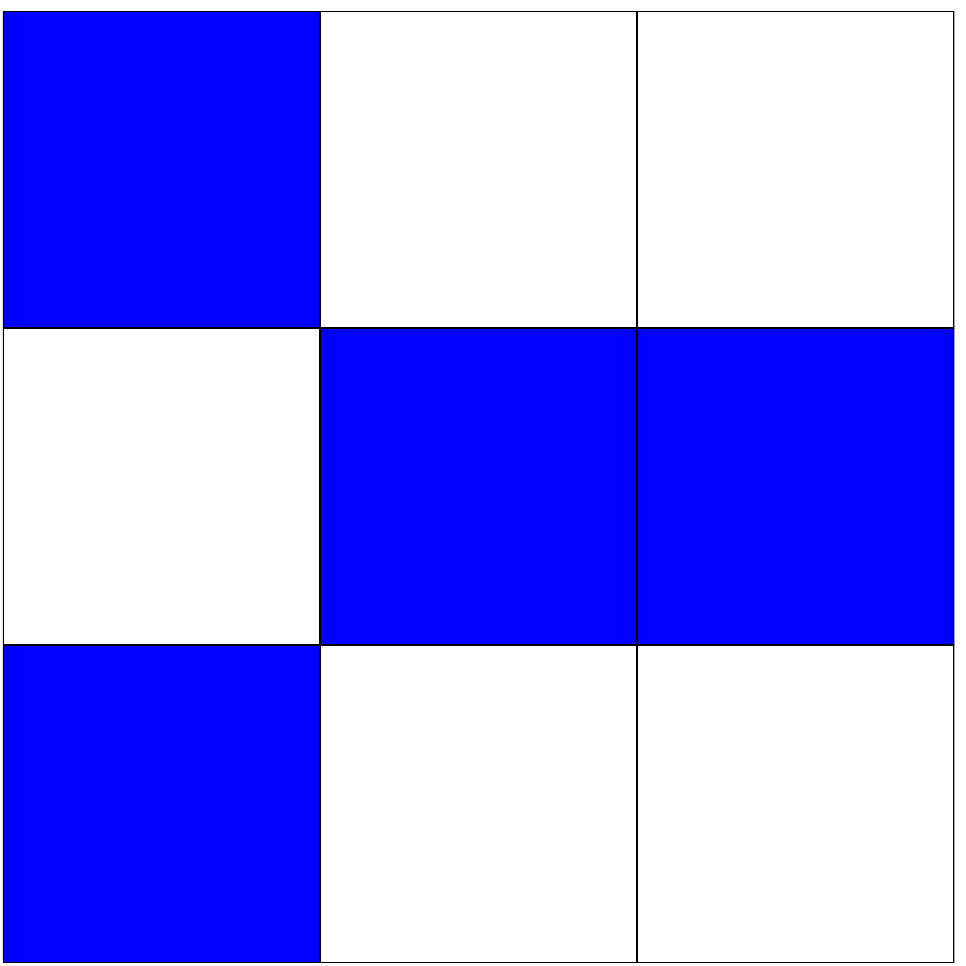} & \includegraphics[width=0.120000\linewidth]{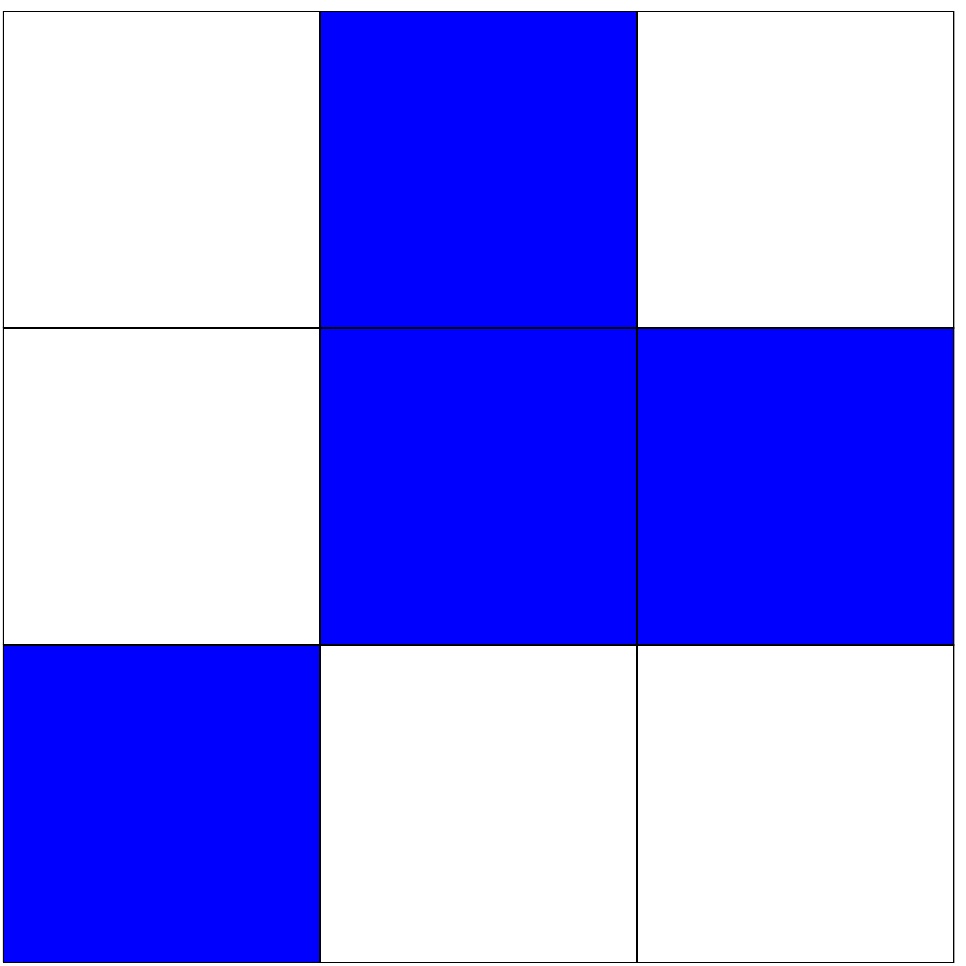} & \includegraphics[width=0.120000\linewidth]{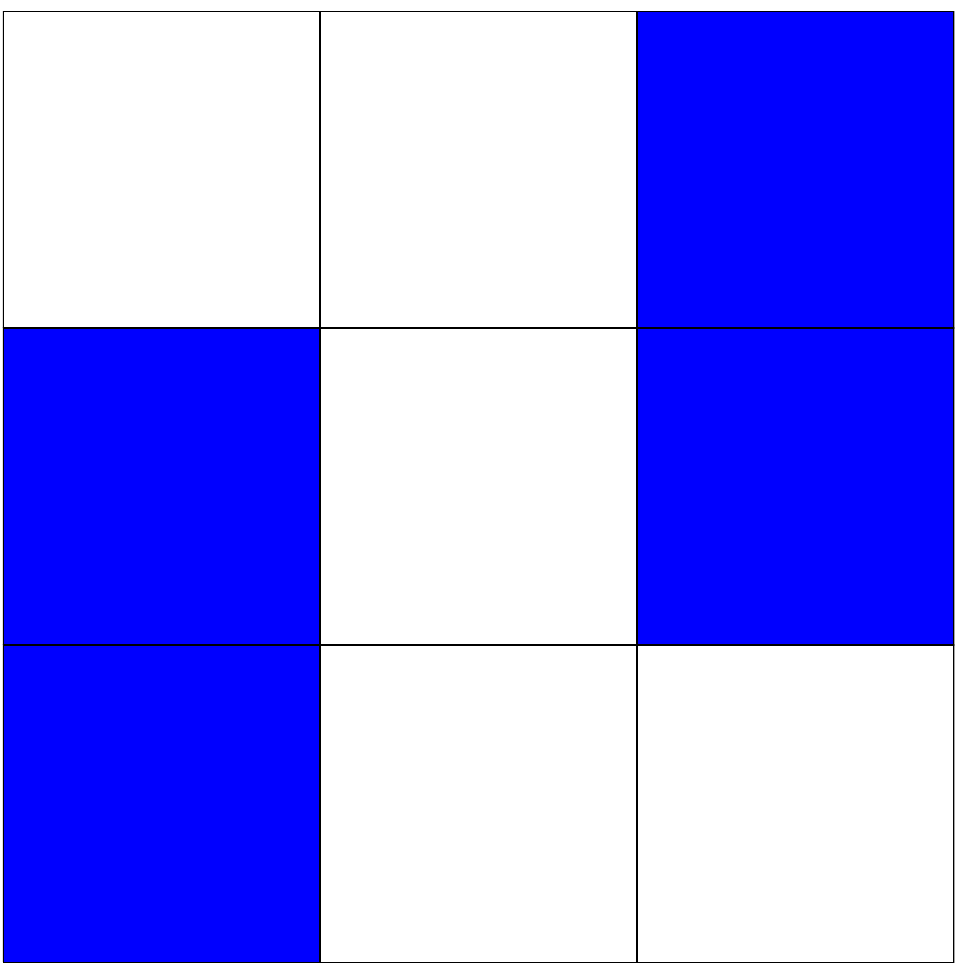} & \includegraphics[width=0.120000\linewidth]{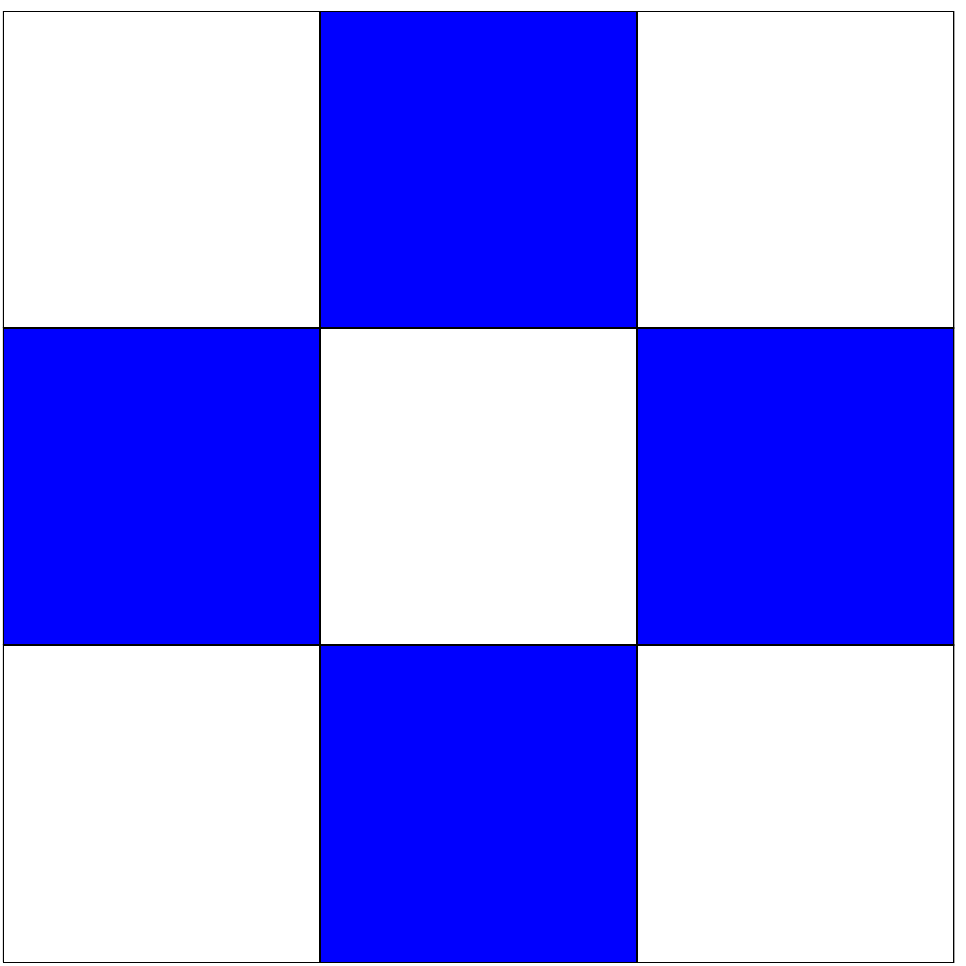} & \includegraphics[width=0.120000\linewidth]{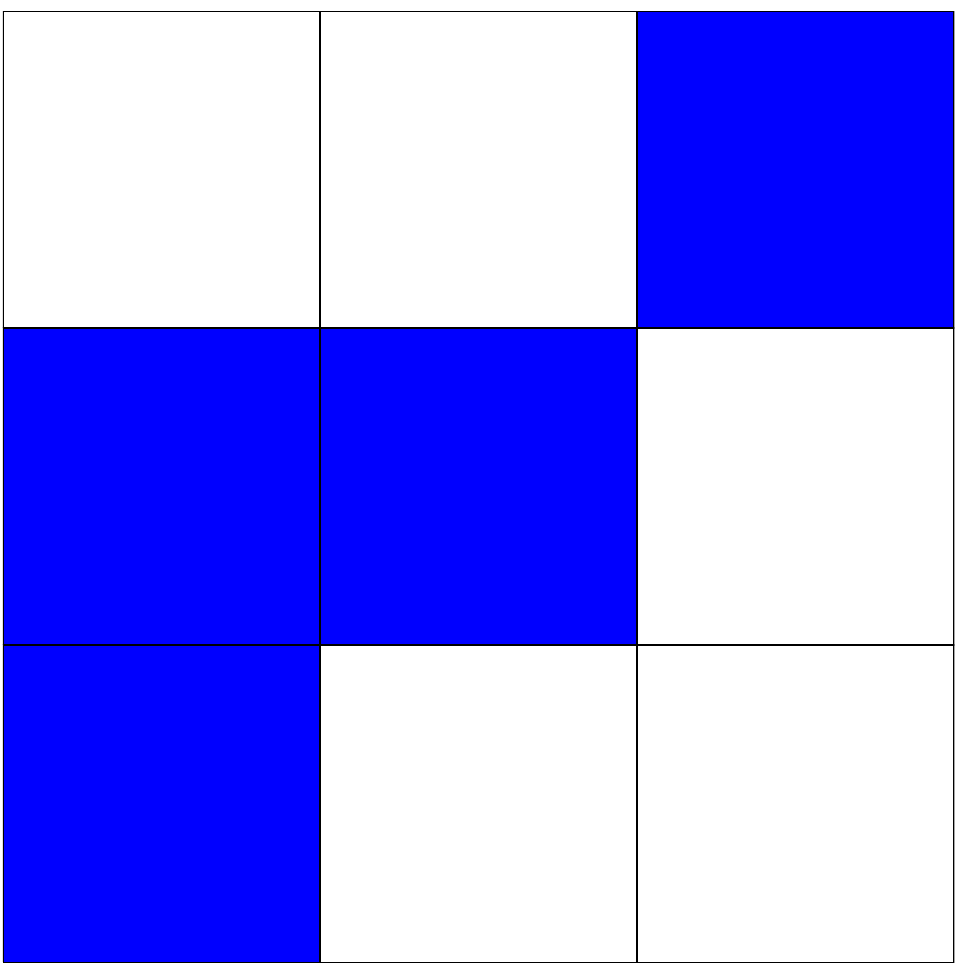} \\ \hline
\end{tabular}
\caption{Impossible pieces for n=4 and k=1.}
\label{tab:impossible-41}
\end{table}

\begin{table}[!htpb]
\centering
\begin{tabular}{|c|c|c|c|c|}
\hline
& & & & \\ 
\includegraphics[width=0.160000\linewidth]{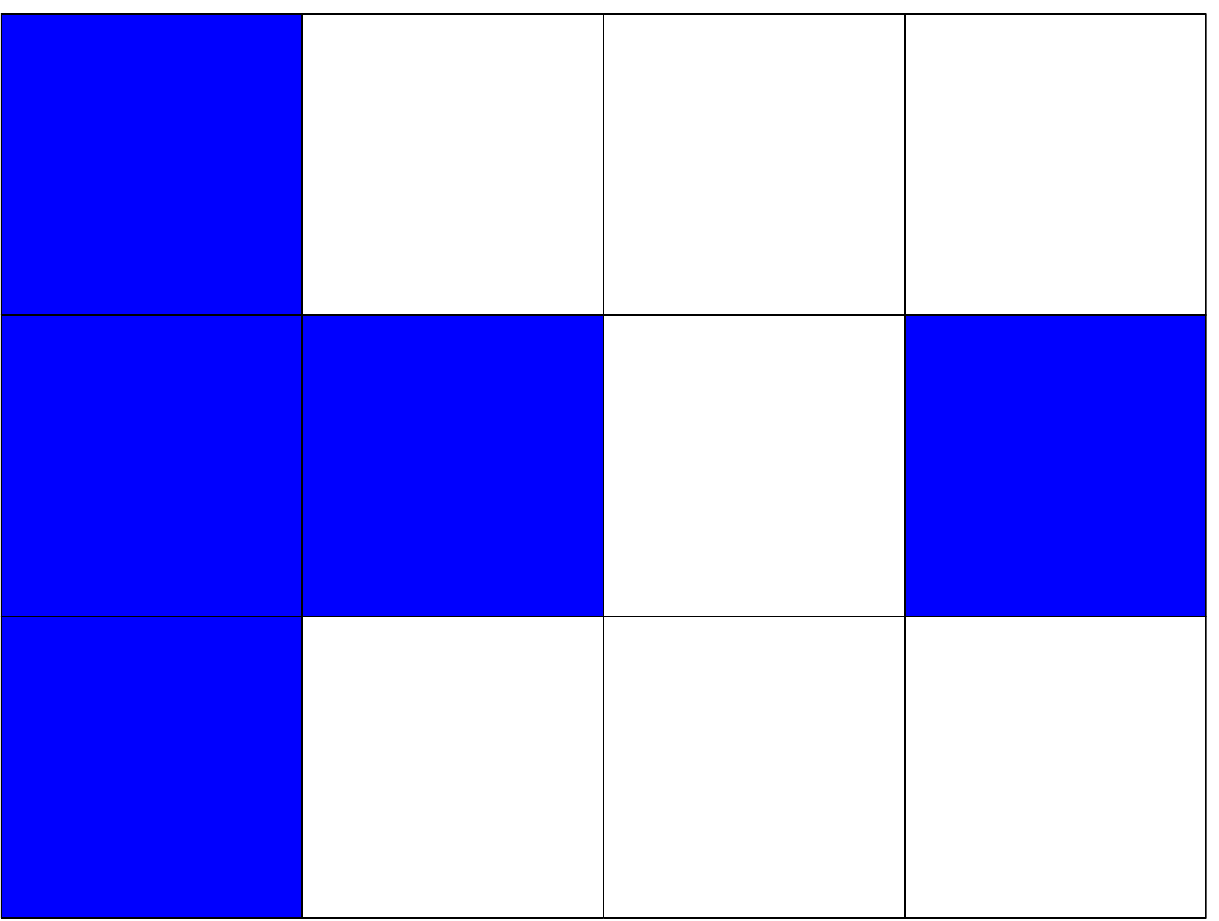} & \includegraphics[width=0.160000\linewidth]{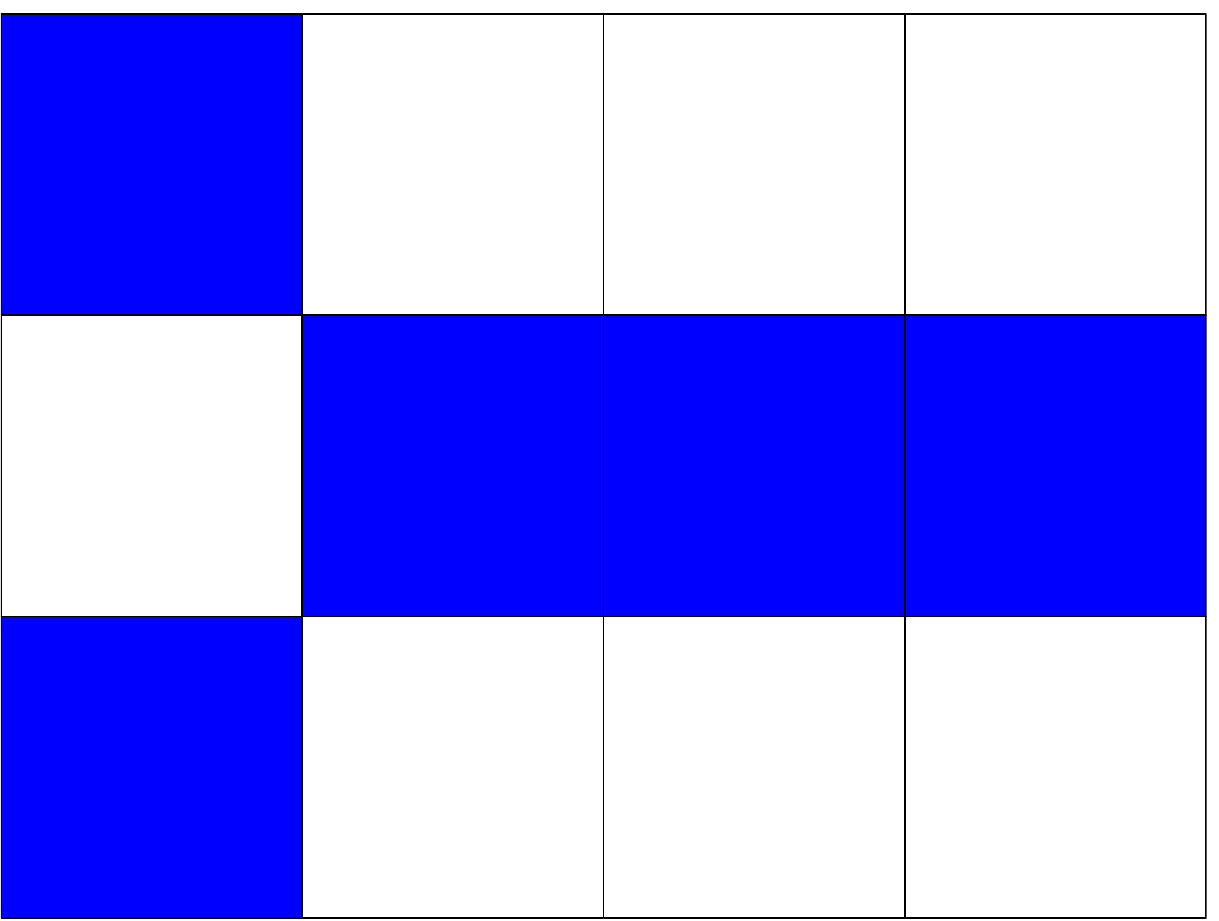} & \includegraphics[width=0.160000\linewidth]{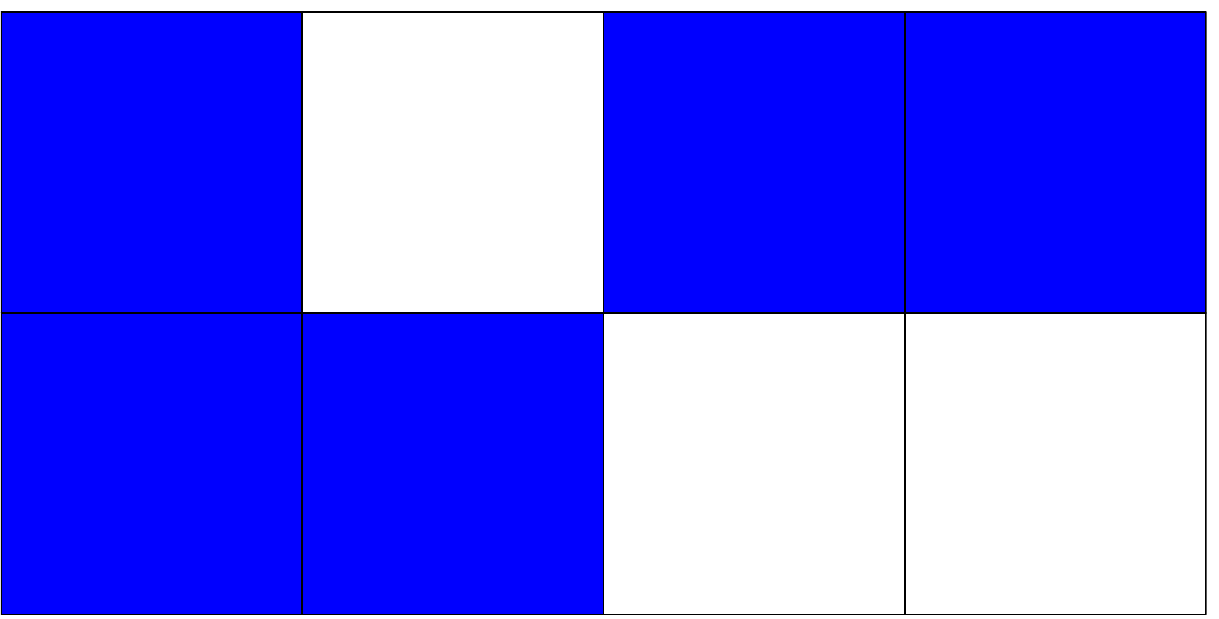} & \includegraphics[width=0.160000\linewidth]{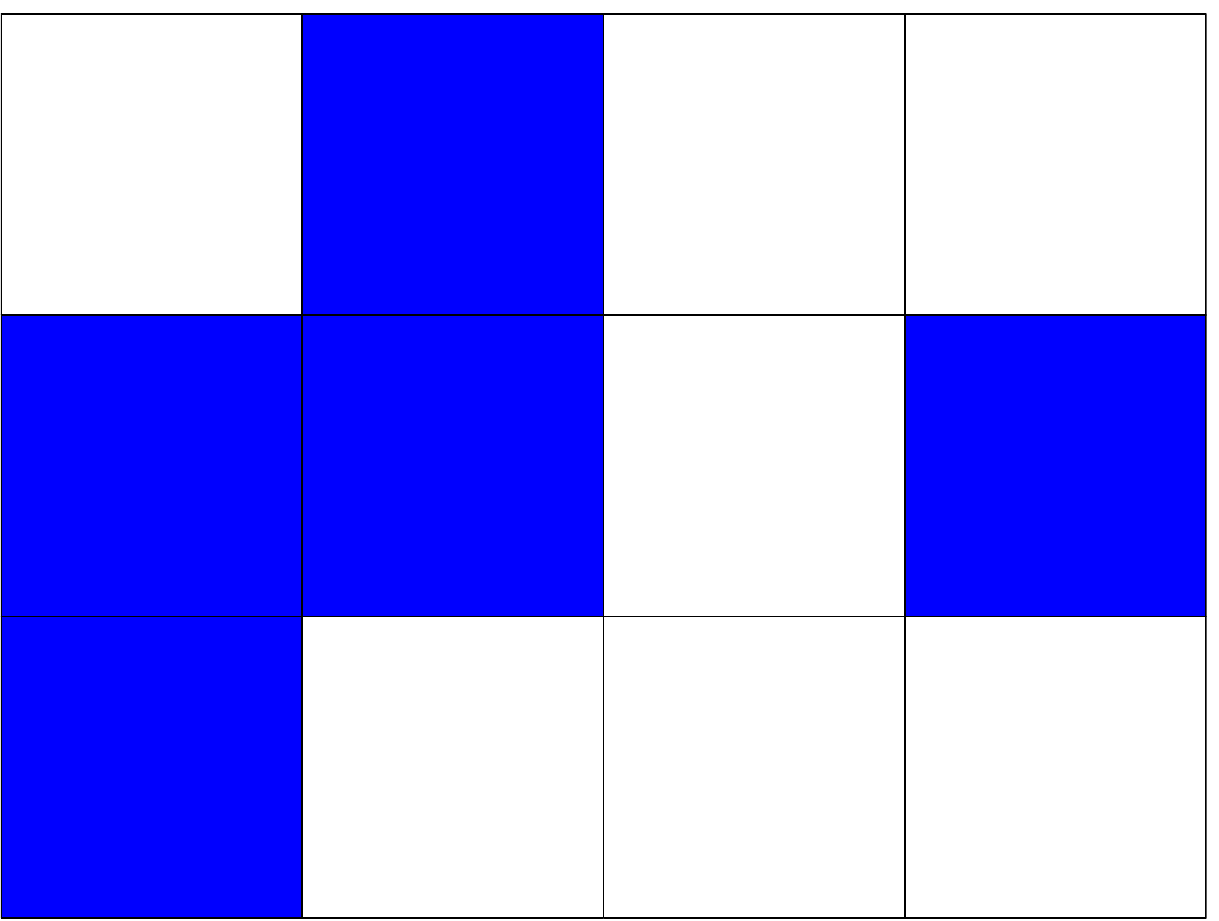} & \includegraphics[width=0.160000\linewidth]{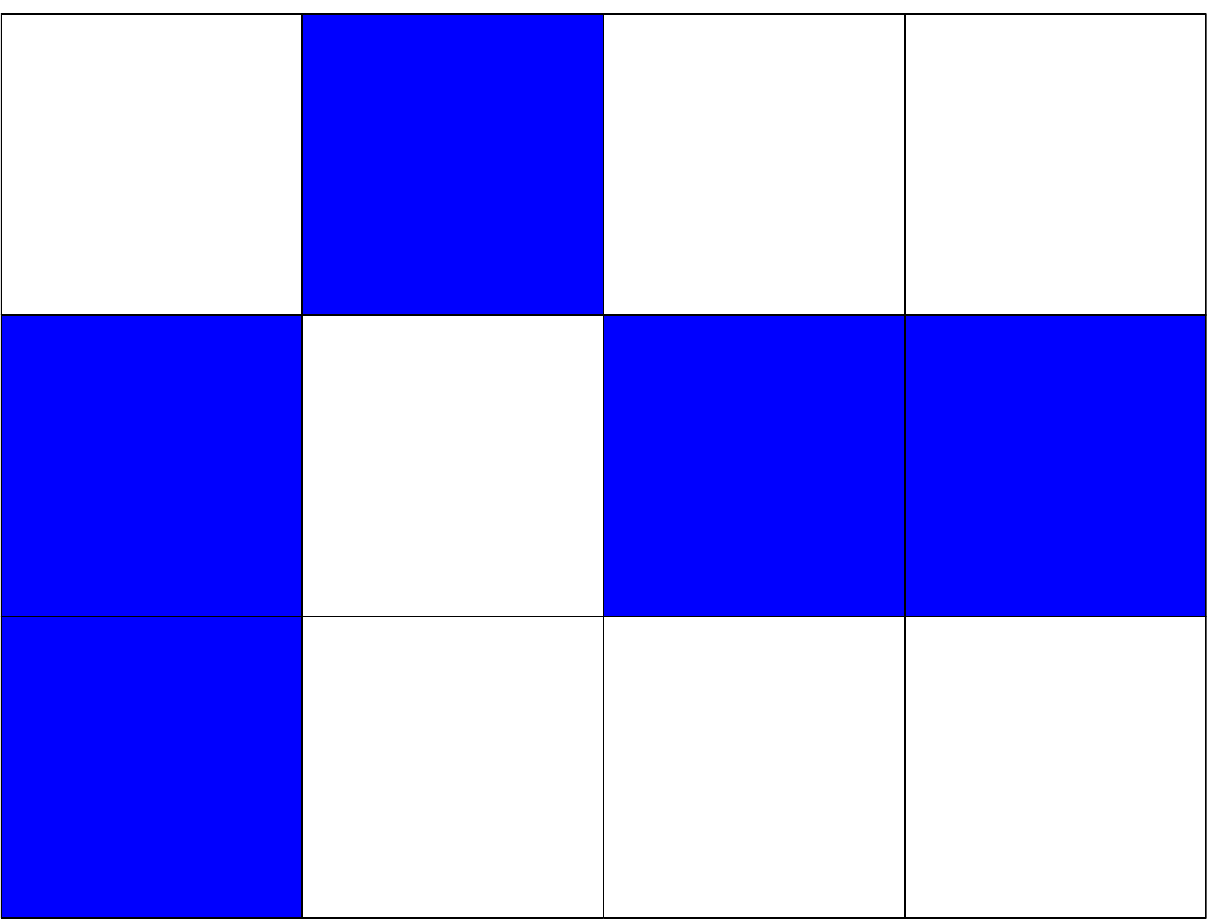} \\ \hline
& & & & \\ 
\includegraphics[width=0.160000\linewidth]{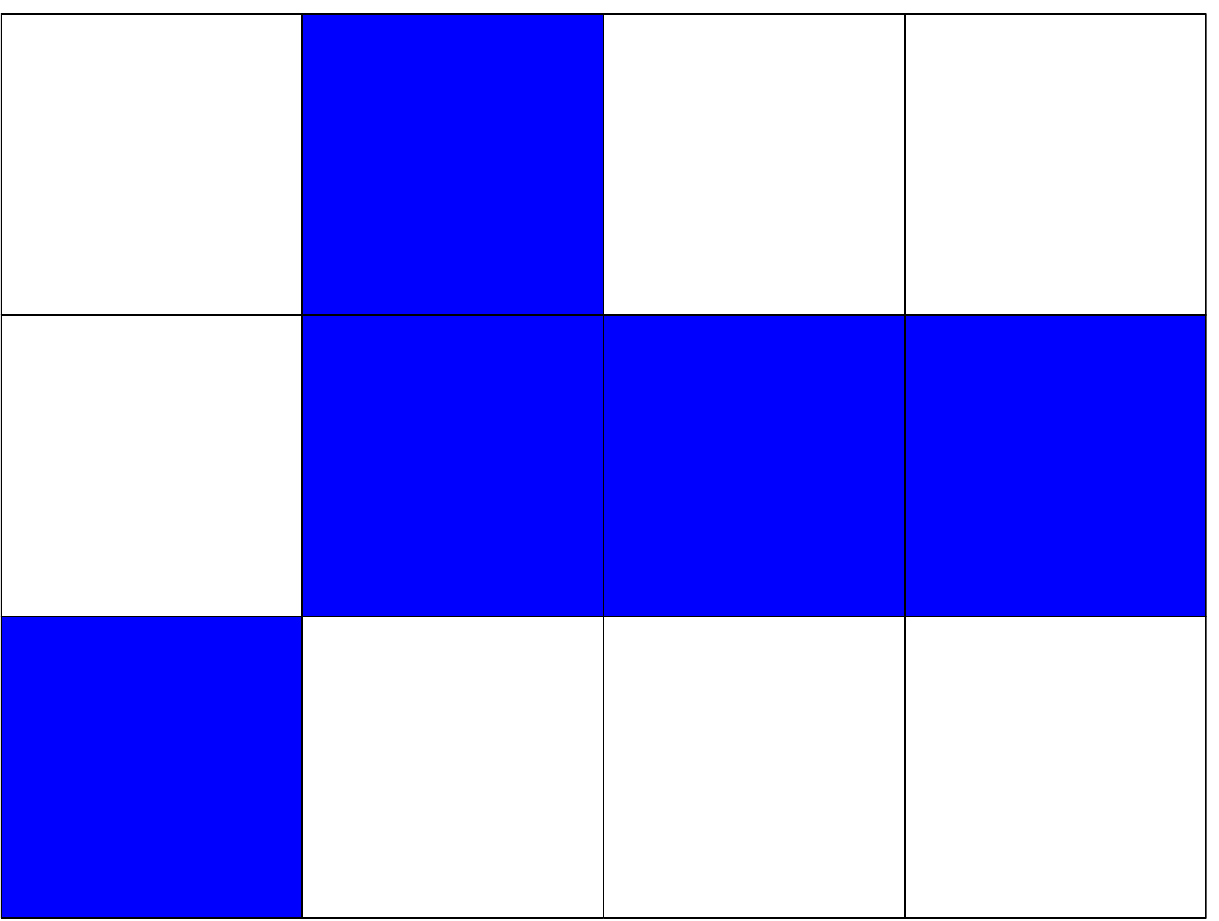} & \includegraphics[width=0.160000\linewidth]{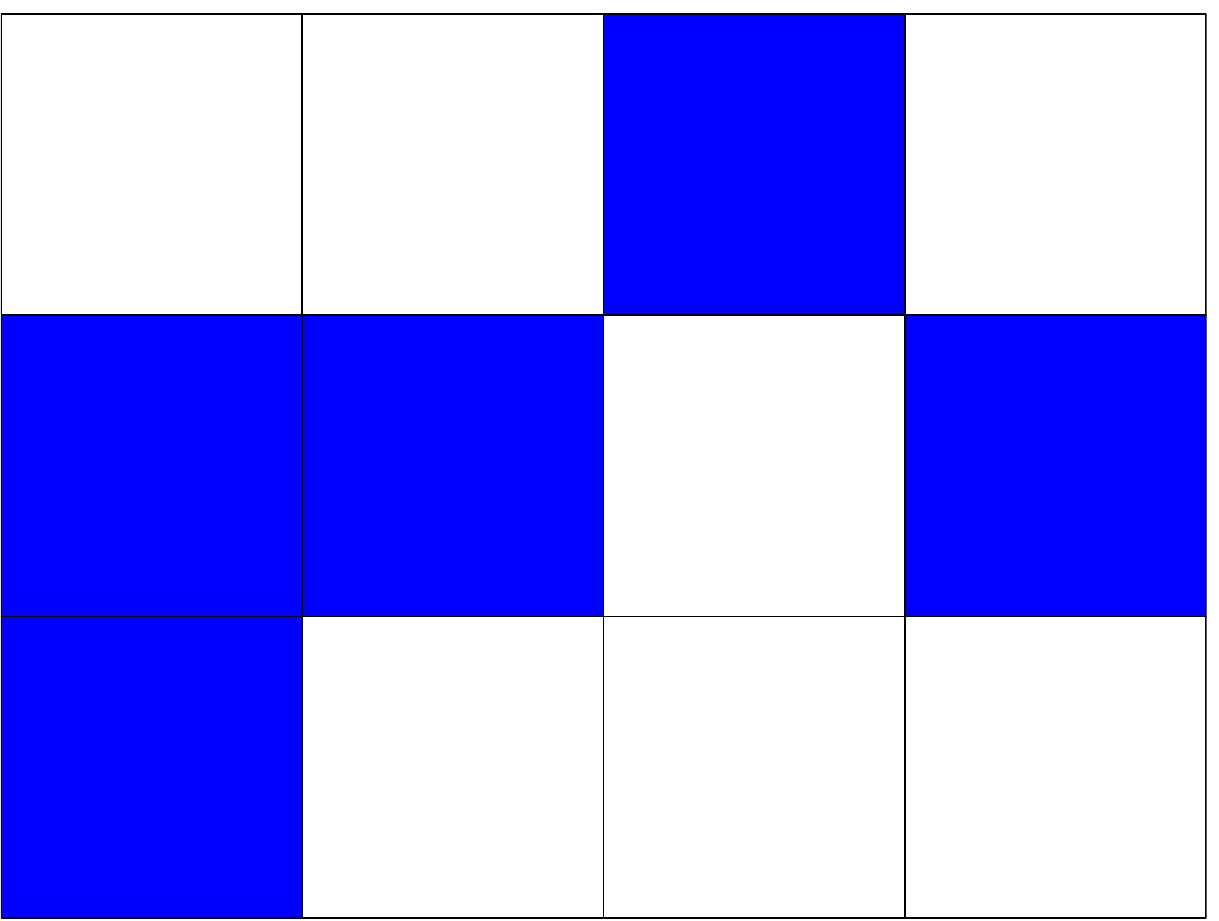} & \includegraphics[width=0.160000\linewidth]{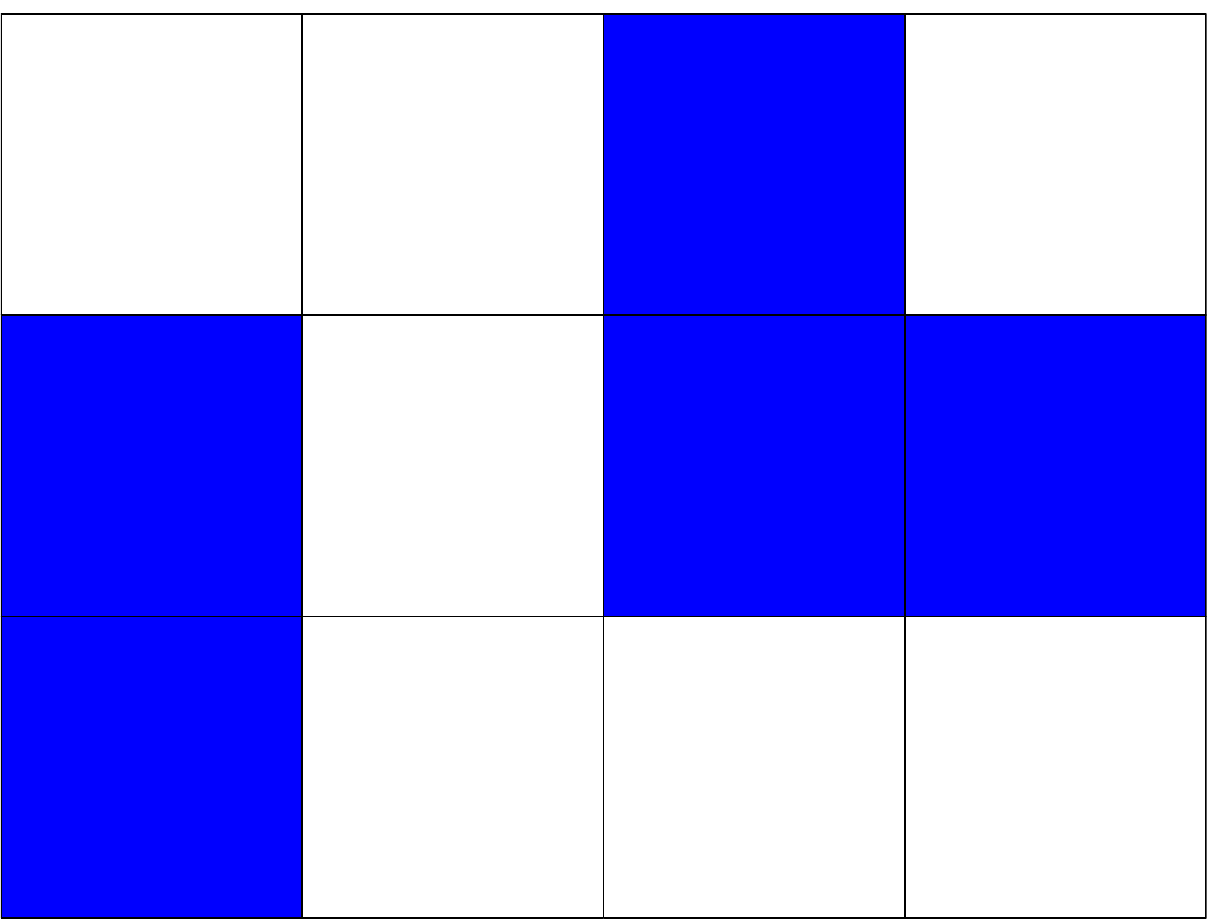} & \includegraphics[width=0.160000\linewidth]{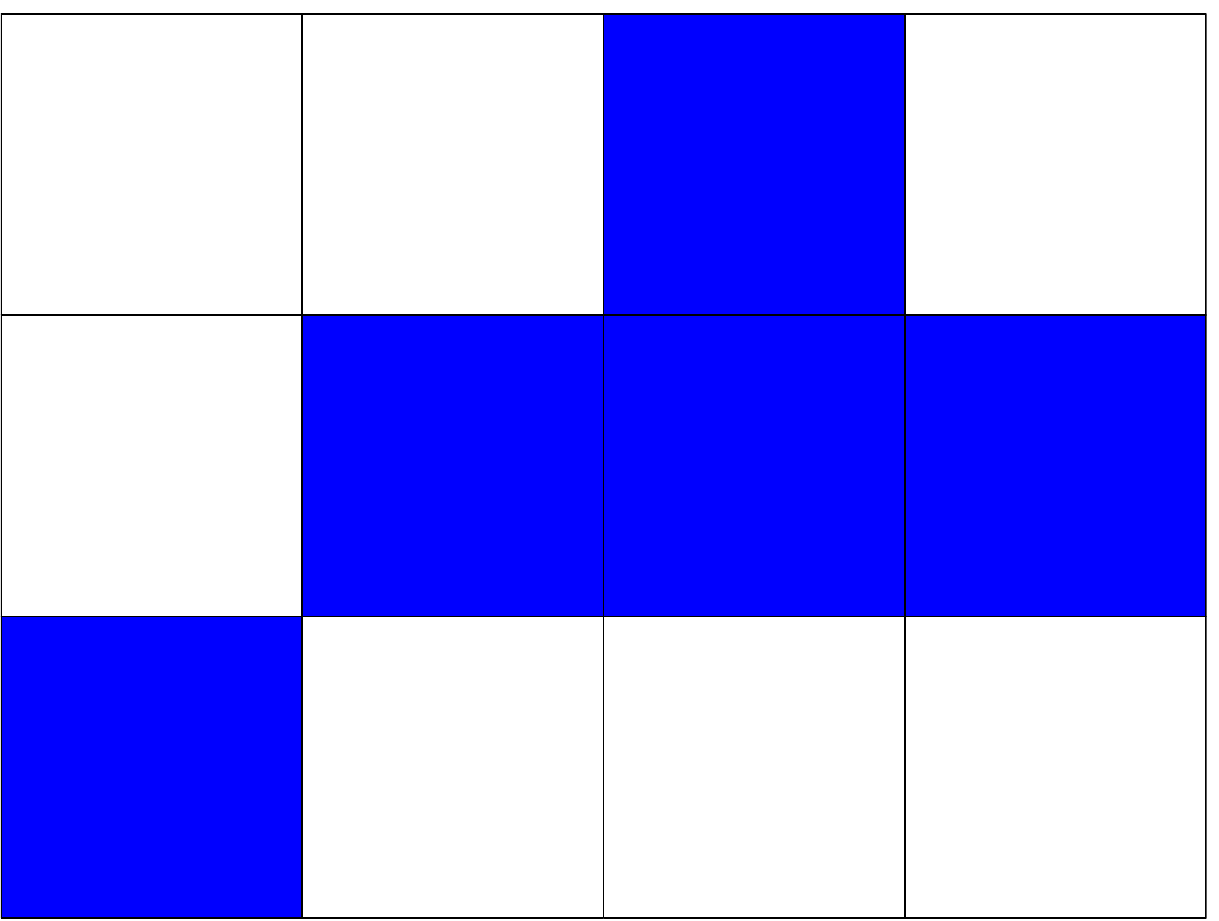} & \includegraphics[width=0.160000\linewidth]{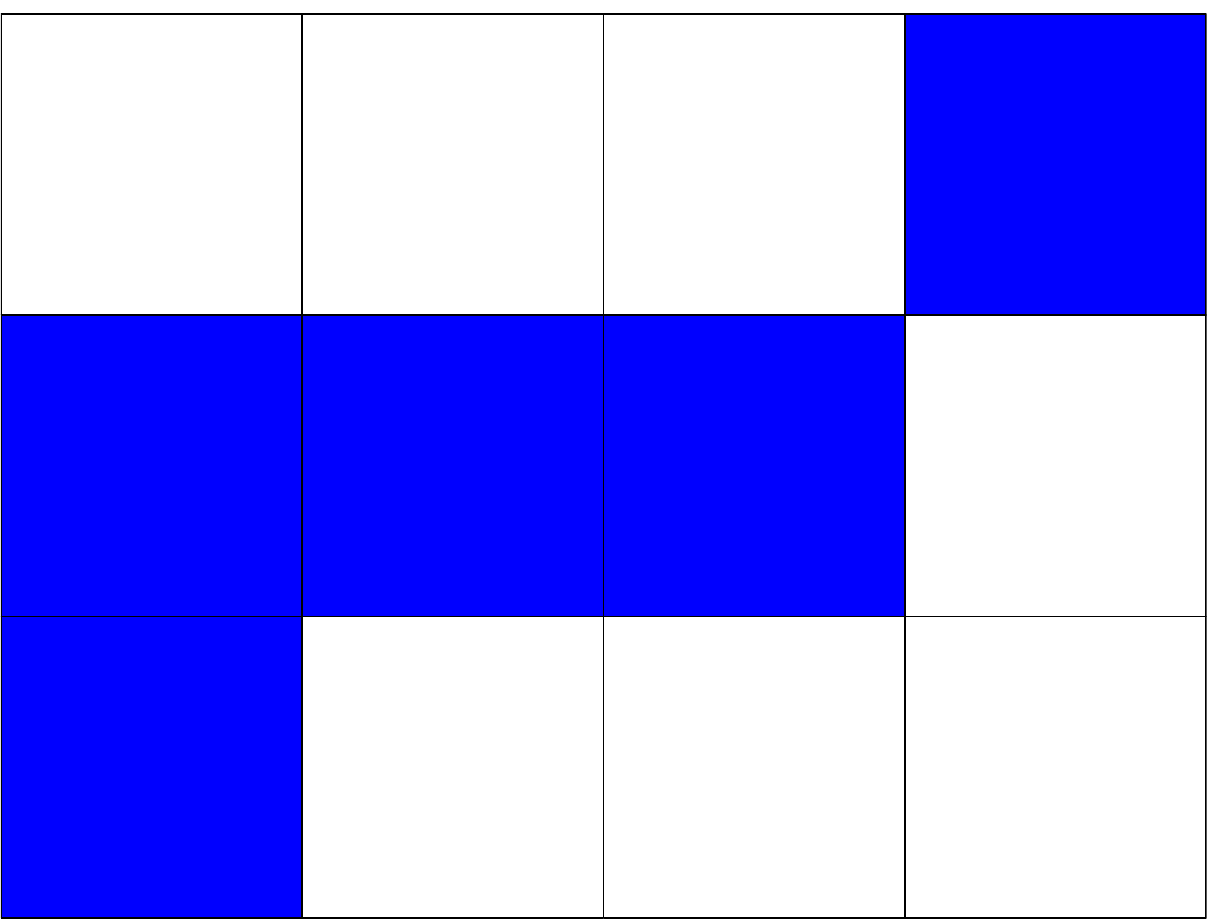} \\ \hline
& & & & \\ 
\includegraphics[width=0.160000\linewidth]{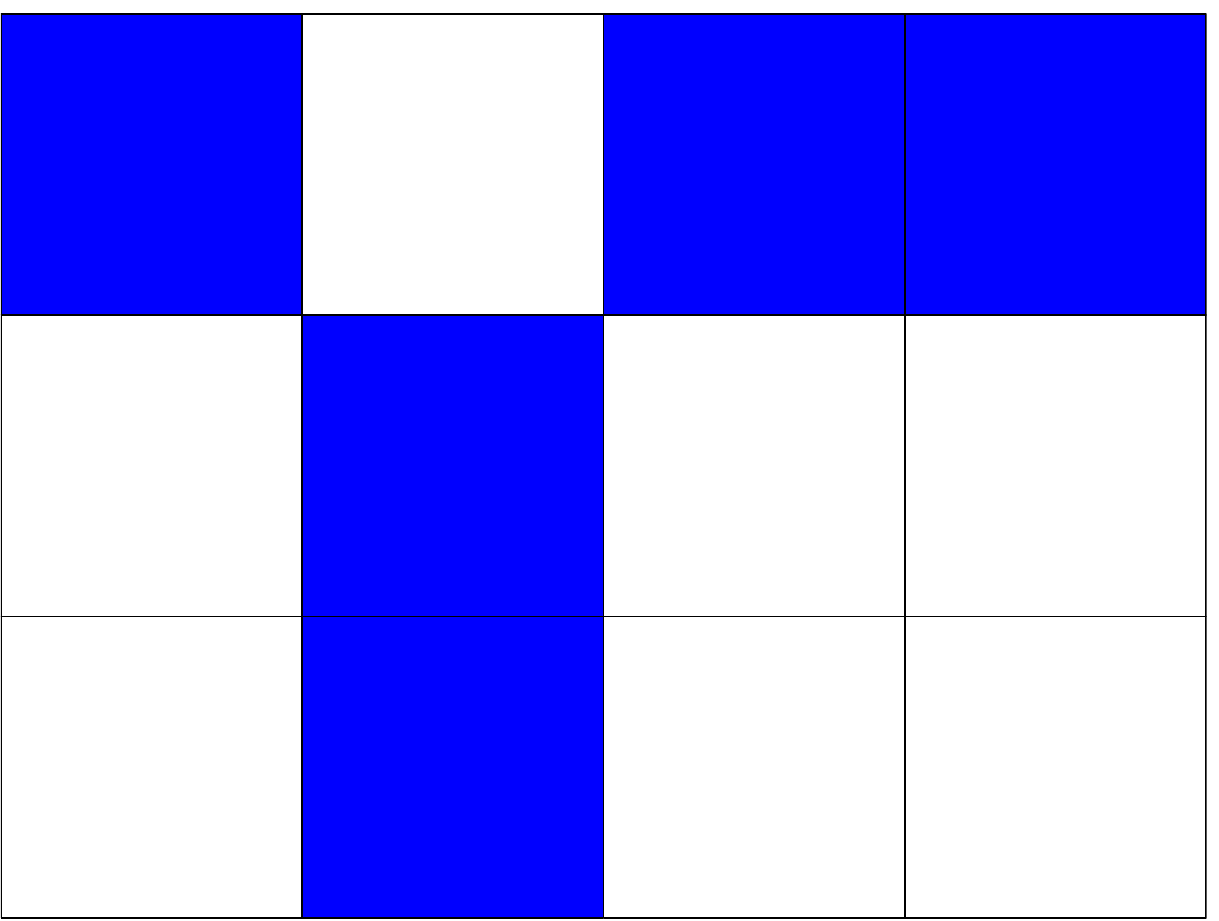} & \includegraphics[width=0.160000\linewidth]{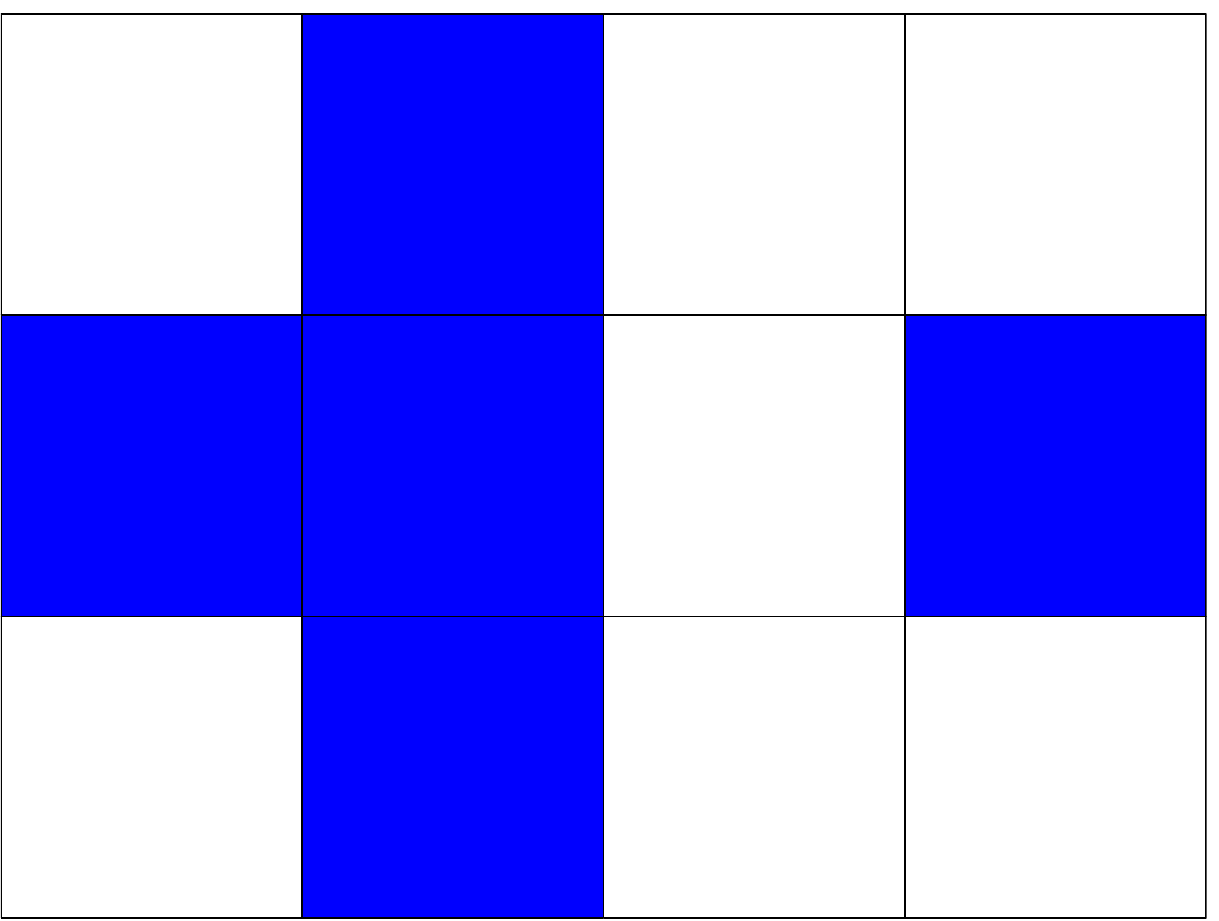} & \includegraphics[width=0.160000\linewidth]{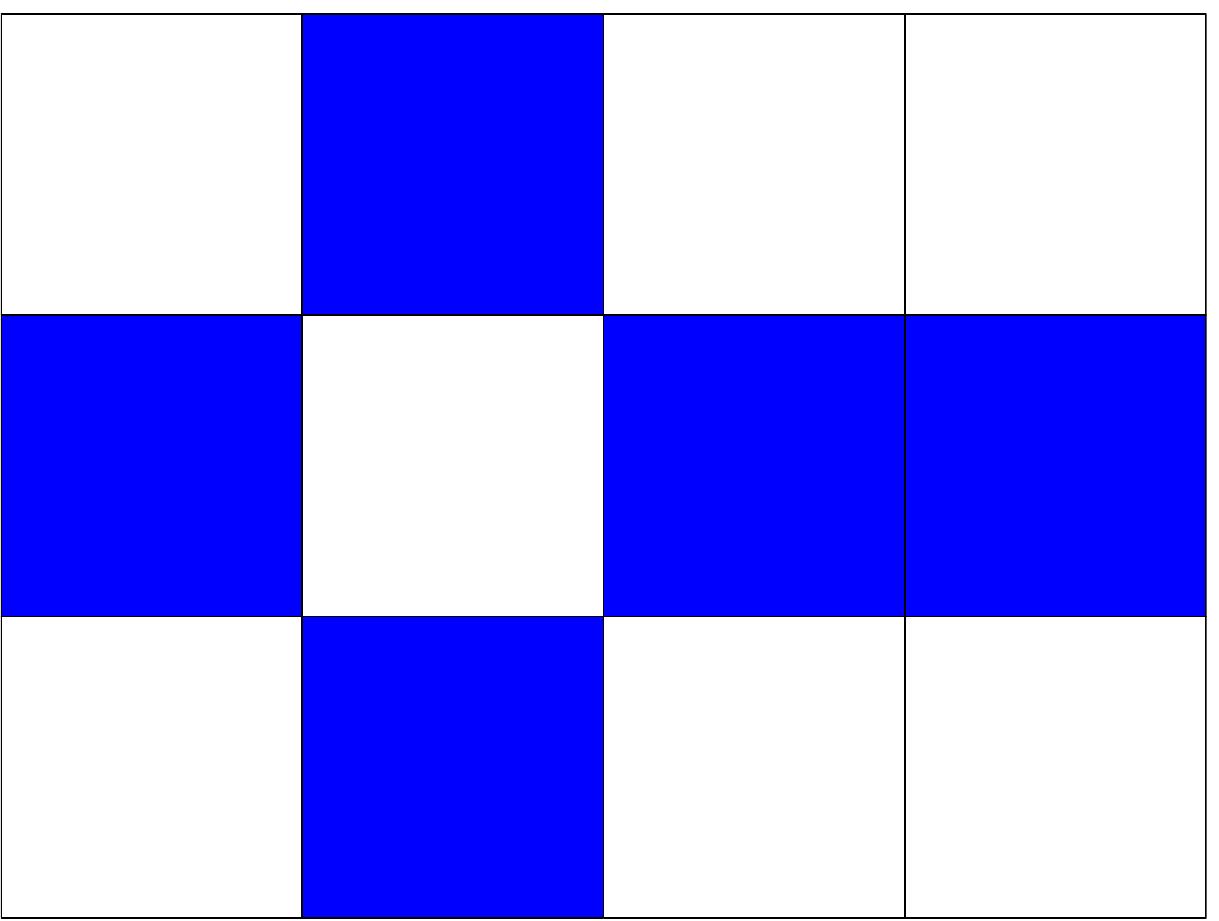} & \includegraphics[width=0.160000\linewidth]{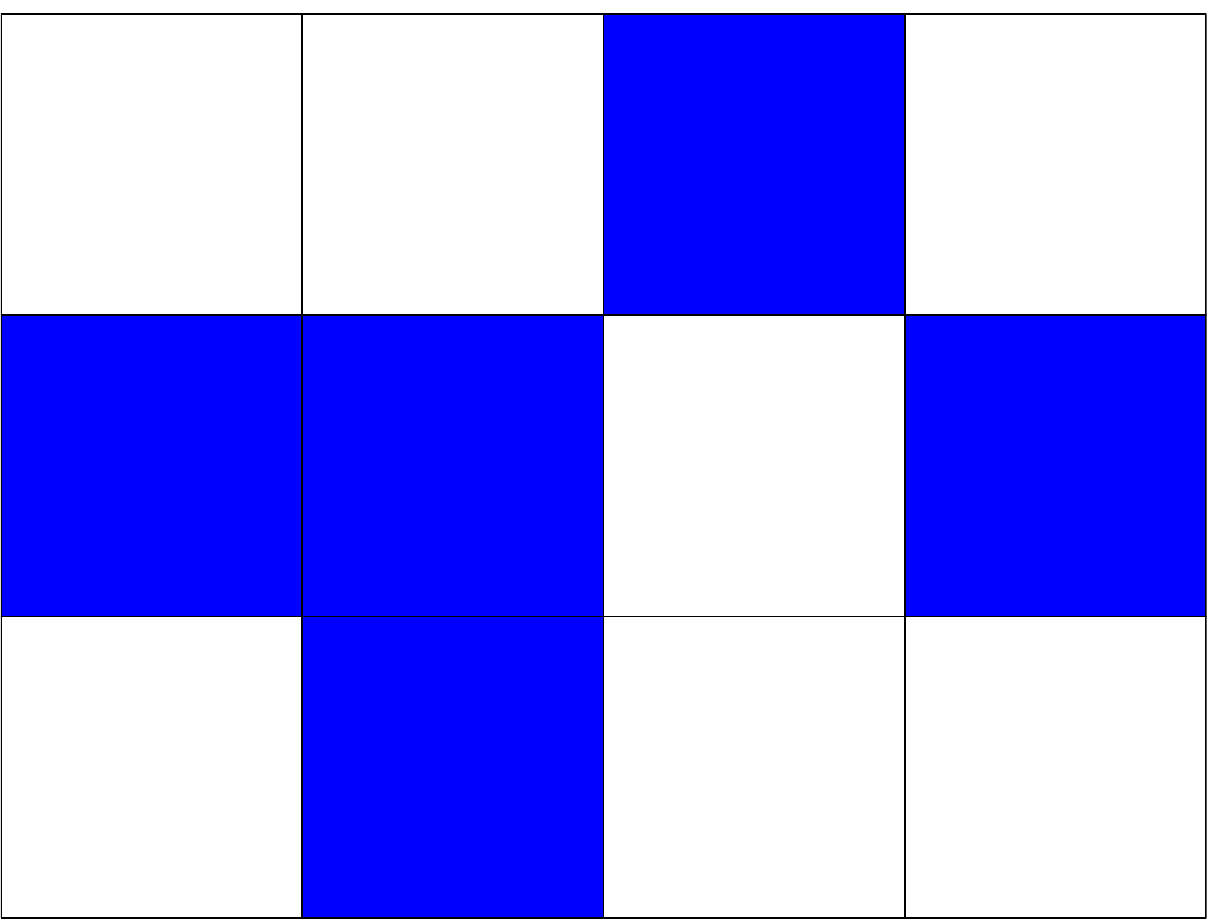} & \includegraphics[width=0.160000\linewidth]{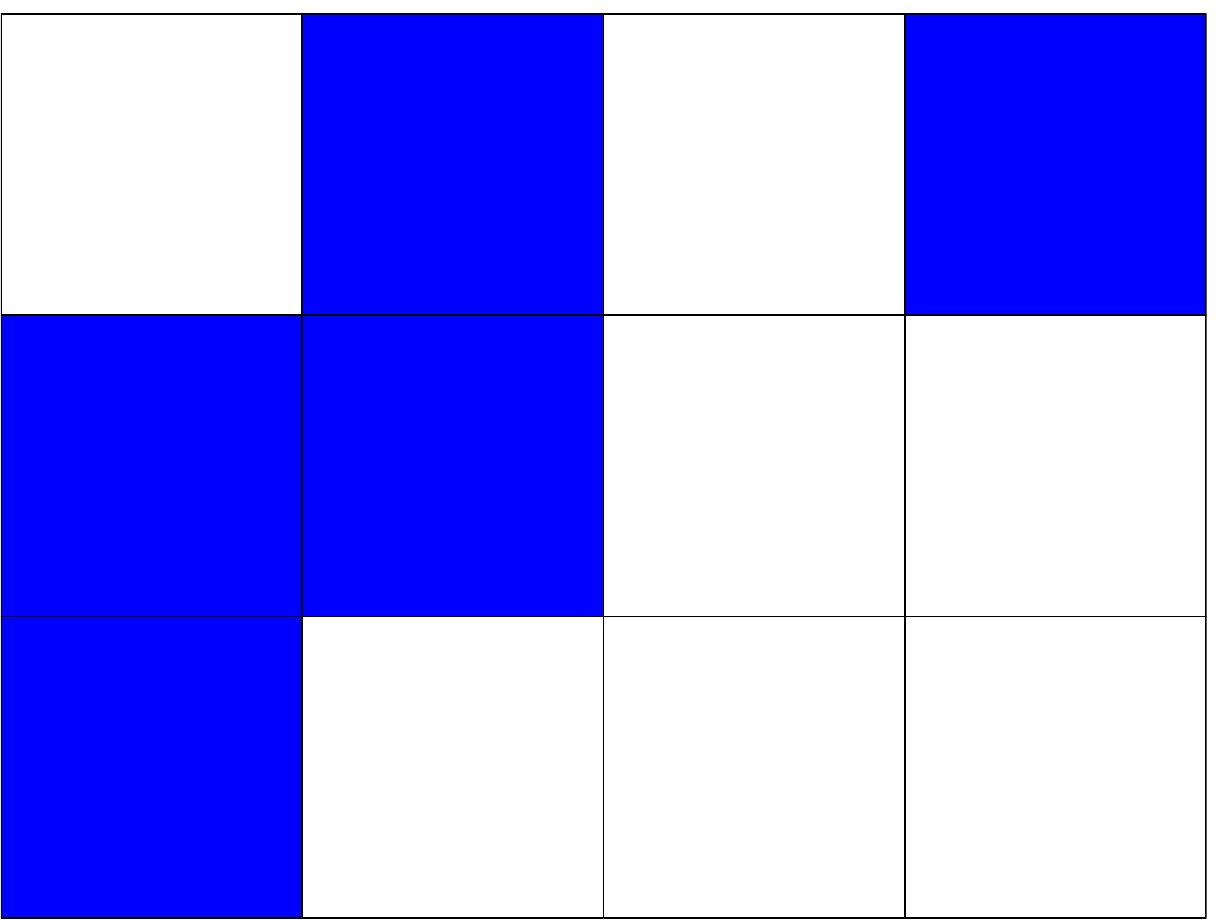} \\ \hline
& & & & \\ 
\includegraphics[width=0.160000\linewidth]{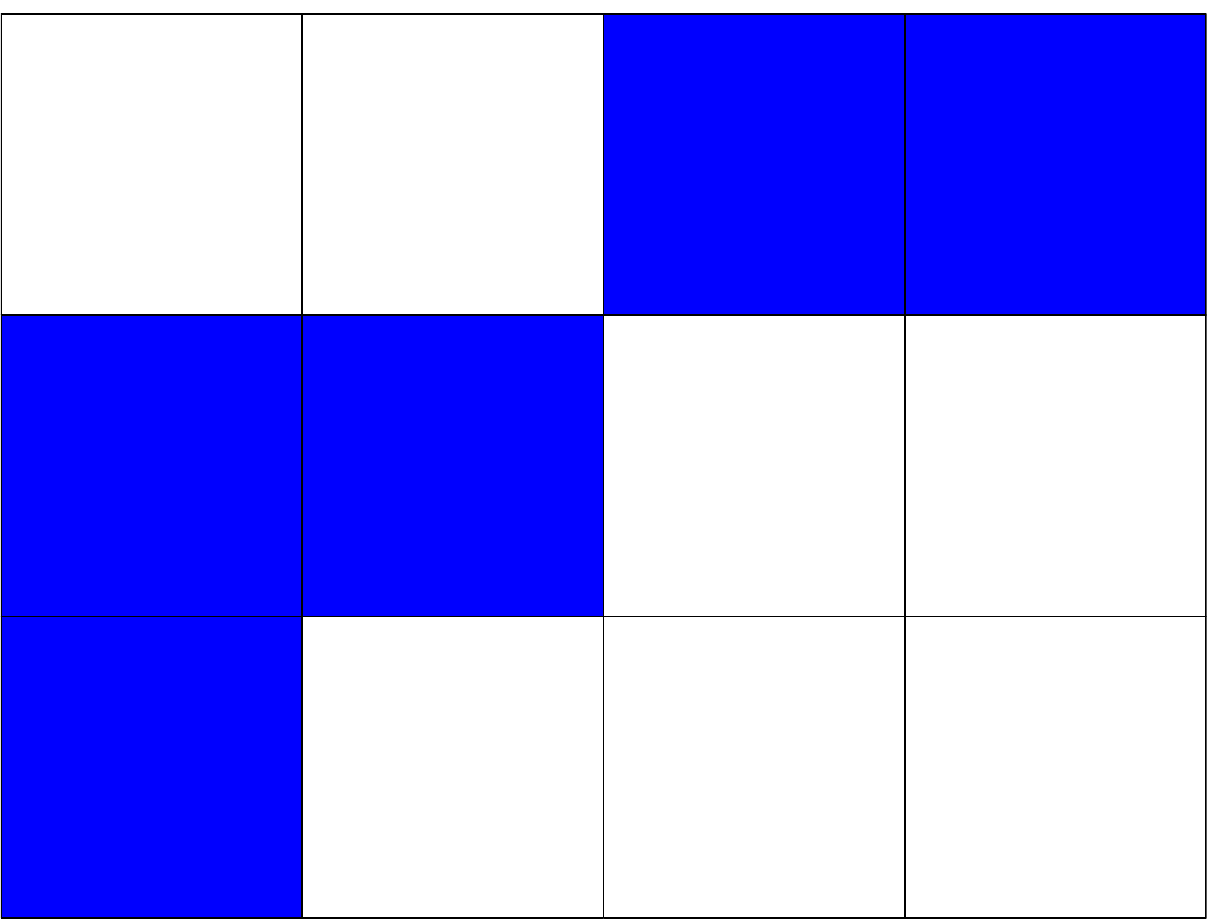} & \includegraphics[width=0.160000\linewidth]{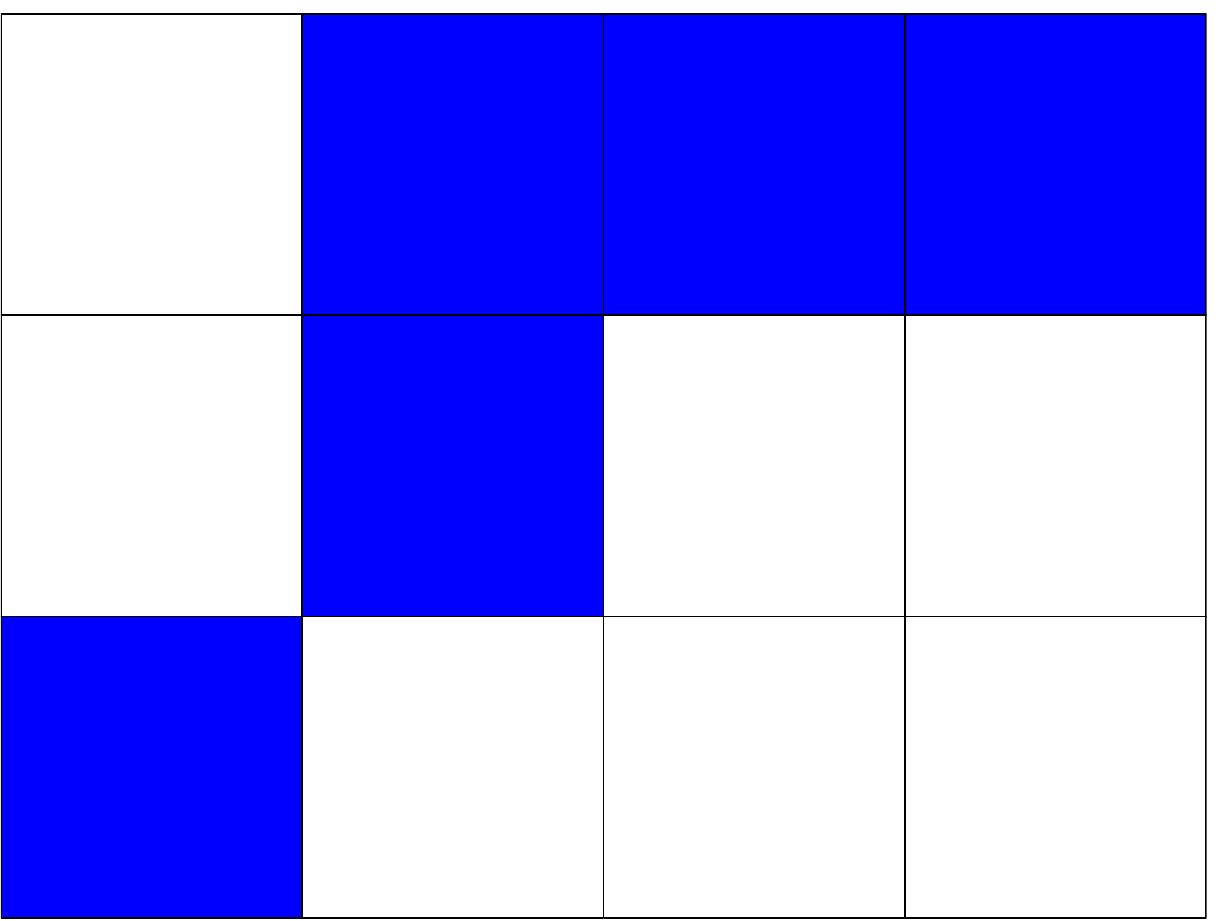} & \includegraphics[width=0.160000\linewidth]{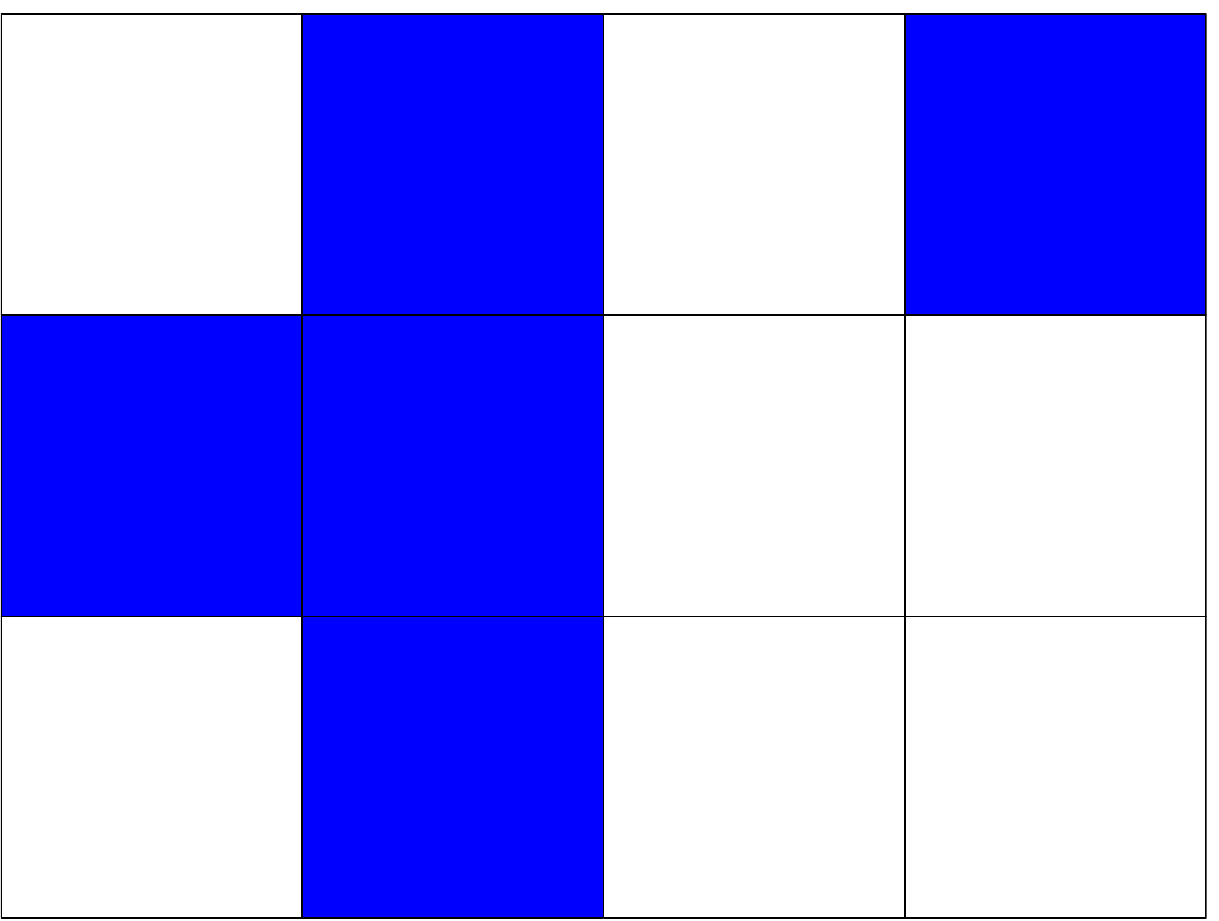} & \includegraphics[width=0.160000\linewidth]{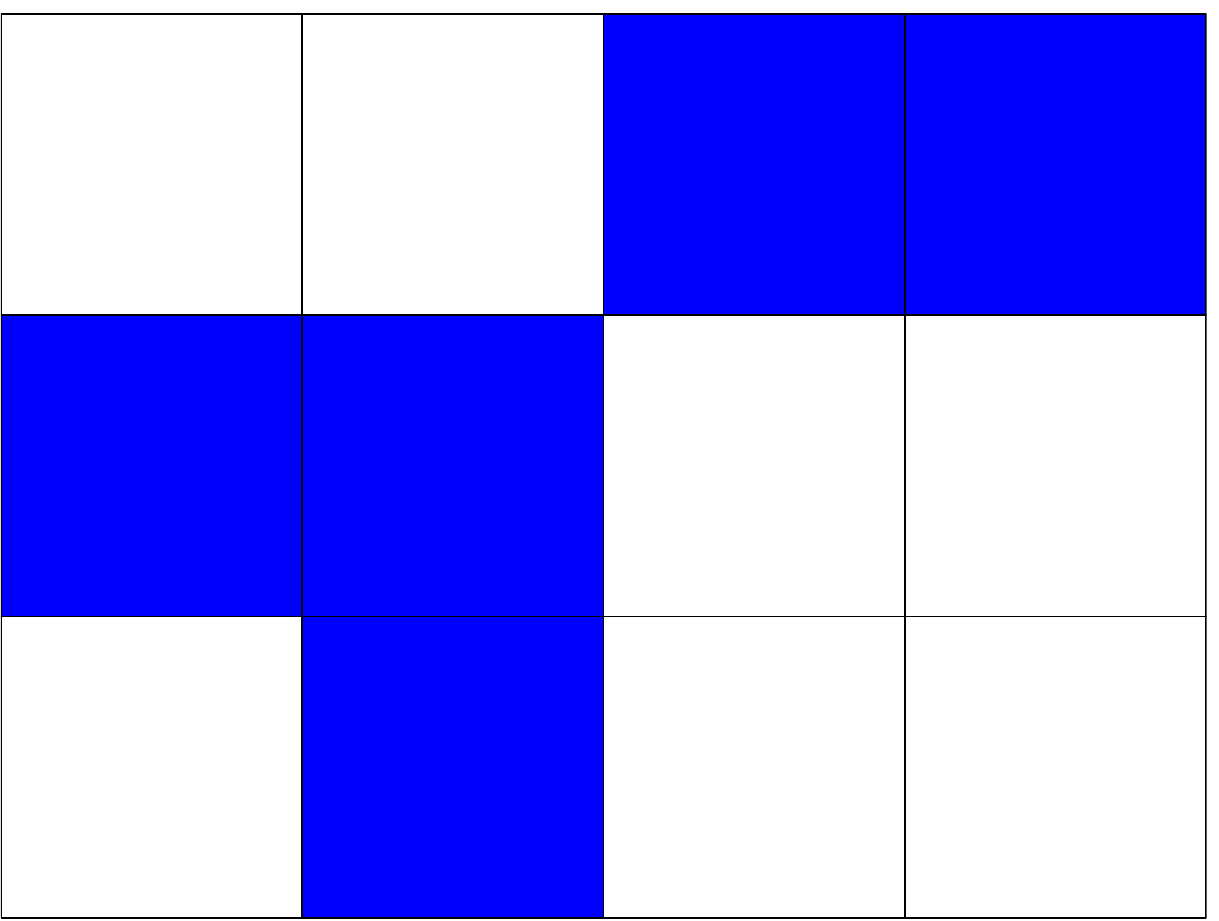} & \includegraphics[width=0.160000\linewidth]{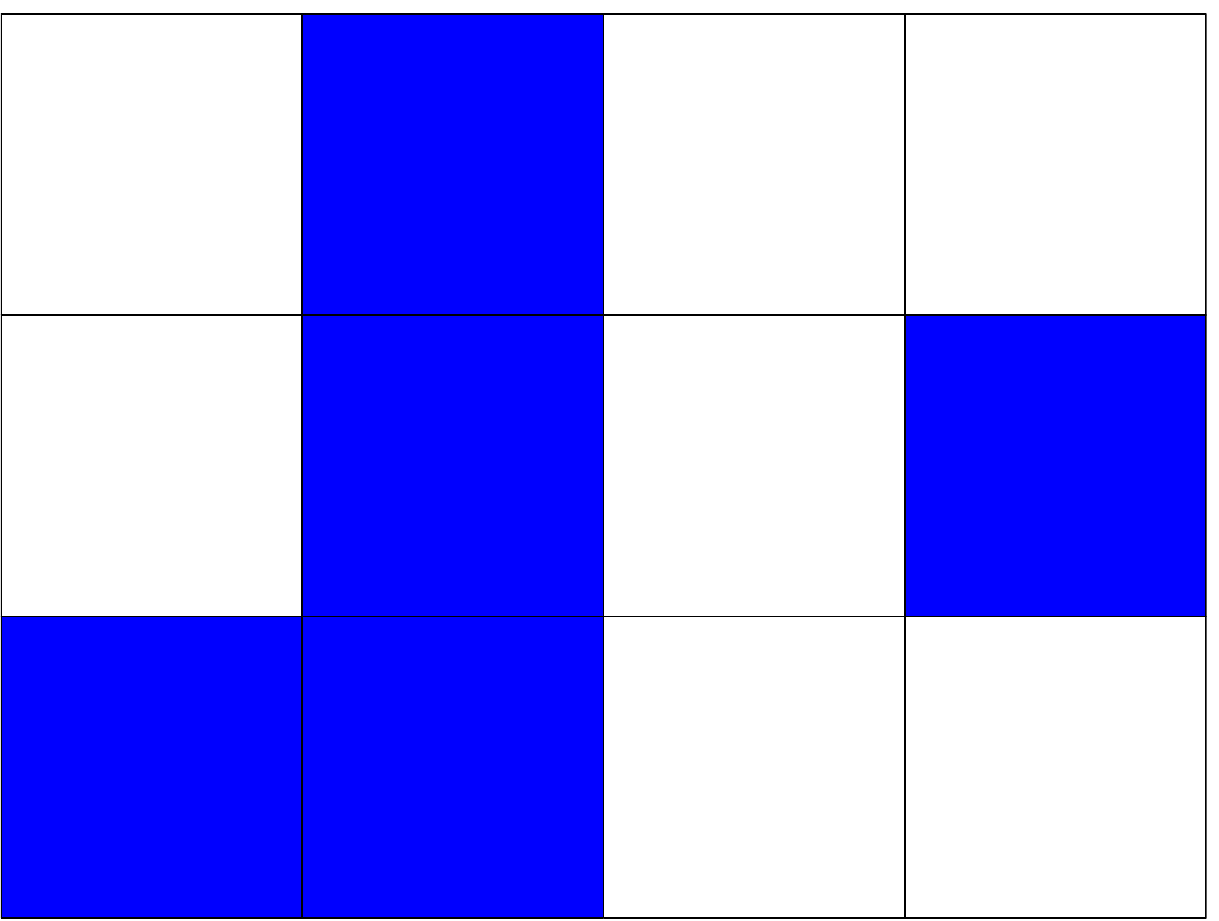} \\ \hline
& & & & \\ 
\includegraphics[width=0.160000\linewidth]{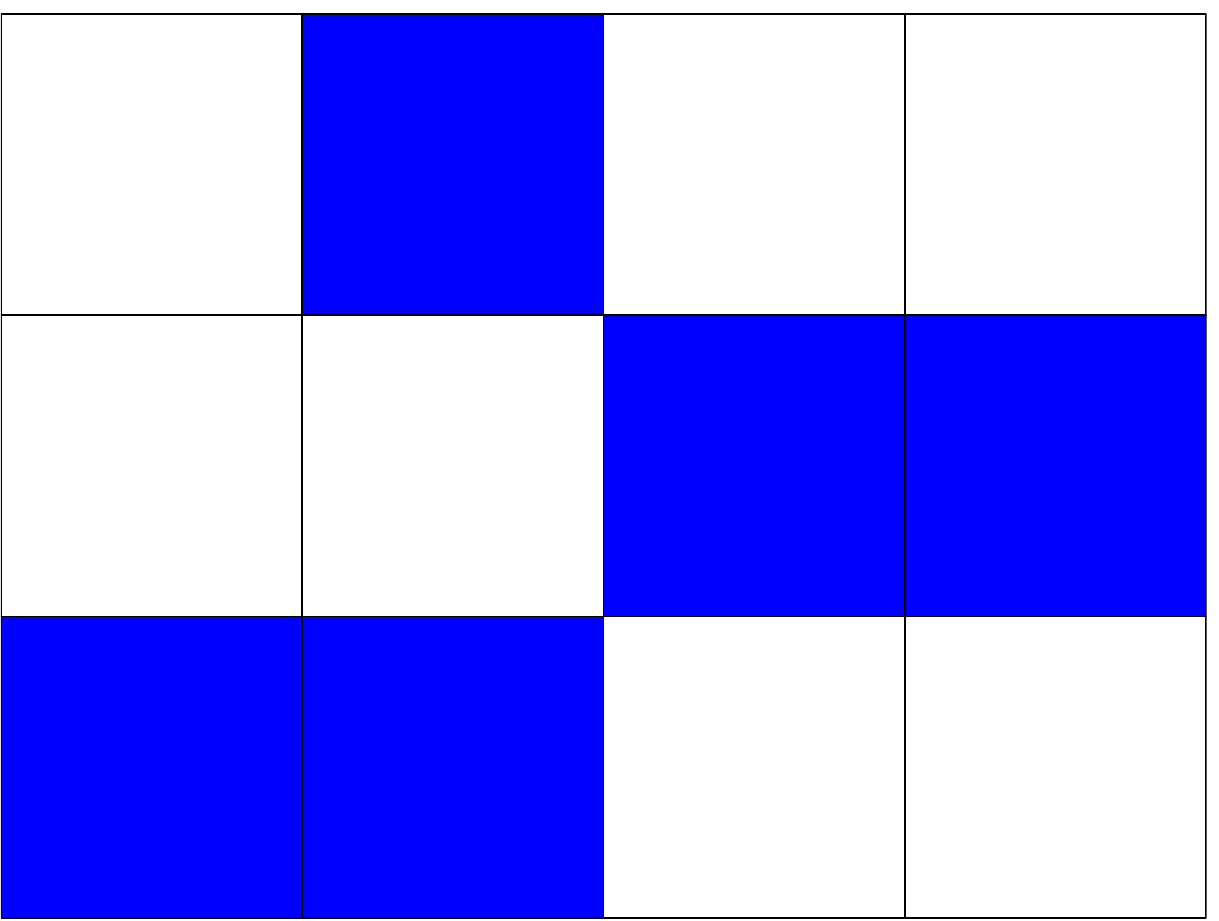} & \includegraphics[width=0.160000\linewidth]{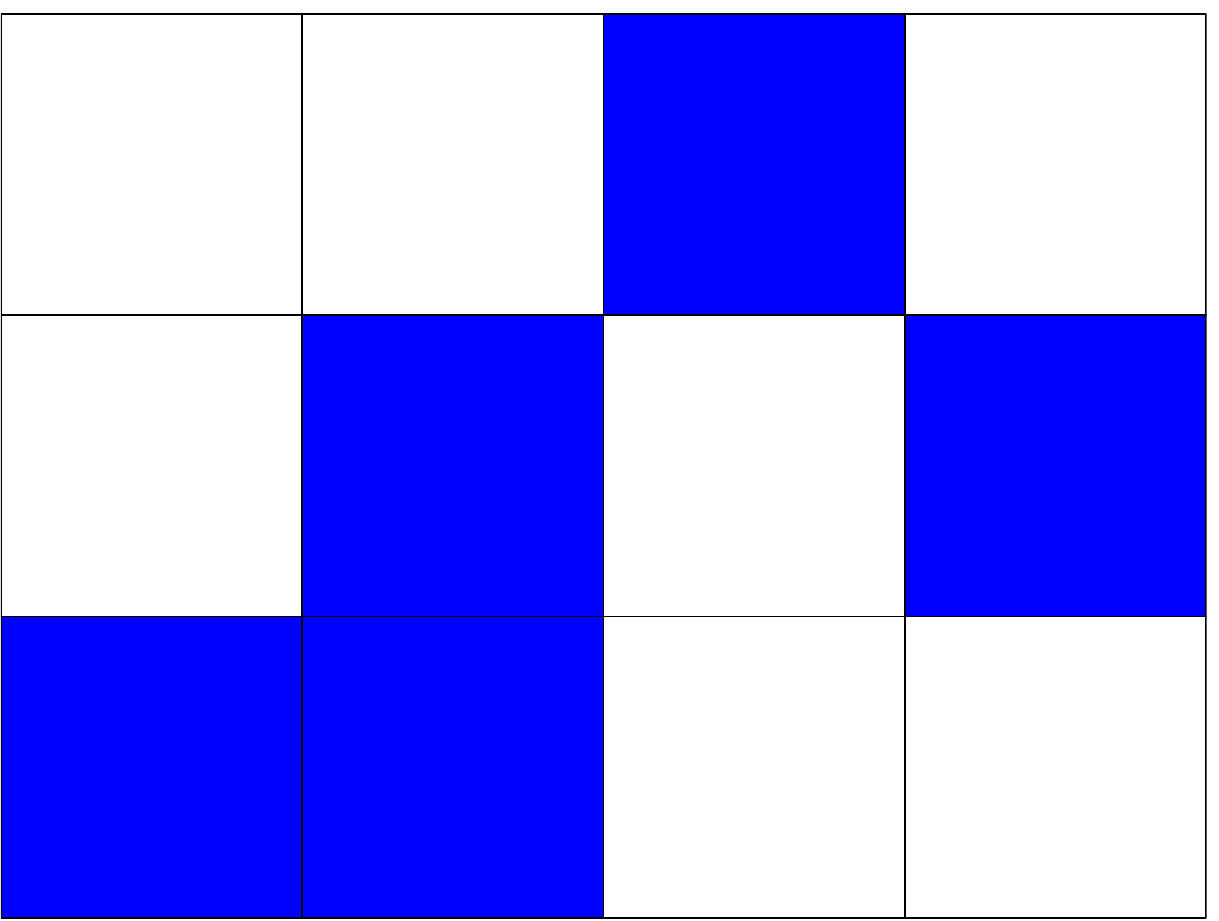} & \includegraphics[width=0.160000\linewidth]{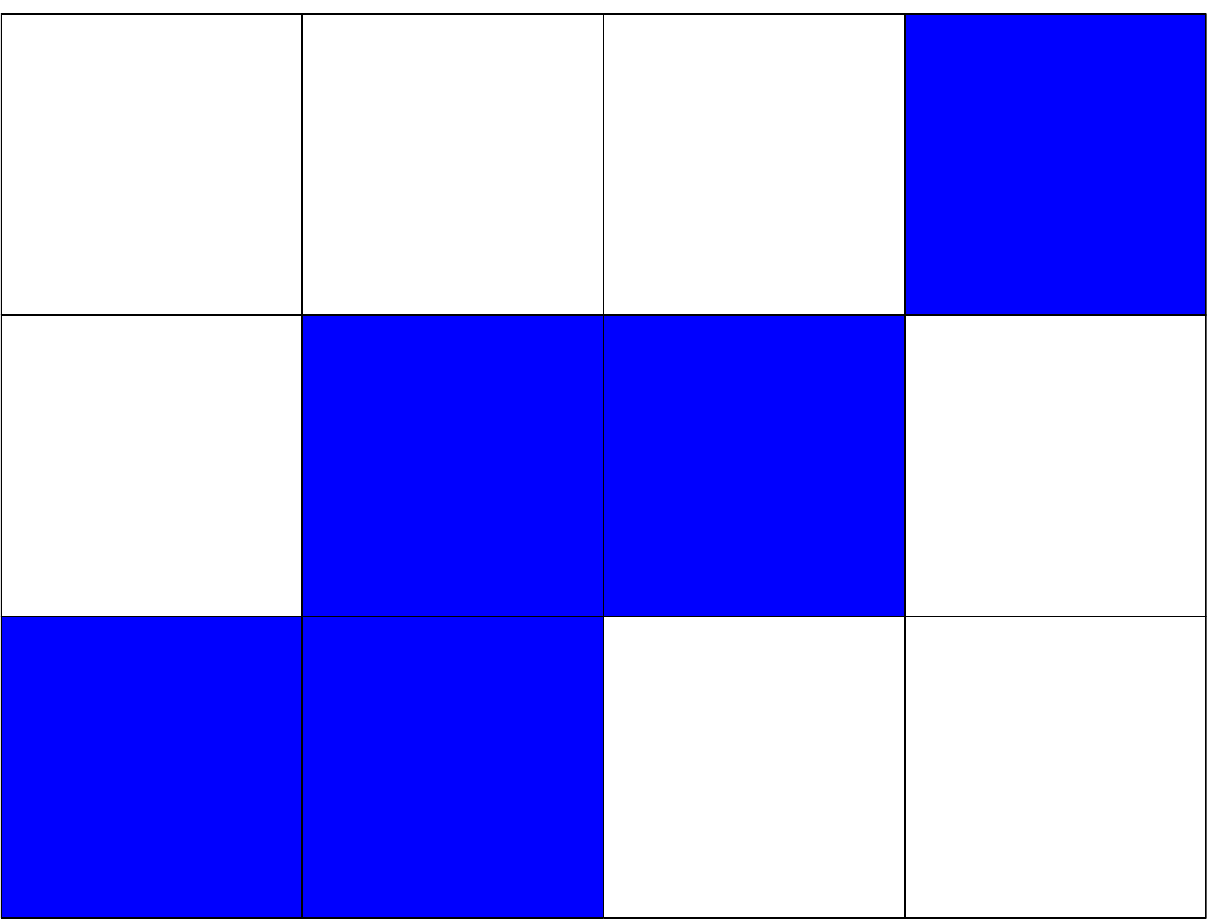} & \includegraphics[width=0.160000\linewidth]{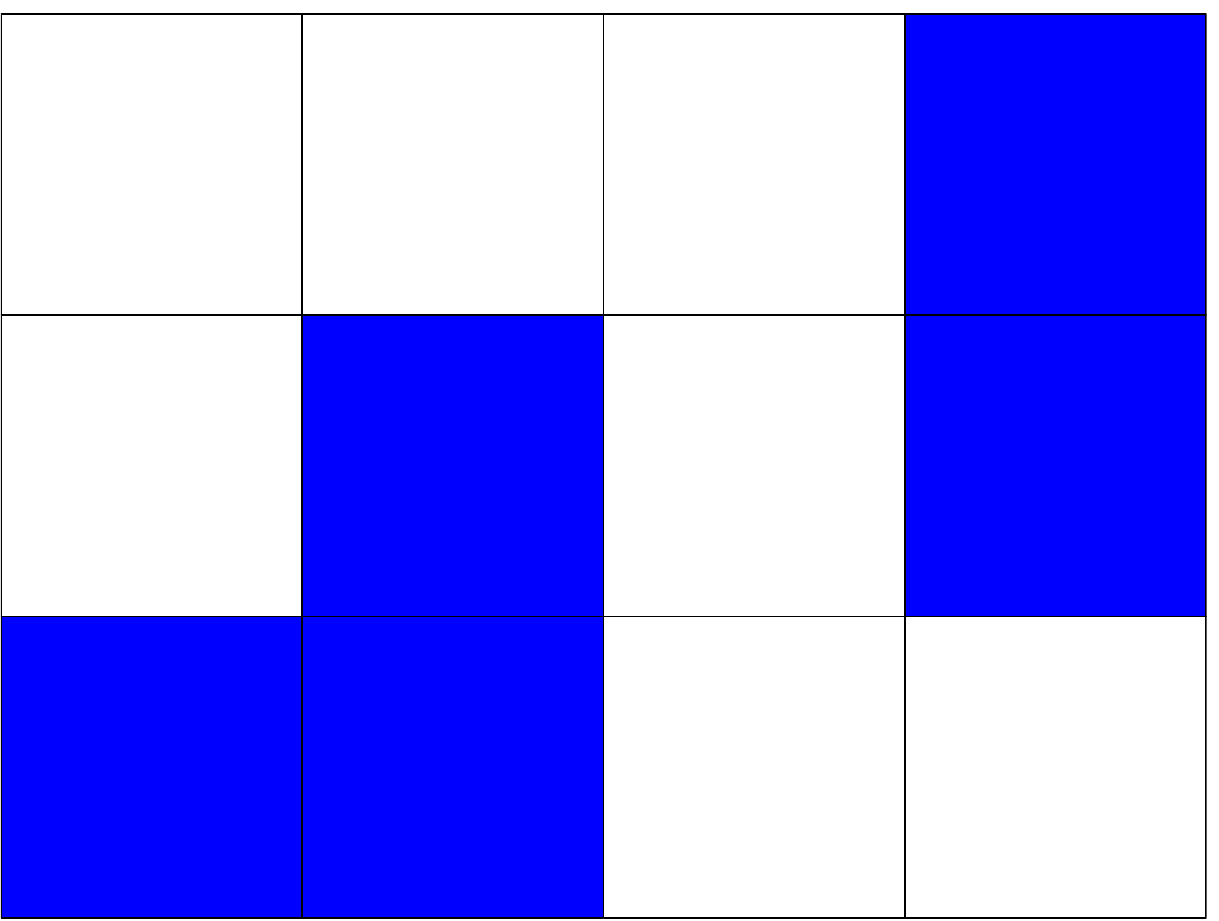} & \includegraphics[width=0.160000\linewidth]{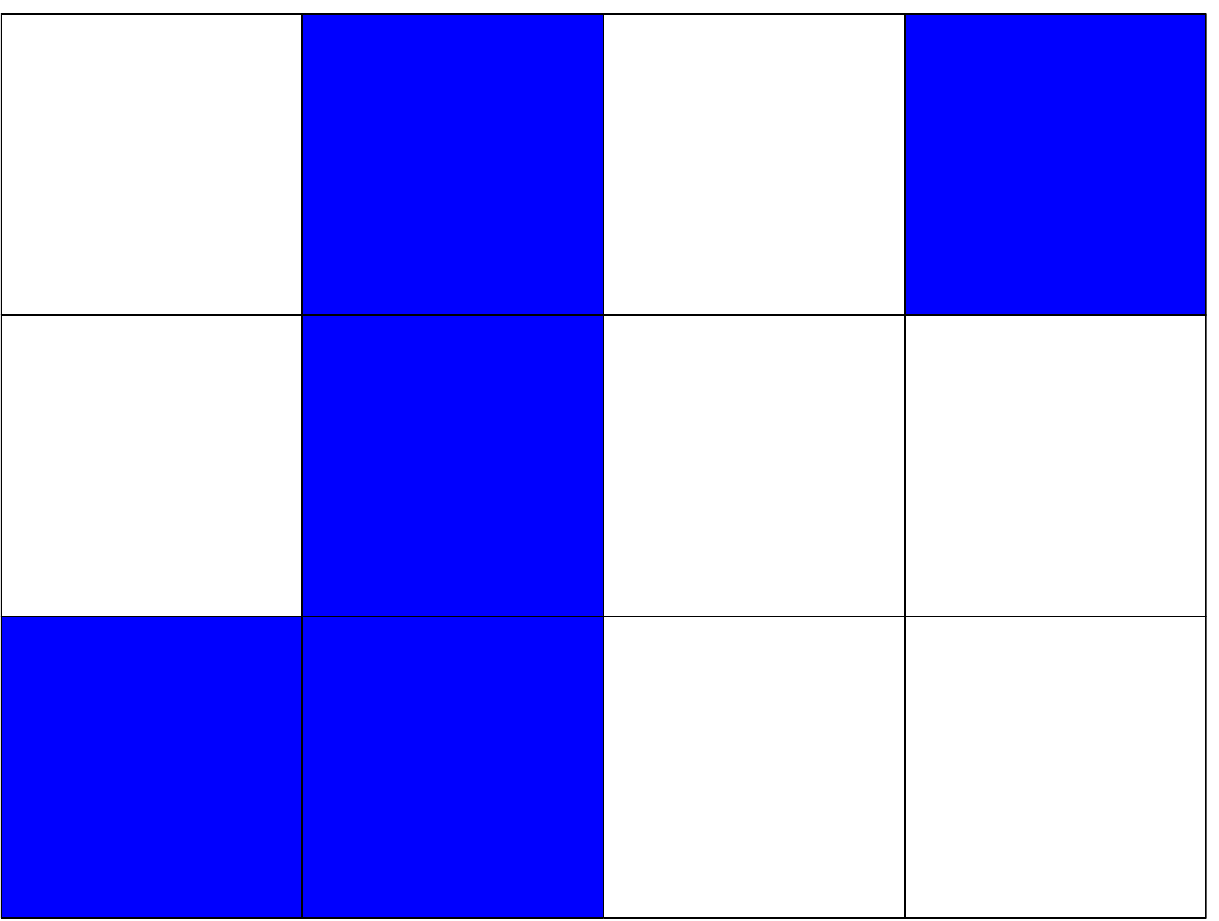} \\ \hline
& & & & \\ 
\includegraphics[width=0.160000\linewidth]{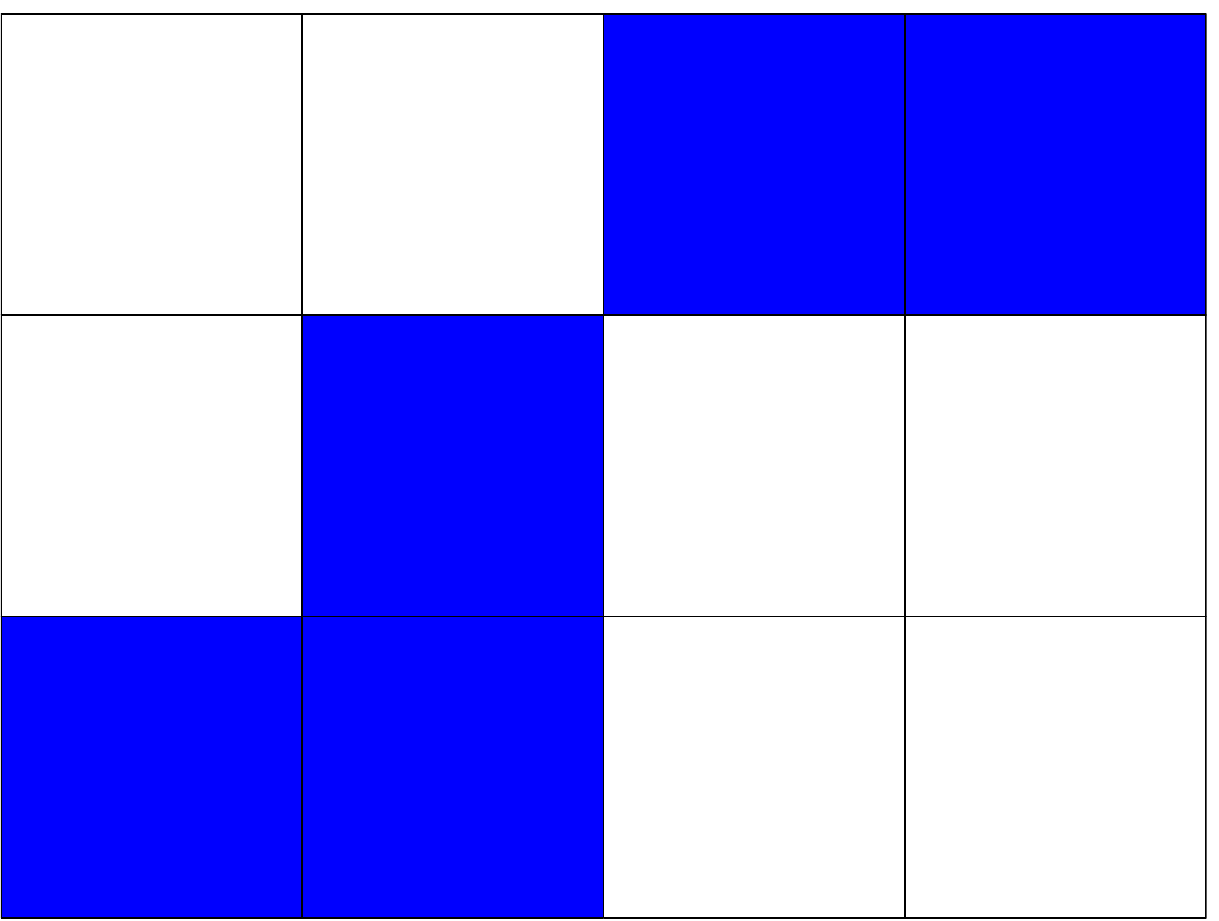} & \includegraphics[width=0.120000\linewidth]{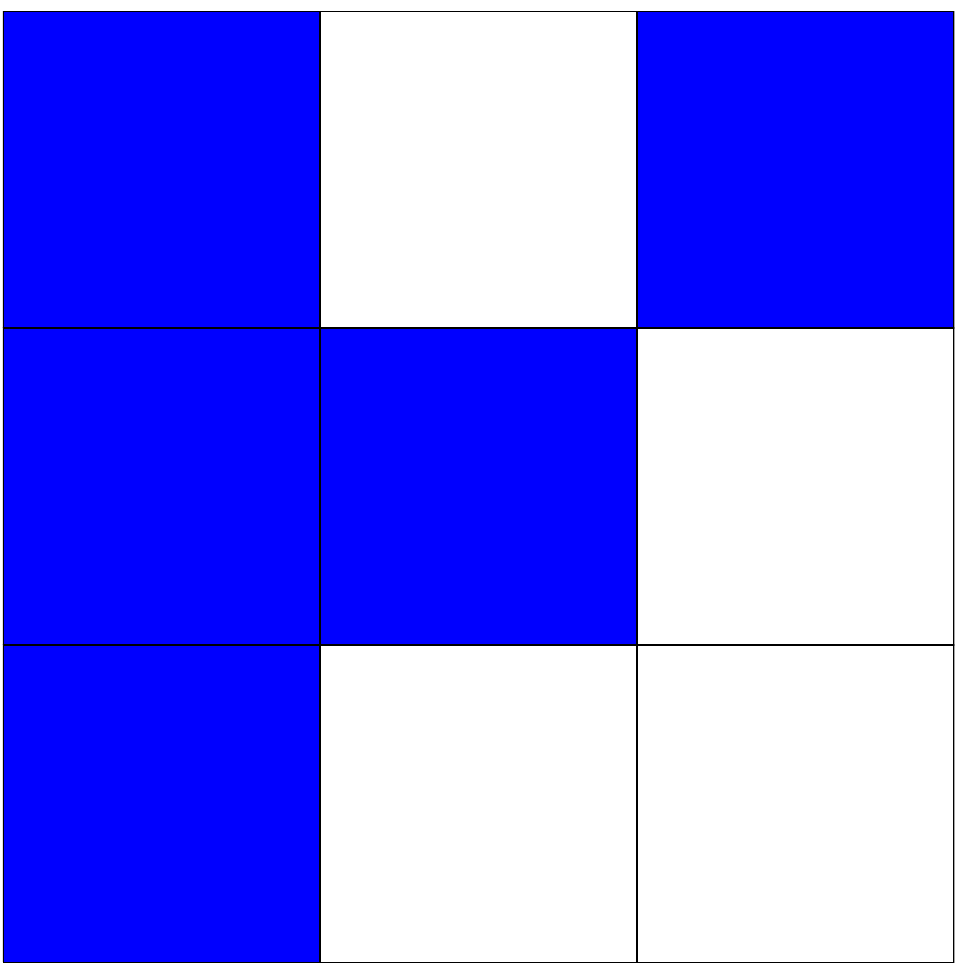} & \includegraphics[width=0.120000\linewidth]{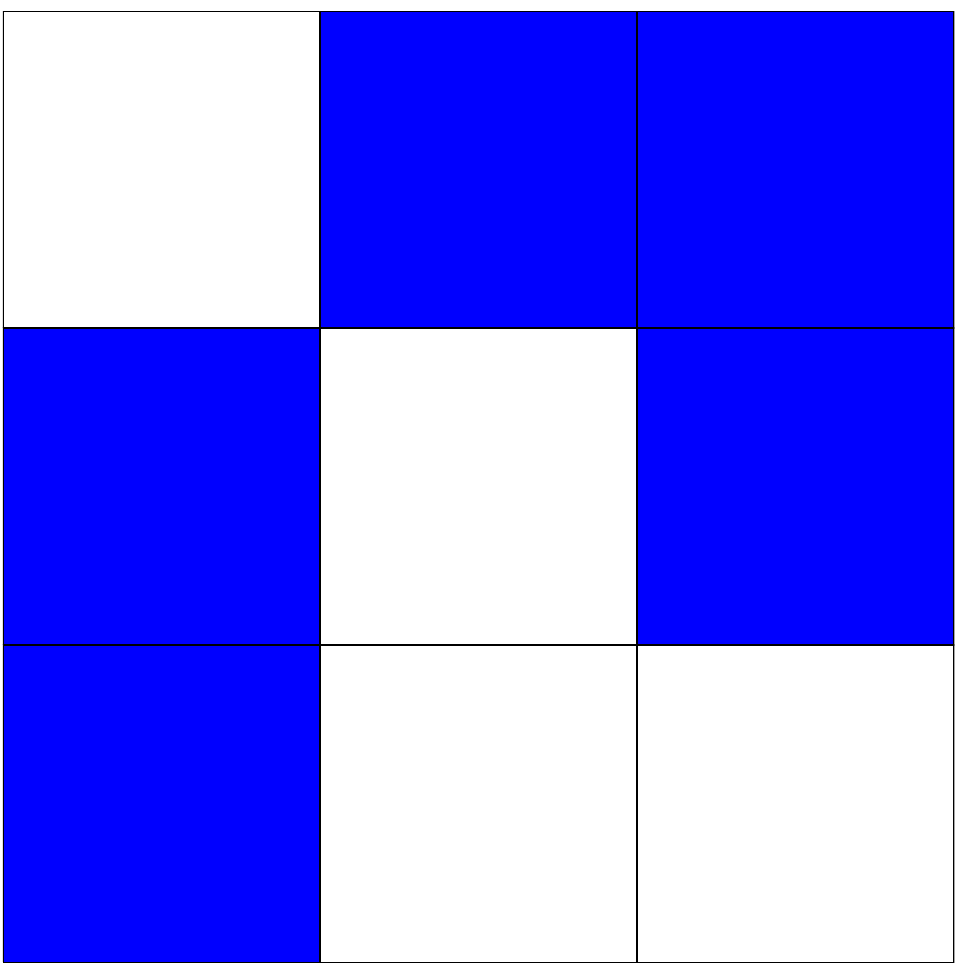} & \includegraphics[width=0.120000\linewidth]{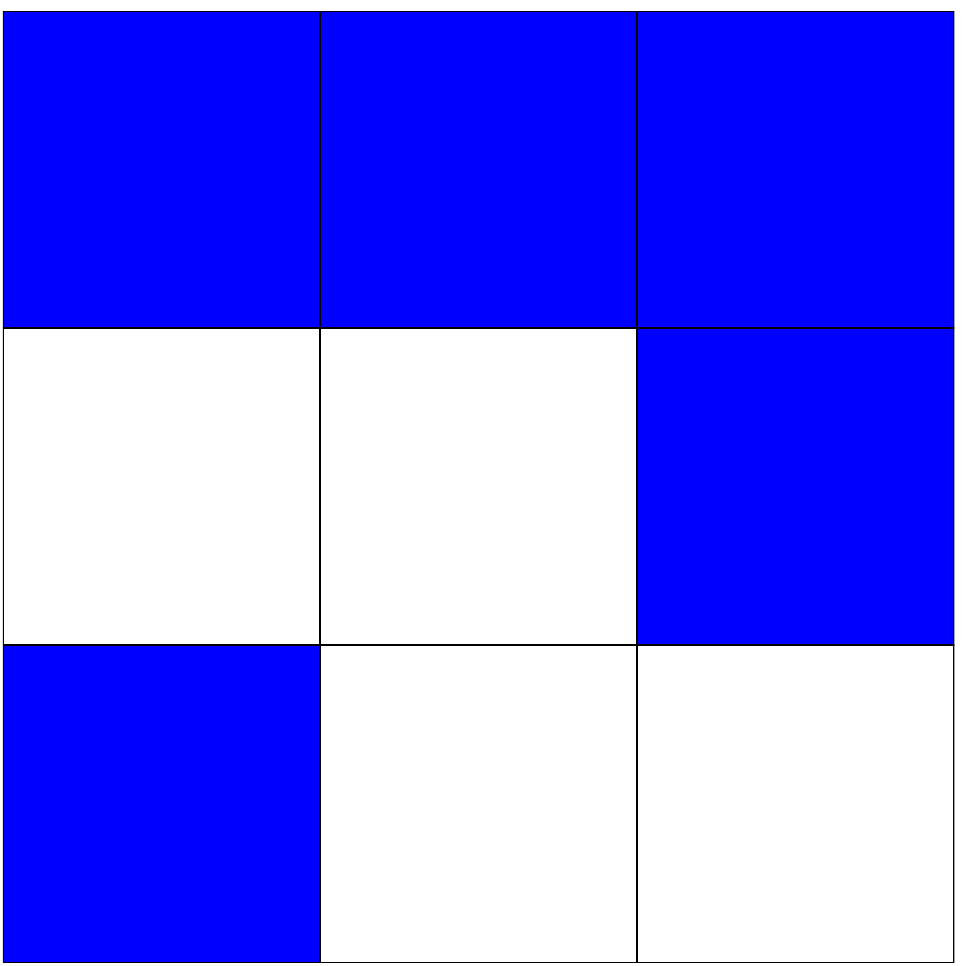} & \includegraphics[width=0.120000\linewidth]{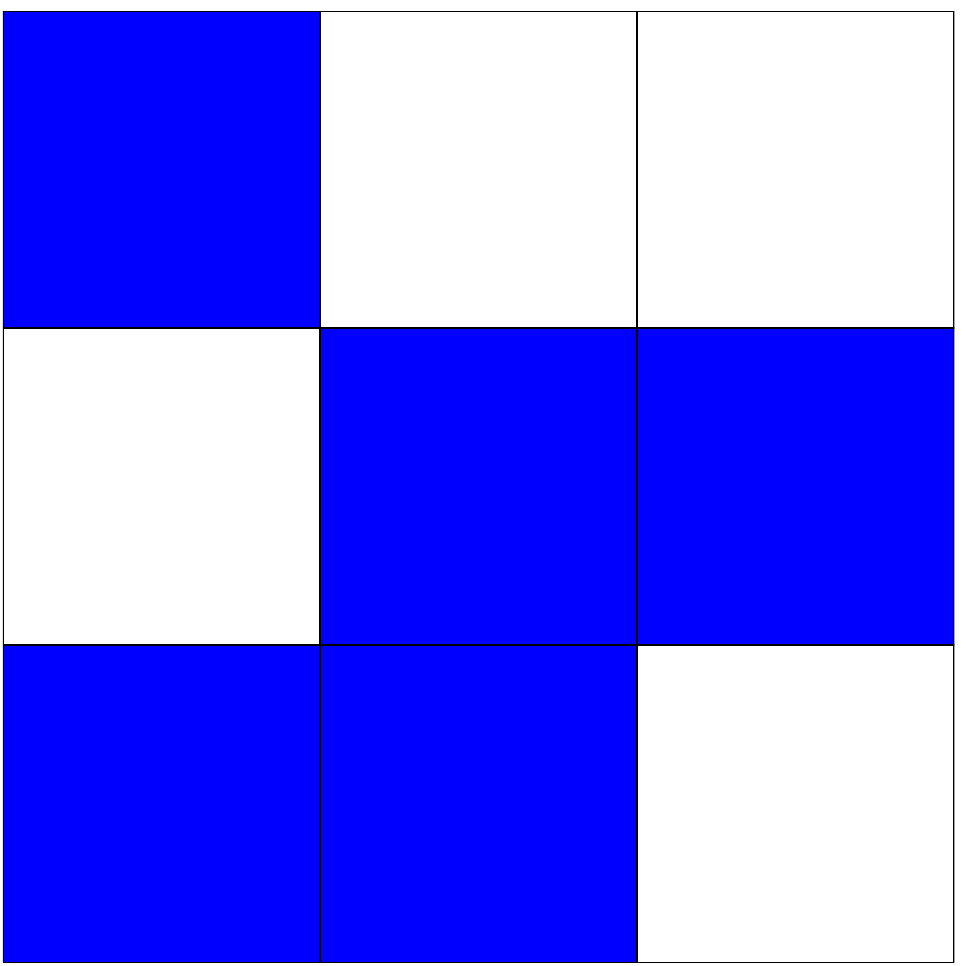} \\ \hline
& & & & \\ 
\includegraphics[width=0.120000\linewidth]{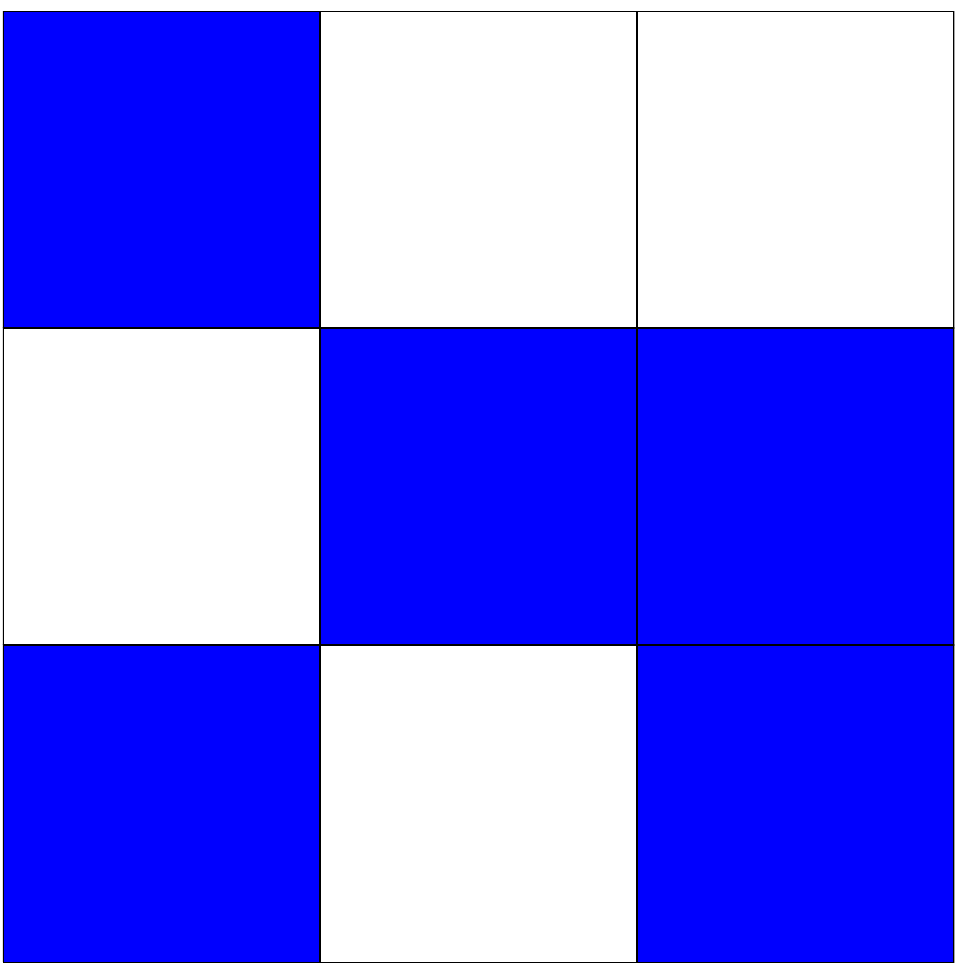} & \includegraphics[width=0.120000\linewidth]{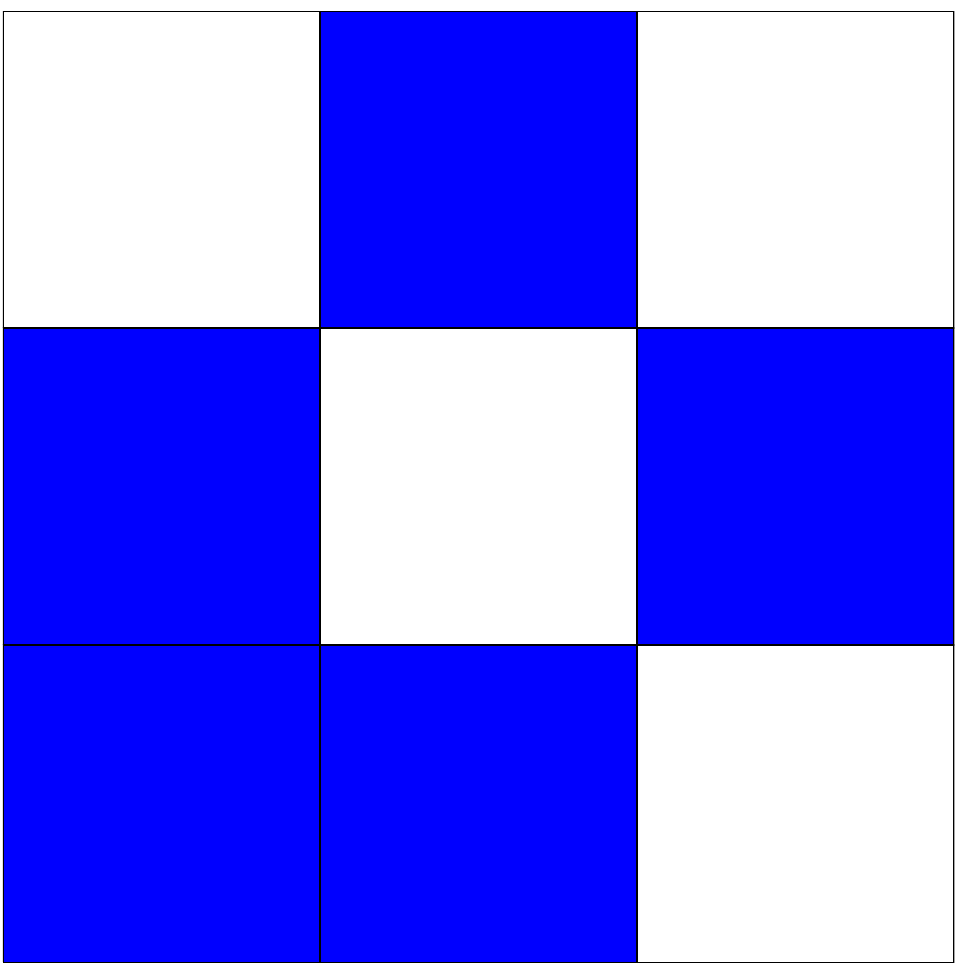} & \includegraphics[width=0.120000\linewidth]{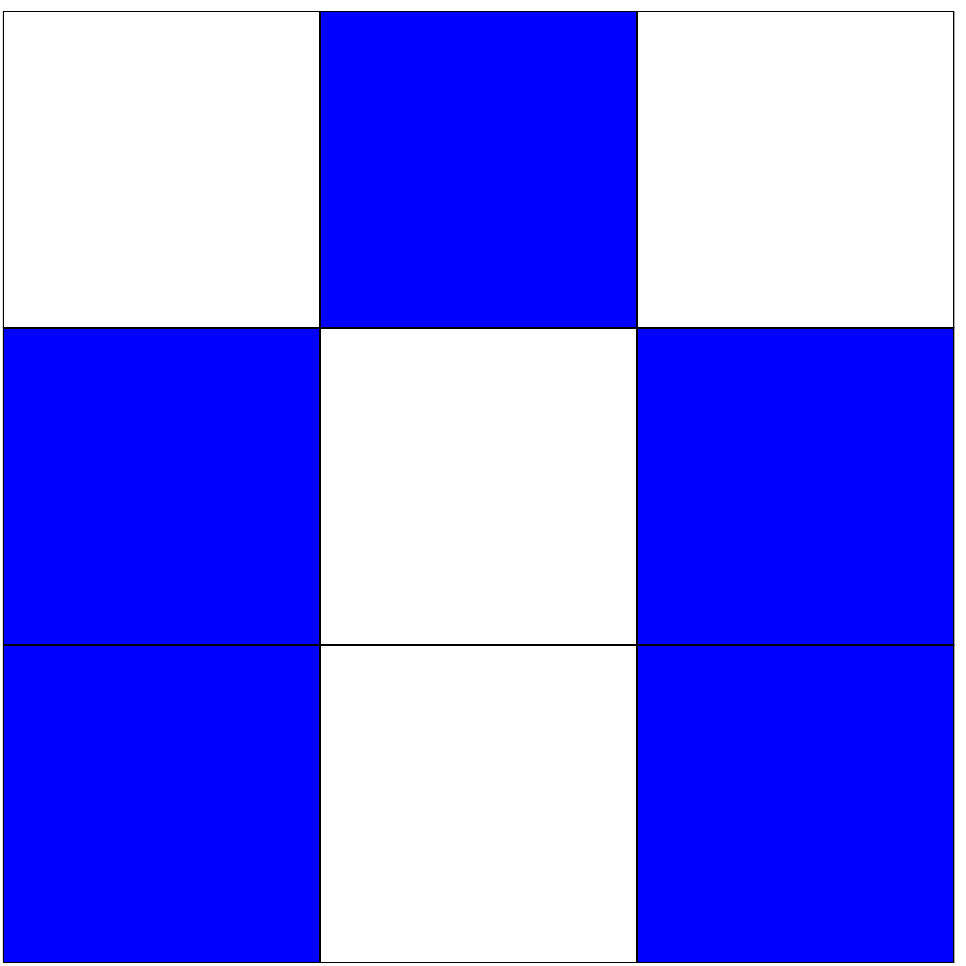} & \includegraphics[width=0.160000\linewidth]{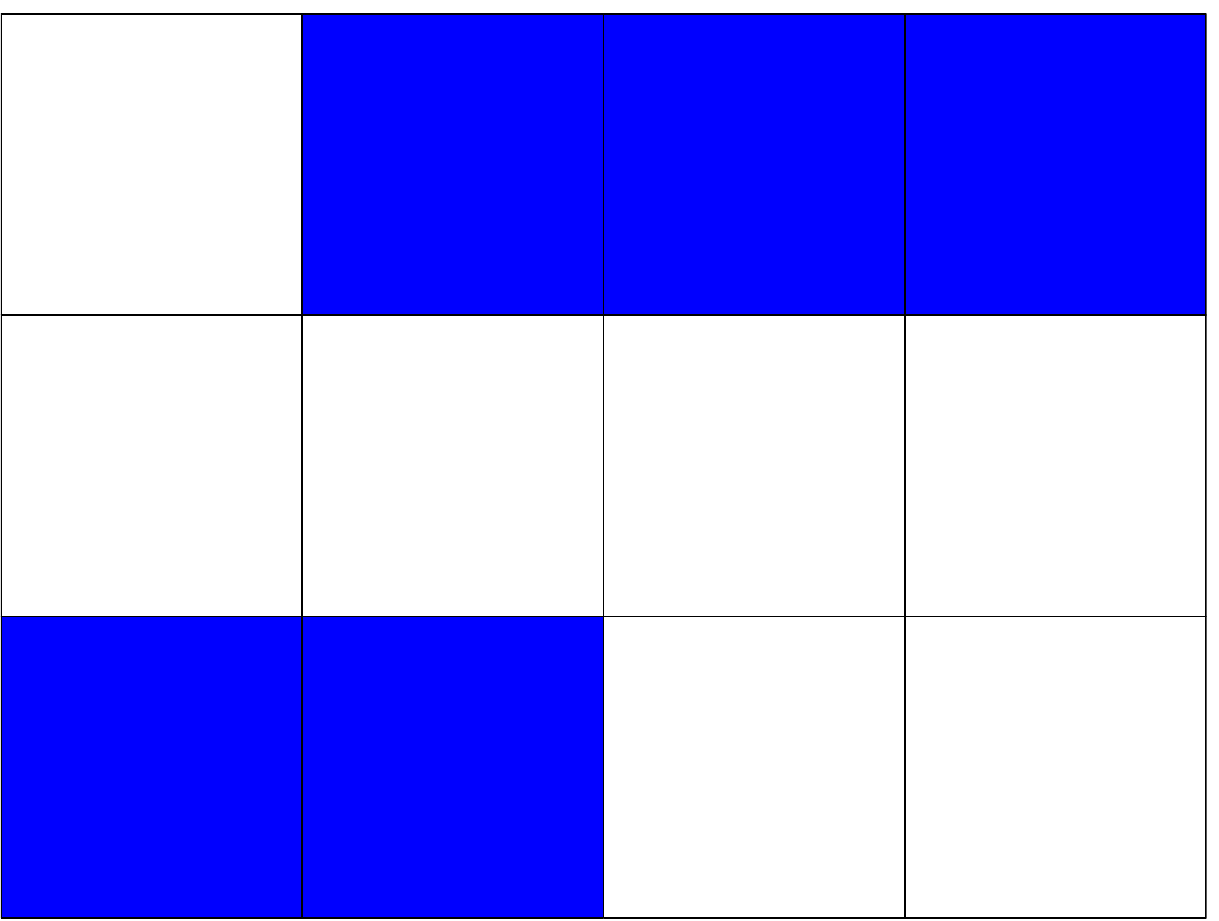} & \\ \hline
\end{tabular}
\caption{Impossible pieces for n=5 and k=1.}
\label{tab:impossible-51}
\end{table}

%%%%%%%%%%%%%%%%%%%%%%%%%%%%%%%%%%%%%%%%%%%%%%%%%%%%%
%unknown pieces

\pagebreak
\subsection{Unknown polyominoes}

We were not able to determine the status of a small number of polyominoes (Table~\ref{tab:unknown}).
The table lists bounds on the minimum and maximum sides of the rectangle, which
should help future solvers. We obtained the maximum results in column 3 by calling Algorithm~\ref{alg:solver} and
incrementally increasing the dimensions of the grid. For the minimum results we used a different algorithm,
but we do not describe it here.

%NOTE: removed & & \\, because it was puting this on a separate page and looked ugly
\begin{table}[h]
\centering
\begin{tabular}{|c|c|c|}
\hline
Piece & min side cannot be $\leq K$ & max side cannot be $\leq K$ \\ \hline
  & & \\ 
\includegraphics[width=0.200000\linewidth]{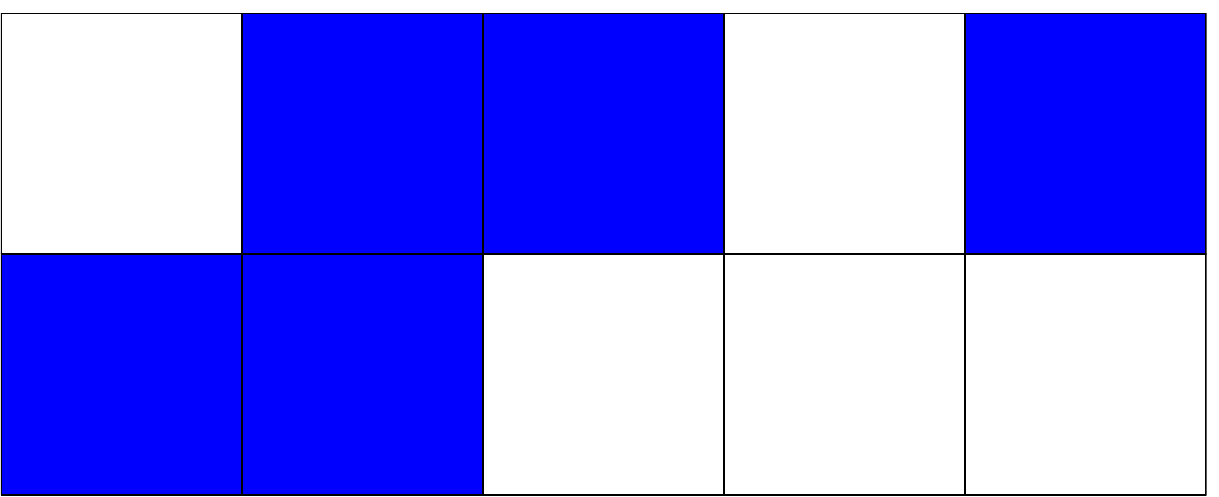} & 18 & 26 \\ \hline
  & & \\ 
\includegraphics[width=0.160000\linewidth]{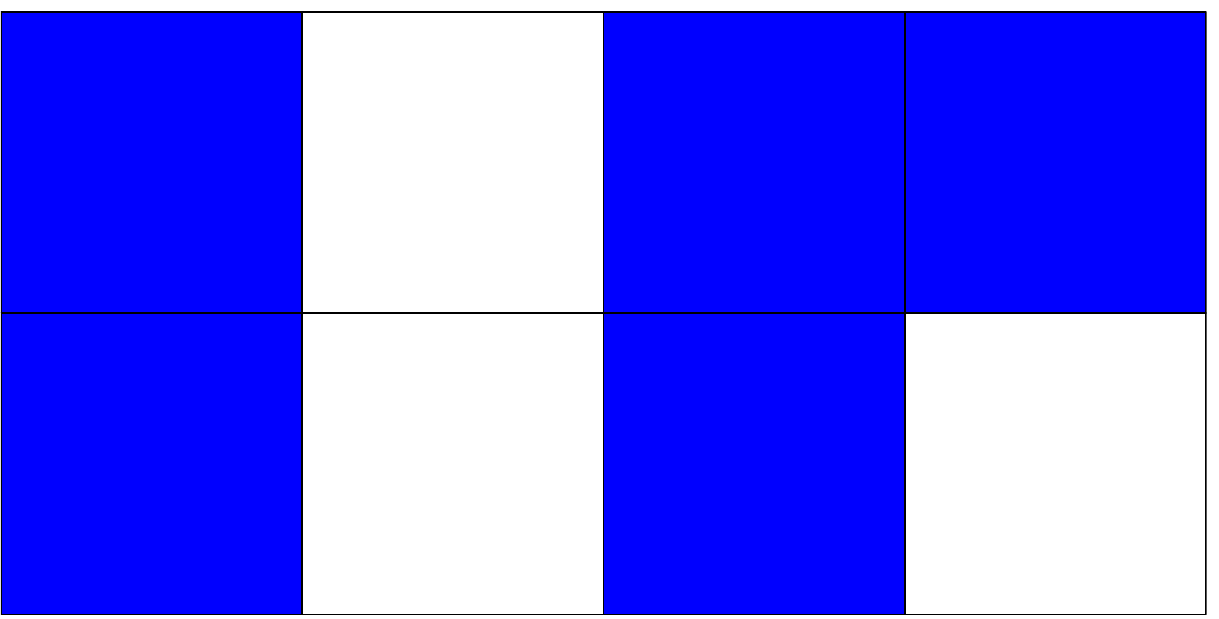} & 17 & 20 \\ \hline
  & & \\ 
\includegraphics[width=0.160000\linewidth]{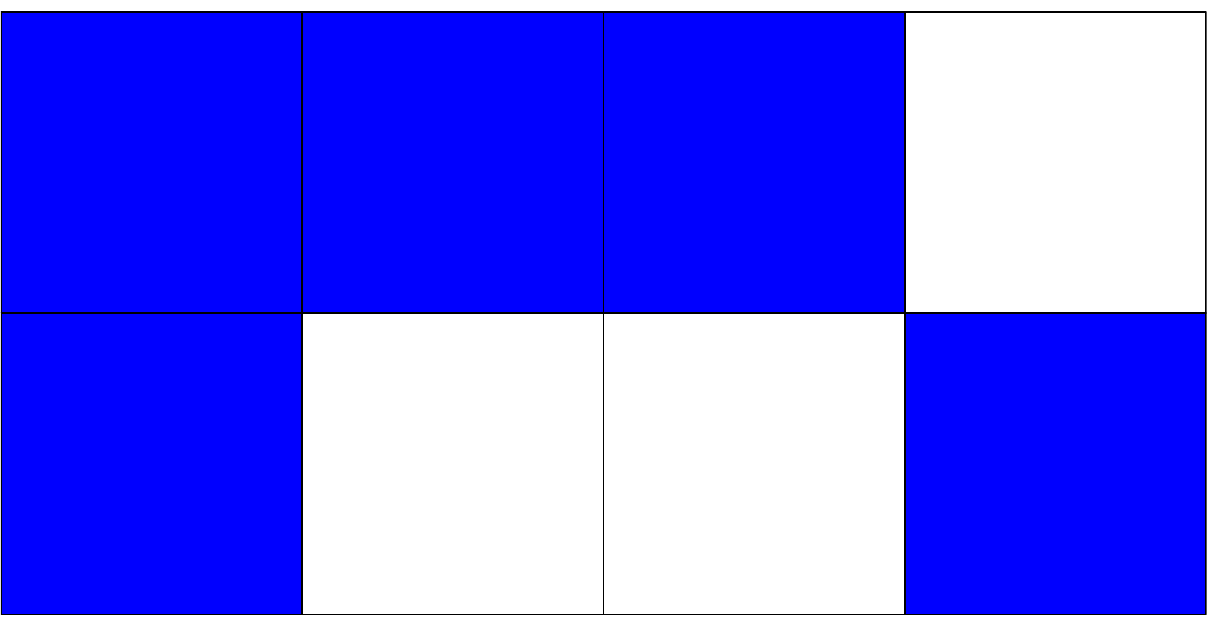} & 20 & 28 \\ \hline
  & & \\ 
\includegraphics[width=0.160000\linewidth]{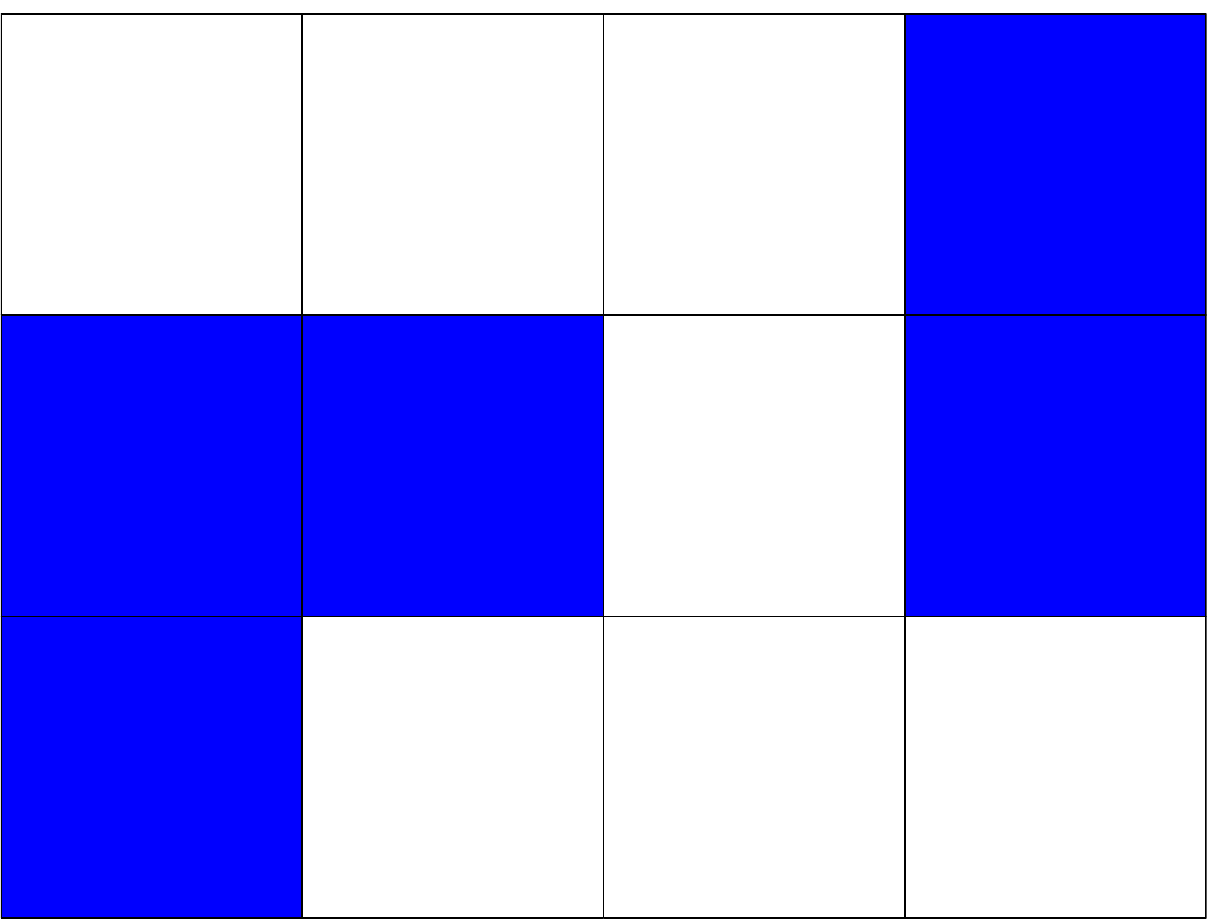} & 20 & 28 \\ \hline
  & & \\ 
\includegraphics[width=0.160000\linewidth]{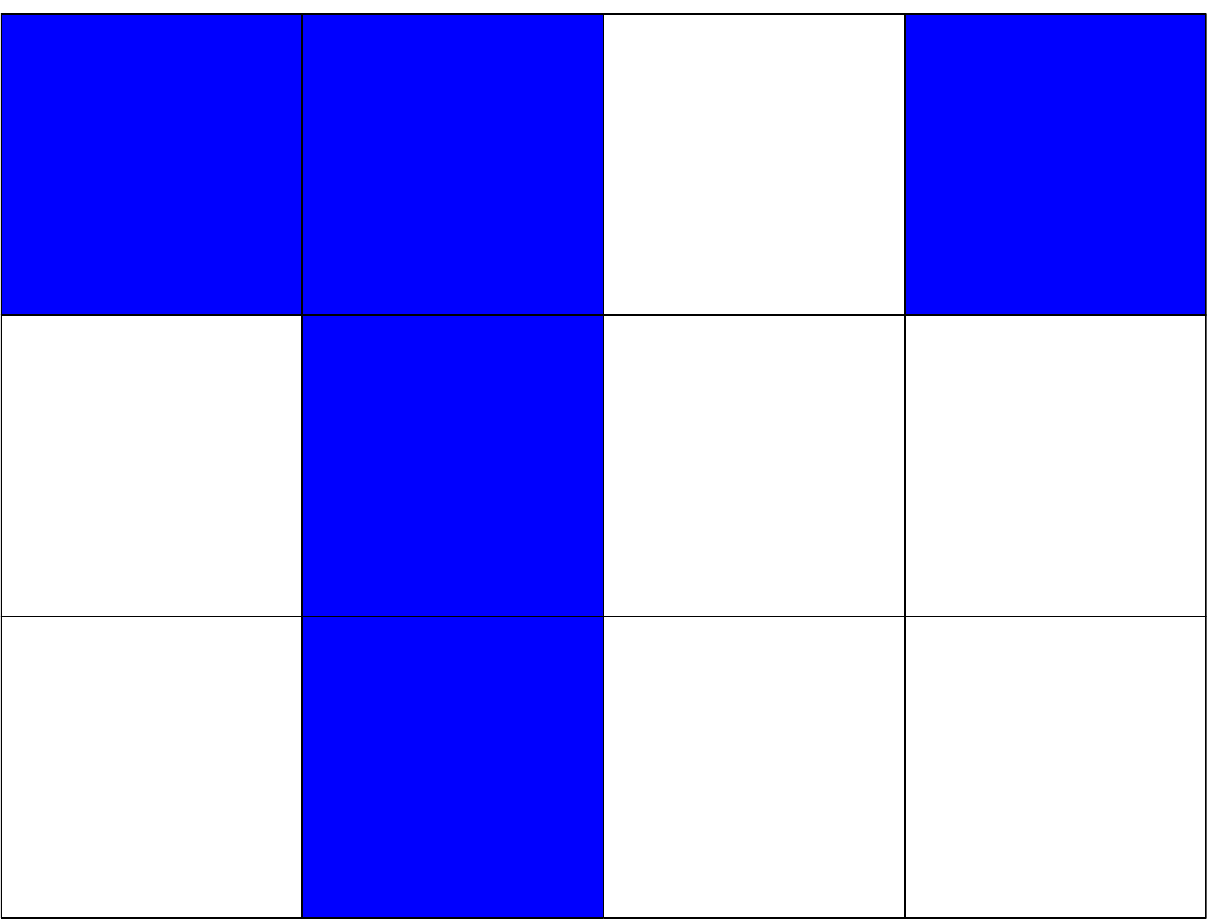} & 20 & 24 \\ \hline
  & & \\ 
\includegraphics[width=0.160000\linewidth]{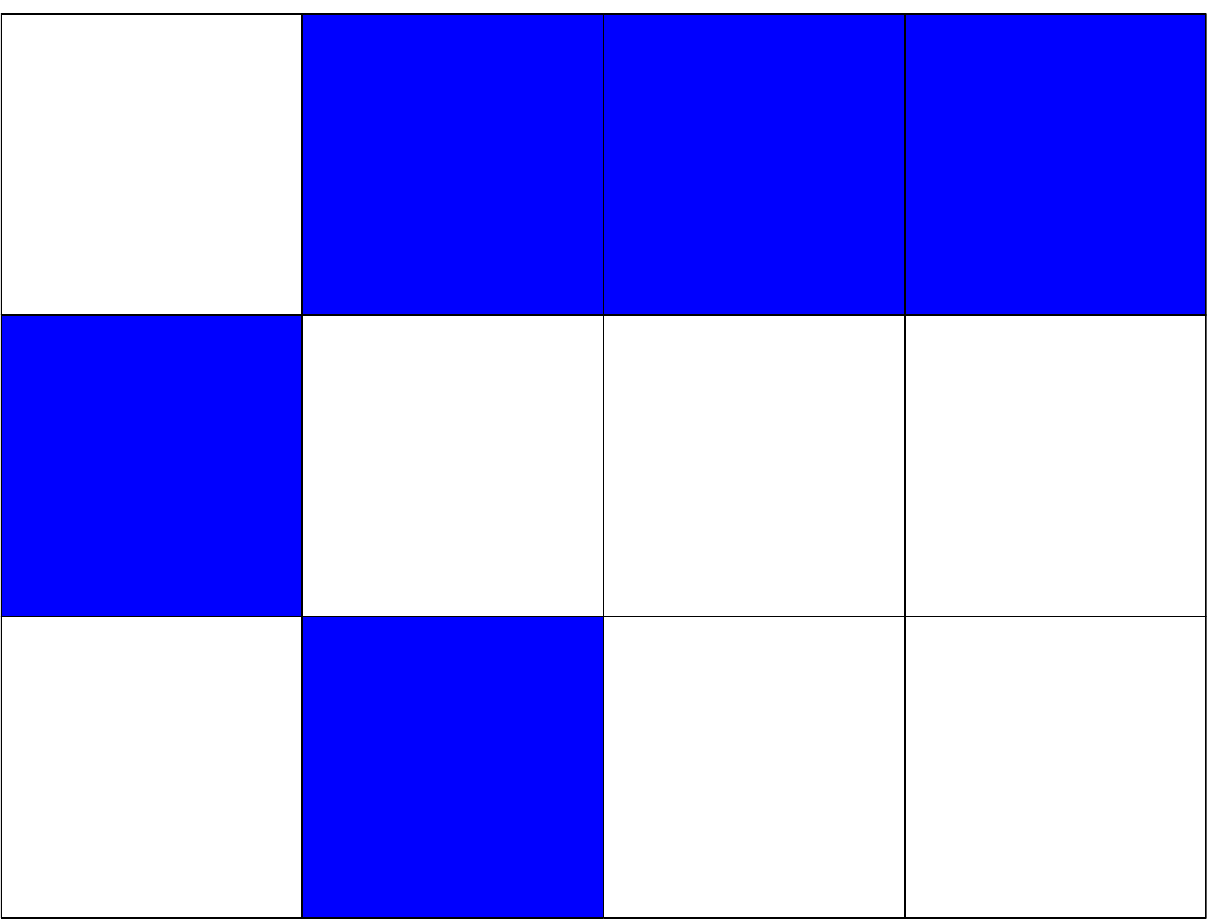} & 19 & 26 \\ \hline
\end{tabular}
\caption{Results for unknown pieces.}
\label{tab:unknown}
\end{table}

\section{Future work}

Future work will focus on creating faster and more powerful solvers for this problem. Our efforts will concentrate on
determining whether the unknown polyominoes listed in Table~\ref{tab:unknown} are rectifiable. We would also like to obtain
results for a larger set of polyominoes, such as those listed in Table~\ref{tab:future}.

\begin{table}[!htpb]
\centering
\begin{tabular}{|c|c|c|}
\hline
n & k & Total \\ \hline
3 & 3 & 17 \\ \hline
3 & 4 & 32 \\ \hline
4 & 2 & 60 \\ \hline
4 & 3 & 151 \\ \hline
5 & 2 & 302 \\ \hline
\end{tabular}
\caption{Unexplored classes of polyominoes.}
\label{tab:future}
\end{table}

There are many more tiling problems that can be explored with holey polyominoes. We can investigate tiling of the plane
or tiling of larger copies of polyominoes. We can also attempt tiling with sets of polyominoes. Finally, we can look at the
\emph{compatibility problem} - given a pair of polyominoes find a figure that can be tiled with each one.

\section{Conclusions}
We introduced a new type of polyominoes, affectionately called the holey polyominoes.
These polyominoes differ from regular ones by containing some invisible squares. We showed how these
polyominoes can tile rectangles. We analysed the rectifiability of all such polyominoes up to 5 visible squares.
We were able to determine the status (rectifiable or not) of all but 6 polyominoes.

\section{Acknowledgements}
We would like to thank Greg O'Keefe for his valuable suggestions. We are grateful to Tom Sirgedas for showing that one of our unknown
polyominoes is in fact unrectifiable.

%\clearpage    %makes sure that references go after everything else!
%other styles: https://en.wikibooks.org/wiki/LaTeX/Bibliography_Management#Bibliography_styles
\vspace{1cm}
\bibliographystyle{abbrv}
\bibliography{shortbib}

\end{document}